\def\ps@pprintTitle{%
 \let\@oddhead\@empty
 \let\@evenhead\@empty
 \def\@oddfoot{\footnotesize
   This paper is published in:
   \textit{Computer Physics Communications},
   Vol.~305 (2024), Article~109351.\hfill}%
 \let\@evenfoot\@oddfoot}
\algrenewcommand\algorithmicrequire{\textbf{Input:}}
\algrenewcommand\algorithmicensure{\textbf{Output:}}
\journal{Computer Physics Communications}
\newlength{\mm}
\newlength{\cm}
\newcommand{\etal}{\textit{et al.}}
\newcommand{\ie}{\textit{i.e.}}
\newcommand{\eg}{\textit{e.g.}}
\newcommand{\mbigskip}{\vspace{0.69mm}}
\providecommand\bcdot{\boldsymbol{\cdot}}
\newcommand{\ud}{\mathrm{d}}
\newcommand\mnewcommand[1]{%
\let#1\relax \newcommand#1 }
\newcommand{\timeintparam}{\kappa}
\newcommand{\SpMM}{\texttt{SpMM}}
\newcommand{\SpMV}{\texttt{SpMV}}
\newtheorem{theorem}{Theorem}
\newtheorem{remark}{Remark}
\mnewcommand{\Nx}{N_{\x}}
\mnewcommand{\Ny}{N_{\y}}
\mnewcommand{\Nz}{N_{\z}}
\mnewcommand{\Nm}{N_m}
\mnewcommand{\Ns}{N_s}
\mnewcommand{\complconj}[1]{#1^{*}}
\mnewcommand{\mvbrack}[1]{\left[ #1 \right]}
\mnewcommand{\step}{\Delta}
\mnewcommand{\dt}{\step t}
\mnewcommand{\traspose}{^{T}}
\mnewcommand{\avgtime}[1]{\left< #1 \right>}
\mnewcommand{\avg}[1]{\overline{#1}}
\mnewcommand{\mcdot}{\bcdot}
\mnewcommand{\mnabla}{\nabla}
\mnewcommand{\real}{\mathbb{R}}
\mnewcommand{\complex}{\mathbb{C}}
\mnewcommand{\imag}{i\real}
\mnewcommand{\realpart}{\mathsf{Re}}
\mnewcommand{\imagpart}{\mathsf{Im}}
\mnewcommand{\foutrans}[1]{\hat{#1}}
\mnewcommand{\modetrans}[1]{\tilde{#1}}
\mnewcommand{\timeintparam}{\kappa}
\mnewcommand{\x}{x}
\mnewcommand{\y}{y}
\mnewcommand{\z}{z}
\newcommand{\xvec}{\isvector{x}}
\newcommand{\gammax}{\gamma_{\x}}
\newcommand{\Ra}{Ra}
\newcommand{\PR}{Pr}
\mnewcommand{\vels}{u_s}
\mnewcommand{\uvel}{u_1}
\mnewcommand{\vvel}{u_2}
\mnewcommand{\wvel}{u_3}
\mnewcommand{\facevel}{[\velh]_\nface}
\mnewcommand{\flux}{f}
\mnewcommand{\Dx}{\Delta x}
\mnewcommand{\Dy}{\Delta \y}
\mnewcommand{\isvector}[1]{\boldsymbol{#1}}
\mnewcommand{\istensor}[1]{\mathsf{#1}}
\mnewcommand{\va}{\isvector{a}}
\mnewcommand{\vb}{\isvector{b}}
\mnewcommand{\vc}{\isvector{c}}
\mnewcommand{\vd}{\isvector{d}}
\mnewcommand{\sca}{\phi}
\mnewcommand{\scafield}{\isvector{\sca}}
\mnewcommand{\scafieldc}{\scafield_{c}}
\mnewcommand{\vel}{\isvector{u}}
\mnewcommand{\velv}{\isvector{v}}
\mnewcommand{\velw}{\isvector{w}}
\mnewcommand{\velz}{\isvector{z}}
\mnewcommand{\normal}{\isvector{n}}
\mnewcommand{\veluc}{\isvector{u}_1}
\mnewcommand{\velvc}{\isvector{u}_2}
\mnewcommand{\velwc}{\isvector{u}_3}
\mnewcommand{\basis}{\isvector{w}}
\mnewcommand{\tensor}{\istensor{T}}
\mnewcommand{\Identity}{\istensor{I}}
\mnewcommand{\bodyforce}{\isvector{f}}
\mnewcommand{\velh}{\vel_s}
\mnewcommand{\velhc}{\vel_c}
\mnewcommand{\velhcCB}[1]{\ifthenelse{\equal{#1}{}}{\velhc^{\ominus}}{\velhc^{#1,\ominus}}}
\mnewcommand{\velhcCBFree}[1]{\ifthenelse{\equal{#1}{}}{\velhc^{\oplus}}{\velhc^{#1,\oplus}}}
\mnewcommand{\velhcHF}[1]{\ifthenelse{\equal{#1}{}}{\velhc^{>}}{\velhc^{#1,>}}}
\mnewcommand{\velhcLF}[1]{\ifthenelse{\equal{#1}{}}{\velhc^{<}}{\velhc^{#1,<}}}
\mnewcommand{\velhsHF}[1]{\ifthenelse{\equal{#1}{}}{\velh^{>}}{\velh^{#1,>}}}
\mnewcommand{\velhsLF}[1]{\ifthenelse{\equal{#1}{}}{\velh^{<}}{\velh^{#1,<}}}
\mnewcommand{\vortd}{\vort_v}
\mnewcommand{\presh}{\isvector{p}_c}
\mnewcommand{\pseudopresh}{{\isvector{\tilde{p}}}_c}
\mnewcommand{\bodyforceh}{\bodyforce_c}
\mnewcommand{\diver}{\mnabla \cdot}
\mnewcommand{\lapl}{\mnabla^2}
\mnewcommand{\grad}{\mnabla}
\mnewcommand{\Ru}[1]{\isvector{R} \left( #1 \right)}
\mnewcommand{\Rud}[1]{\ifthenelse{\equal{#1}{}}{\mathsfbi{R}}{\istensor{R} \left( #1 \right)}}
\mnewcommand{\vecnull}{\isvector{0}}
\mnewcommand{\vecnulls}{\vecnull_{s}}
\mnewcommand{\vecnullg}{\vecnull_{h}}
\mnewcommand{\vecnullc}{\vecnull_{c}}
\mnewcommand{\vecone}{\isvector{1}}
\mnewcommand{\veconec}{\vecone_{c}}
\mnewcommand{\vecones}{\vecone_{s}}
\mnewcommand{\veconeg}{\vecone_{h}}
\mnewcommand{\veconetresc}{\vecone_{3c}}
\mnewcommand{\velhg}{\vel_h}
\mnewcommand{\preshg}{\isvector{p}_h}
\mnewcommand{\dim}{3}
\mnewcommand{\Ndim}{d}
\mnewcommand{\ud}{d}
\mnewcommand{\vort}{\mathbf{w}}
\mnewcommand{\rot}[1]{\mnabla \times #1}
\mnewcommand{\selfinnerprod}[1]{\innerprod{#1}{#1}}
\mnewcommand{\innerprod}[2]{< #1 , #2 >}
\mnewcommand{\convective}[2]{C \left( #1 , #2 \right)}
\mnewcommand{\intvol}[1]{\int_{\Omega} #1 \ud \Omega}
\mnewcommand{\intsurf}[1]{\int_{\partial \Omega} #1 \ud S}
\mnewcommand{\nvc}{k}
\mnewcommand{\nedge}{v}
\mnewcommand{\axis}{i}
\mnewcommand{\nface}{f}
\mnewcommand{\cp}{c}
\mnewcommand{\cpA}{{c1}}
\mnewcommand{\cpB}{{c2}}
\mnewcommand{\Fedge}[1]{F_e ( #1 )}
\mnewcommand{\Fcell}[1]{F_f ( #1 )}
\mnewcommand{\Fvolume}[1]{F_c ( #1 )}
\mnewcommand{\mathsfbi}[1]{\mathsf{#1}}
\mnewcommand{\conv}{\mathsfbi{C}\left( \velh \right)}
\mnewcommand{\convg}{\mathsfbi{C}\left( \velhg \right)}
\mnewcommand{\convc}{\mathsfbi{C}_{c} \left( \velh \right)}
\mnewcommand{\convu}{\mathsfbi{C}_{u} \left( \velh \right)}
\mnewcommand{\convtraspose}{\mathsfbi{C}\traspose\left( \velh \right)}
\mnewcommand{\convctraspose}{\mathsfbi{C}_{c}\traspose\left( \velh \right)}
\mnewcommand{\convgtraspose}{\mathsfbi{C}\traspose\left( \velhg \right)}
\mnewcommand{\convarg}[1]{\mathsfbi{C}\left( #1 \right)}
\mnewcommand{\convcarg}[1]{\mathsfbi{C}_{c}\left( #1 \right)}
\mnewcommand{\convargtraspose}[1]{\mathsfbi{C}\traspose\left( #1 \right)}
\mnewcommand{\convutraspose}{\mathsfbi{C}_{u} \traspose\left( \velh \right)}
\mnewcommand{\convmat}{\mathsfbi{C}}
\mnewcommand{\convUP}{\convmat^{\UP}_{c} \left( \velh \right)}
\DeclareRobustCommand{\spreg}[2]{\ifthenelse{\equal{#2}{}}{\convmat_{#1}}{\convmat_{#1}\left( #2 \right)}}
\mnewcommand{\velauxc}{\velv_c}
\mnewcommand{\velauxWc}{\velw_c}
\mnewcommand{\filter}{\mathsfbi{G}_\epsilon}
\mnewcommand{\velhcmode}[1]{\ifthenelse{\equal{#1}{+}}{\velhcHF{}}{\ifthenelse{\equal{#1}{-}}{\velhcLF{}}{\ifthenelse{\equal{#1}{=}}{\velhc}{}}}}
\mnewcommand{\velhsmode}[1]{\ifthenelse{\equal{#1}{+}}{\velhsHF{}}{\ifthenelse{\equal{#1}{-}}{\velhsLF{}}{\ifthenelse{\equal{#1}{=}}{\velh}{}}}}
\mnewcommand{\velhctype}[2]{\ifthenelse{\equal{#2}{f}}{\overline{\velhcmode{#1}}}{\ifthenelse{\equal{#2}{r}}{\left(\velhcmode{#1}\right)^\prime}{\ifthenelse{\equal{#2}{v}}{\velhcmode{#1}}{}}}}
\mnewcommand{\velhstype}[2]{\ifthenelse{\equal{#2}{f}}{\overline{\velhsmode{#1}}}{\ifthenelse{\equal{#2}{r}}{\left(\velhsmode{#1}\right)^\prime}{\ifthenelse{\equal{#2}{v}}{\velhsmode{#1}}{}}}}
\mnewcommand{\triadic}[6]{\left(\velhctype{#1}{#2}\right)\traspose \convarg{\velhstype{#3}{#4}} \velhctype{#5}{#6}}
\mnewcommand{\diff}{\mathsfbi{D}}
\mnewcommand{\diffg}{\mathsfbi{D}}
\mnewcommand{\diffc}{\diff_{\cp}}
\mnewcommand{\diffu}{\diff_{u}}
\mnewcommand{\dive}{\mathsfbi{M}}
\mnewcommand{\divec}{\mathsfbi{M}_c}
\mnewcommand{\diveg}{\mathsfbi{M}_h}
\mnewcommand{\divescaf}{\dive_{sf}}
\mnewcommand{\divevecf}{\dive_{vf}}
\mnewcommand{\graddc}{\mathsfbi{G}_c}
\mnewcommand{\gradd}{\mathsfbi{G}}
\mnewcommand{\graddg}{\mathsfbi{G}_h}
\mnewcommand{\lapld}{\mathsfbi{L}}
\mnewcommand{\lapldc}{\lapld_{c}}
\mnewcommand{\vcvects}{\mathsfbi{\Omega}_s}
\mnewcommand{\pseudovcvects}{\tilde{\mathsfbi{\Omega}}_s}
\mnewcommand{\vcvectg}{\mathsfbi{\Omega}_h}
\mnewcommand{\vcvectc}{\mathsfbi{\Omega}_c}
\mnewcommand{\vcvectv}{\mathsfbi{\Omega}_v}
\mnewcommand{\vcvectvc}{\mathsfbi{\Omega}}
\mnewcommand{\vcvect}{\mathsfbi{\Omega}}
\mnewcommand{\nullmat}{\mathsfbi{0}}
\mnewcommand{\normd}[1]{|| #1 ||}
\mnewcommand{\rotd}{\mathsfbi{R}}
\mnewcommand{\graddscaf}{\gradd_{sf}}
\mnewcommand{\graddvecf}{\gradd_{vf}}
\mnewcommand{\graddx}{\gradd_{\x}}
\mnewcommand{\graddy}{\gradd_{\y}}
\mnewcommand{\graddz}{\gradd_{\z}}
\mnewcommand{\graddd}[1]{\ifthenelse{\equal{#1}{1}}{\graddx}{\ifthenelse{\equal{#1}{2}}{\graddy}{\ifthenelse{\equal{#1}{3}}{\graddz}{\gradd_{x_i}}}}}
\mnewcommand{\graddprod}[2]{\graddd{#1}\traspose \vcvect \graddd{#2}}
\mnewcommand{\fluxh}[2]{T_{#1}\left({#2}\right)}
\mnewcommand{\twoD}{two-dimensional}
\mnewcommand{\threeD}{three-dimensional}
\mnewcommand{\TwoD}{Two-dimensional}
\mnewcommand{\ThreeD}{Three-dimensional}
\mnewcommand{\biD}{\twoD~}
\mnewcommand{\triD}{\threeD~}
\mnewcommand{\BiD}{\TwoD~}
\mnewcommand{\TriD}{\ThreeD~}
\mnewcommand{\biandtriD}{two- and \threeD~}
\mnewcommand{\BiandtriD}{Two- and \threeD~}
\mnewcommand{\Dim}[1]{\ifthenelse{\equal{#1}{2}}{\twoD}{\ifthenelse{\equal{#1}{3}}{\threeD}{KK}}}
\mnewcommand{\inttypeoftext}{\mathsf}
\mnewcommand{\kinener}{\inttypeoftext{E}}
\mnewcommand{\enstrophy}{\inttypeoftext{\Omega}}
\mnewcommand{\helicity}{\inttypeoftext{H}}
\mnewcommand{\vorthelicity}{\helicity_{\vort}}
\mnewcommand{\palinstrophy}{\inttypeoftext{P}}
\mnewcommand{\helicityd}{\helicity_c}
\mnewcommand{\enstrophyd}{\enstrophy_c}
\mnewcommand{\vvlength}{m}
\mnewcommand{\pvlength}{n}
\mnewcommand{\evlength}{e}
\mnewcommand{\Isc}{\Gamma}
\mnewcommand{\Ics}{\Isc\traspose}
\mnewcommand{\Scs}{\Gamma_{c \rightarrow s}}
\mnewcommand{\Ssc}{\Gamma_{s \rightarrow c}}
\mnewcommand{\Sscal}{\Pi_{c \rightarrow s}}
\mnewcommand{\Sthreescal}{\Pi}
\mnewcommand{\NormalVect}[1]{\ifthenelse{\equal{#1}{}}{\istensor{N}_{s}}{\istensor{N}_{s,#1}}}
\mnewcommand{\Correction}{\istensor{P}}
\mnewcommand{\PseudoCorrection}{\tilde{\Correction}}
\mnewcommand{\kernel}[1]{Ker \left( #1 \right)}
\mnewcommand{\sCorrection}{\Correction_s}
\mnewcommand{\cCorrection}{\PseudoCorrection_c}
\mnewcommand{\Ivs}{\Psi}
\mnewcommand{\Isv}{\Psi\traspose}
\mnewcommand{\OIsc}{\Upsilon}
\mnewcommand{\OIcs}{\Pi}
\mnewcommand{\order}{o}
\mnewcommand{\coarsemesh}{i}
\mnewcommand{\sizevc}{{V}}
\mnewcommand{\error}{\epsilon}
\mnewcommand{\convH}{\text{(Conv)}_{\helicityd}}
\mnewcommand{\diffH}{\text{(Diff)}_{\helicityd}}
\mnewcommand{\presH}{\text{(Pres)}_{\helicityd}}
\newcommand{\Dt}{\dt}
\newcommand{\Ampl}{\istensor{A}}
\newcommand{\Us}{\istensor{U}_{s}}
\newcommand{\As}{\istensor{A}_{s}}
\newcommand{\Fs}{\istensor{F}_{s}}
\newcommand{\diffv}{\isvector{\alpha}_s}
\newcommand{\diffm}{\istensor{\Lambda}_s}
\newcommand{\pdiffm}{\tilde{\istensor{\Lambda}}_s}
\newcommand{\Tcs}{\istensor{T}_{cs}}
\newcommand{\Tsc}{\istensor{T}_{sc}}
\newcommand{\DX}{\istensor{\Delta}_s}
\newcommand{\diag}{\mathop{\mathrm{diag}}}
\newcommand{\sign}{\mathop{\mathrm{sign}}}
\newcommand{\trace}[1]{\mathop{\mathrm{tr}}(#1)}
\newcommand{\Hadprod}{\circ}
\newcommand{\offdiag}{\mathrm{off}}
\newcommand{\SP}{\mathrm{SP}}
\newcommand{\UP}{\mathrm{UP}}
\newcommand{\canobas}{\isvector{e}}
\newcommand{\genmat}{\istensor{A}}
\newcommand{\genmatcoeff}{a}
\newcommand{\genmatB}{\istensor{B}}
\newcommand{\genmatBcoeff}{b}
\newcommand{\scaUPblend}{\Psi}
\newcommand{\UPblend}{\isvector{\scaUPblend}_s}
\newcommand{\Stensor}{\istensor{S}}
\newcommand{\press}{p}
\newcommand{\sepcomma}{\hspace{1mm},\hspace{1mm}}
\newcommand{\diffFlux}{\istensor{A}^\convmat}
\newcommand{\diffFluxoff}{\istensor{A}^{\convmat,\offdiag}}
\mnewcommand{\vapf}{\lambda}
\mnewcommand{\angle}{\varphi}
\mnewcommand{\vapfn}{-e^{-i\angle}}
\mnewcommand{\dtn}{\widetilde{\dt}}
\mnewcommand{\vapg}{\lambda^{\pm}}
\mnewcommand{\Kopt}{K_{opt}}
\mnewcommand{\Topt}{T_{opt}}
\mnewcommand{\CFLAB}{CFL+AB2}
\mnewcommand{\NewMeth}{AlgEigCD+\timeintparam 1L2}
\mnewcommand{\AlgEigCD}{{\it AlgEigCD}}
\mnewcommand{\EigenCD}{{\it EigenCD}}
\mnewcommand{\parNiki}{{q}}
\begin{document}

\begin{frontmatter}

\title{An efficient eigenvalue bounding method: CFL condition revisited}

\tnotetext[mytitlenote]{An efficient eigenvalue bounding method: CFL condition revisited}

\author[cttc]{F.X.~Trias}
\ead{francesc.xavier.trias@upc.edu}

\author[cttc]{X.~\'{A}lvarez-Farr\'{e}}
\ead{xavier.alvarez.farre@upc.edu}

\author[cttc,tf]{A.~Alsalti-Baldellou}
\ead{adel.alsalti@upc.edu}

\author[kiam]{A.~Gorobets}
\ead{andrey.gorobets@gmail.com}

\author[cttc]{A.~Oliva}
\ead{asensio.oliva@upc.edu}

\address[cttc]{Heat and Mass Transfer Technological Center, Technical University of Catalonia \\ 
ESEIAAT, c/ Colom 11, 08222 Terrassa (Barcelona), Spain}
\address[tf]{Termo Fluids S.L. Carrer de Mag\'{i} Colet 8, 08204 Sabadell (Barcelona), Spain}
\address[kiam]{Keldysh Institute of Applied Mathematics of Russian Academy of Sciences\\
4A, Miusskaya Sq., Moscow, Russia 125047}

\begin{abstract}
  {Direct and large-eddy simulations of turbulence are often
    solved using explicit temporal schemes. However, this imposes very
    small time-steps because the eigenvalues of the (linearized)
    dynamical system, re-scaled by the time-step, must lie inside the
    stability region. In practice, fast and accurate estimations of
    the spectral radii of both the discrete convective and diffusive
    terms are therefore needed. This is virtually always done using
    the so-called CFL condition. On the other hand, the large
    heterogeneity and complexity of modern supercomputing systems are
    nowadays hindering the efficient cross-platform portability of CFD
    codes. In this regard, our {\it leitmotiv} reads: {\it relying on
      a minimal set of (algebraic) kernels is crucial for code
      portability and maintenance!} In this context, this work focuses
    on the computation of eigenbounds for the above-mentioned
    convective and diffusive matrices which are needed to determine
    the time-step {\it \`{a} la} CFL. To do so, a new inexpensive
    method, that does not require to re-construct these time-dependent
    matrices, is proposed and tested. It just relies on a
    sparse-matrix vector product where only vectors change on
    time. Hence, both implementation in existing codes and
    cross-platform portability are straightforward. The effectiveness
    and robustness of the method are demonstrated for different test
    cases on both structured Cartesian and unstructured
    meshes. Finally, the method is combined with a self-adaptive
    temporal scheme, leading to
    {significantly} larger time-steps compared with other more
    conventional CFL-based approaches.}
\end{abstract}

\begin{keyword}
Time-integration, CFL, eigenbounds,  DNS, LES, unstructured 
\end{keyword}

\end{frontmatter}


\section{Introduction}

\label{intro}

We consider the simulation of turbulent incompressible flows of
Newtonian fluids. Under these assumptions, the governing equations
read
\begin{equation}
\label{NS_eqs}
\partial_t \vel + ( \vel \cdot \nabla ) \vel = 2 \rho^{-1} \diver \left( \mu \Stensor ( \vel ) \right) - \grad \press, \hspace{6.69mm} \diver \vel = 0 ,
\end{equation}
\noindent where $\vel ( \xvec , t )$ and $p ( \xvec , t)$ denote the
velocity and {kinematic} pressure fields, and $\Stensor = 1/2
( \grad \vel + \grad \vel\traspose )$ is the rate-of-strain tensor.
The density, $\rho$, is constant whereas the dynamic viscosity, $\mu
( \xvec, t)$, may depend on space and time. Notice that for
(spatially) constant viscosity, and recalling the vector calculus
identity $\diver ( \grad \vel )\traspose = \grad ( \diver \vel )$, the
diffusive term simplifies to $2 \nu \diver \Stensor ( \vel )
= \nu \diver \grad \vel + \nu \diver ( \grad \vel )\traspose
= \nu \lapl \vel$ where $\nu = \mu / \rho$ is the kinematic viscosity.

\mbigskip

{Then, these equations have to be discretized both in space and
time. These two problems are usually addressed separately despite many
numerical effects result from their entanglement: numerical
dispersion~\cite{HU96,RUATRI19-DIS}, artificial
dissipation~\cite{SAN13,CAP17} and stability analysis~\cite{SUN21} are
examples thereof. {In this regard,}
this work is mainly focused on finding cheap and accurate eigenbounds
for the linear stability analysis of explicit time-integration
schemes. {Nevertheless,} the spatial discretization must also be
considered {since it eventually determines the coefficients of the
matrices and, therefore, their spectral properties (location of
eigenvalues in the complex plane) which can have a significant impact
on the overall method.} Therefore, {for the sake of completeness},
both spatial and time-integration methods are reviewed in the next
paragraphs.}


\subsection{Spatial discretization of Navier--Stokes equations}

\label{intro_spacediscr}

The basic physical properties of the Navier--Stokes (NS)
equations~(\ref{NS_eqs}) are deduced from the symmetries of the
differential operators (see Ref.\cite{FRI95}, for example). In a
discrete setting, such operator symmetries must be retained to
preserve the analogous (invariant) properties of the continuous
equations. This idea goes back to the mid-'60s with the pioneering
works by Lilly~\cite{LIL65}, Arakawa~\cite{ARA66}, and
Bryan~\cite{BRY66}. They basically showed that numerical schemes that
preserve certain integral properties of the continuous equations can
also eliminate non-linear instabilities. It is remarkable the
derivation {\it \`{a} la} finite-volume method (FVM) by
Bryan~\cite{BRY66} (at that time the concept of FVM did not exist
yet!)  who showed that an unweighted cell-to-face interpolation for
the advected variable is necessary to preserve total kinetic
energy. This eliminated the non-linear instability problems described
by Phillips~\cite{PHI59} a few years before.

\mbigskip

Around $20$~years later, the increasing capacity of high-performance
computing (HPC) systems enabled to carry out the first scale-resolving
simulations of turbulent channel flows either (wall-resolved)
large-eddy simulation (LES)~\cite{MOI82} or direct numerical
simulation (DNS)~\cite{KIM87}. These simulations made use of a
Fourier--Chebyshev pseudospectral method using the $3/2$~dealiasing
rule for the non-linear terms. However, the applicability of this type
of methods is restricted to simple canonical flows, \eg~homogeneous
isotropic turbulence, channel flow, boundary layers, etc. Therefore,
mesh-based methods such as FVM (also finite difference and finite
element) are necessary to tackle more complex configurations. On the
other hand, turbulence phenomenon results from an intricate dynamical
process: the non-linear convective term generates a continuous
transfer of kinetic energy from large to small scales, up to the point
where viscous dissipation becomes strong enough to counterbalance the
nonlinear production. Numerically, schemes that produce artificial
dissipation may dramatically affect this subtle balance of forces
inter-played at the smallest scales. In this context,
Morinishi~\etal~\cite{MOR98} reviewed the existing conservative
second-order finite-difference schemes for structured meshes, and
presented a fourth-order one which is fully-conservative only on
uniform meshes, whereas it is ``nearly conservative'' on non-uniform
ones. Later, Vasilyev~\cite{VAS00} generalized these schemes for
non-uniform meshes using a mapping technique. However, they do not
simultaneously preserve momentum and kinetic energy: it depends on the
form chosen for the convective term. On the other hand, Verstappen and
Veldman~\cite{VER98,VER03} proposed to preserve the symmetries of the
underlying operators at discrete level: the convective operator is
represented by a skew-symmetric matrix, whereas the diffusive operator
is a symmetric positive semi-definite matrix. In this way,
the {semi-discrete} system is unconditionally
stable. Regarding the accuracy, they recalled the work by Manteuffel
and White~\cite{MAN86}: given stability, a second-order local
truncation error is a sufficient but not a necessary condition for a
second-order global truncation error, \ie~the actual error in the
numerical solution. In this regard, they showed that both second- and
fourth-order {\it symmetry-preserving discretizations} yields second-
and fourth-order accurate solution although the local truncation error
is indeed first-order on non-uniform meshes~\cite{VER03}.

\mbigskip

All the above-explained conservative discretizations are restricted to
staggered Cartesian grids. The way for reliable DNS and LES
simulations on unstructured grids goes back to the works by
Perot~\cite{PER00} and Zhang~\etal~\cite{ZHA02}, where both collocated
and staggered formulations were proposed. The staggered one
discretizes the NS equations in rotational form, which implies to
compute the vorticity at the edges of the mesh. An easier alternative
consisting in the combination of collocated discrete operators was
also proposed in the same work~\cite{PER00} and subsequently explored
by other researchers~\cite{HIC05}. Mahesh~\etal~\cite{MAH04} also
developed both staggered and collocated conservative schemes for LES
in complex domains. A paper reviewing the existing conservative
methods on unstructured grids was published by
Perot~\cite{PER11}. Nevertheless, due to its simplicity to discretize
momentum equations on unstructured grids nowadays collocated
discretizations are the solution adopted by most of the
general-purpose CFD codes such as
ANSYS-FLUENT\textsuperscript{\textregistered} or
OpenFOAM\textsuperscript{\textregistered}. However, there exist
intrinsic errors in the conservation of mass and kinetic energy due to
the improper pressure-velocity
coupling~\cite{MOR98,HAM04,TRI08-JCP}. These errors can eventually
have severe implications for DNS and LES simulations of turbulent
flows: they can introduce far too much artificial dissipation,
significantly affecting the dynamics of the small scales and even
overwhelming the dissipation introduced by subgrid-scale LES
models~\cite{KOM17,KOMTRI20-JCP}. In this context, a {\it
symmetry-preserving discretization} method for collocated unstructured
grids was proposed in Ref.~\cite{TRI08-JCP}: it exactly preserves the
symmetries of the underlying differential operators while introducing
a minimal amount of artificial dissipation due to the
pressure-velocity coupling. This was clearly shown in
Ref.~\cite{KOMTRI20-JCP}, where this {\it symmetry-preserving
discretization} was implemented in
OpenFOAM\textsuperscript{\textregistered} and compared with the
standard version. To complete this subsection, it is worth mentioning
other related methods or concepts, such as the Keller box
schemes~\cite{PER07}, the mimetic methods~\cite{LIP14}, or the
discrete calculus methods~\cite{ROB11,TON14}. All these approaches
share the idea of preserving the mathematical structure of the space,
naturally producing physics-compatible numerical
methods~\cite{KOR14}. In conclusion, the accuracy of DNS and LES
simulations of turbulence is not automatically improved by simply
increasing the order of accuracy of the numerical schemes but also
retaining the symmetries of the continuous equations. Recent works in
this vein can be found, for instance, in
Refs.~\cite{COP19a,COP19b,VEL19,VALTRI19-EPLS,VEL21,ZHA22}.


\subsection{Time-integration methods}

\label{intro_timeint}

Starting from the above-mentioned channel flow simulations by
Kim~\etal~\cite{MOI82,KIM87} in the mid-'80s, DNS and LES simulations
of incompressible flows have usually been carried out by means of a
fractional step method together with an explicit or semi-implicit
time-integration method for momentum. In these initial works, they
used a second-order explicit Adams--Bashforth (AB2) scheme for the
non-linear convective terms, whereas the viscous terms were advanced
in time by a second-order implicit Crank--Nicolson scheme. Later, a
three-step third-order semi-implicit Runge--Kutta (RK3) scheme was
proposed by Le and Moin~\cite{MOI91}. The non-linearity was again
treated explicitly, and the Poisson equation was solved only at the
final step to project the velocity vector onto a divergence-free
space. Slightly different variants of the method can be found
in~\cite{VER96,SPA91,RAI91}, for instance. Shortly, the RK3 algorithm
has three steps and, therefore, it requires three times more
operations (except for the Poisson equation that is solved only in the
last step) to advance to a new time level. To compensate this,
stability analysis leads to significantly larger CFL
numbers~\cite{CFL28}, that is, larger time-steps compared with the AB2
method; therefore, RK3 method is often the favorite
option. However, Verstappen and
Veldman~\cite{VER97,VER03} showed that a minor modification with
respect the original AB2 method may lead to similar computational cost
as the RK3 method proposed in~\cite{MOI91} without affecting the
accuracy. {Later, this idea was extended in~\cite{TRI08-JCP2}
where a self-adaptive second-order scheme was proposed.} Despite the
most widespread schemes are the above-mentioned AB2
and RK3
methods, other schemes (or variants) have also been used in the
context of the numerical simulation of the unsteady Navier--Stokes
equations. In~\cite{JOT03}, a six-step fourth-order implicit
Runge--Kutta method in conjunction with several non-linear solvers was
presented. A semi-implicit third-order accurate in time Runge--Kutta
scheme was proposed in~\cite{NIK06}; it is based on the original
three-step RK3 scheme with one additional sub-step to achieve a
higher-order of accuracy. In the context of collocated spatial
formulations, a third-order-explicit Gear-based scheme was proposed to
mitigate the unwanted spatial
oscillations~\cite{FIS09}. {Finally, symplectic~\cite{SAN13} and
pseudo-symplectic~\cite{CAP17} RK schemes have been proposed to get
rid of the artificial dissipation introduced by the time-integration
of momentum equation.} {More recent works in a similar vein can
be found, for instance, in Refs.~\cite{PAG23,KIN23,RIC23}.}
Nevertheless, all explicit time-integration schemes require an
estimation of the (maximum) eigenvalues of the dynamical system to
compute an upper bound for the time-step. In the CFD community, this
is virtually always done via the so-called {\it CFL} condition
originally proposed in the seminal paper by R.~{\it C}ourant, K.~{\it
F}riedrichs, and H.~{\it L}ewy~\cite{CFL28}. This is revised in detail
later in this paper.


\subsection{Motivation and scope of the present work} 

\label{intro_scope}

{In the last decades, CFD has become a standard design tool in many
fields, such as the automotive, aeronautical, and wind power
industries. The driving force behind is the above-explained
development of numerical techniques in conjunction with the progress
of HPC systems. However, we can say that progress is nowadays hindered
by its legacy from the {90-2000s}. The reasons are
two-fold. Firstly, the design of digital processors constantly evolves
to overcome limitations and bottlenecks. The formerly compute-bound
nature of processors led to compute-centric programming languages and
simulation codes. However, raw computing power grows at a (much)
faster pace than the speed of memory access, turning around the
problem. Increasingly complex memory hierarchies are found nowadays in
computing systems, and optimizing traditional applications for these
systems is cumbersome. Moreover, new parallel programming languages
emerged to target modern hardware (\eg~OpenMP, CUDA,
OpenCL, {HIP}), and porting algorithms and applications has become
restrictive. Secondly, legacy numerical methods chosen to solve
(quasi)steady problems using RANS models are inappropriate for more
accurate (and expensive) techniques such as LES or DNS. We aim to
interlace these two pillars with the final goal of enabling LES and
DNS of industrial applications to be efficiently carried out on modern
HPC systems while keeping codes easy to port, optimize, and
maintain. In this regard, the fully-conservative discretization for
collocated unstructured grids proposed in Ref.~\cite{TRI08-JCP} is
adopted: it constitutes a very robust approach that can be easily
implemented in existing codes such as
OpenFOAM\textsuperscript{\textregistered}~\cite{KOMTRI20-JCP}.}

\mbigskip

{On the other hand, breaking the interdependency between algorithms
and their computational implementation allows casting calculations
into a minimalist set of universal kernels. There is an increasing
interest towards the development of more abstract implementations. For
instance, the PyFR framework~\cite{WIT14} is mostly based on matrix
multiplications and point-wise operations. Another example is the
Kokkos programming model~\cite{EDW14}, which includes computation
abstractions for frequently used parallel computing patterns and data
structures. Namely, implementing an algorithm in terms of Kokkos
entities allows mapping the algorithm onto multiple architectures. In
this regard, in previous works~\cite{ALVTRI17HPC2,ALVTRI20-HPC2} we
showed that virtually all calculations in a typical CFD algorithm for
LES or DNS of incompressible turbulent flows can be boiled down to
three basic linear algebra subroutines: sparse matrix-vector product
(\SpMV), linear combination of vectors and dot product. From now on,
we refer to implementation models based on algebraic subroutines as
algebraic or algebra-based. In this implementation approach, the
kernel code shrinks to dozens of lines; the portability becomes
natural, and maintaining multiple implementations takes minor
effort. Besides, standard libraries optimized for particular
architectures (\eg~cuSPARSE~\cite{NVIDIA2013},
clSPARSE~\cite{GRE-IWOCL2016}) can be linked in addition to
specialized in-house implementations. Nevertheless, the algebraic
approach imposes restrictions and challenges that must be addressed,
such as the inherent low arithmetic intensity of the~\SpMV, the
reformulation of flux limiters~\cite{VALALVTRI21-FL}, or the efficient
computation of eigenbounds to determine the time-step. This work
focuses on the latter problem and aims to answer the following
research question:
\noindent {\it Can we avoid to explicitly construct both convective,
  and diffusive matrices while still being able to compute proper
  eigenbounds in an inexpensive manner?}
{Hereafter, the idea of avoiding the (re)construction of matrices
should be understood in a broad sense: \ie~any sort of specific kernel
that needs to (re)compute the coefficients or any other similar in
essence operation must be avoided.} Preliminary stages of this work
were presented in the 8th ECCOMAS
conference~\cite{TRI22ECCOMAS-SymPres}.}

\mbigskip

The rest of the paper is arranged as follows. In the next section, the
fully conservative spatial discretization of the incompressible NS
equations~(\ref{NS_eqs}) for collocated unstructured grids is briefly
described following the same algebraic notation as in the original
paper~\cite{TRI08-JCP}. Then, in Section~\ref{EigenCD_vs_CFL}, the
idea of replacing the CFL condition by more accurate bounds of the
eigenvalues of both convective and diffusive operators is outlined. It
is essentially based on applying the Gershgorin circle theorem to the
corresponding matrices; therefore, it is suitable for any type of mesh
and discretization method. However, for the reasons explained above we
seek a method that is entirely composed of very basic algebraic
kernels. In this regard, a new method named \AlgEigCD~is presented in
Section~\ref{AlgEig_CD_method}. The key points are the fact that no
new matrix has to be re-computed every time-step (lower memory
footprint) and that, in practice, only relies on an~\SpMV. Therefore,
{implementation and cross-platform portability} of the method are
straightforward. Moreover, apart from this computational benefits,
results presented in Section~\ref{results} show that
the \AlgEigCD~method is also able to provide much better eigenbounds
than a classical CFL condition. Benefits became even more evident on
unstructured grids. Finally, relevant results are summarized and
conclusions are given.

\section{Symmetry-preserving spatial discretization of NS equations}

\label{SymPres}

An energy-preserving discretization of the NS equations~(\ref{NS_eqs})
on collocated unstructured grids is briefly described in this
section. Otherwise stated, we follow the same matrix-vector notation
as in the original paper~\cite{TRI08-JCP}. The spatial discretization
exactly preserves the symmetries of the underlying differential
operators: the convective operator is represented by a skew-symmetric
matrix and the diffusive operator by a symmetric negative
semi-definite matrix. Shortly, the temporal evolution of the
collocated velocity vector, $\velhc \in \real^{\dim \pvlength}$, is
governed by the following algebraic system
\begin{align}
\label{discr_mom}
\vcvect \frac{\ud \velhc}{\ud t} + \conv \velhc &= \diff \velhc - \vcvect \graddc \presh , \\
\label{discr_mass}
\dive \velh &= \vecnullc ,
\end{align}
\noindent where $\presh \in \real^{\pvlength}$ is the cell-centered
pressure and $\pvlength$ is the number of control volumes. The
sub-indices $c$ and $s$ are used to refer whether the variables are
cell-centered or staggered at the faces. The collocated velocity,
$\velhc \in \real^{\dim \pvlength}$, is arranged as a column vector
containing the three spatial velocity components as $\velhc = ( \veluc
, \velvc , \velwc )\traspose$ where $\vel_i = ( [\vel_i]_1 ,
[\vel_i]_2 , \dots , [\vel_i]_{\pvlength}) \in \real^{\pvlength}$ are
vectors containing the velocity components corresponding to the
$x_i$-spatial direction. The staggered velocity vector $\velh = (
[ \vels ]_1 , [ \vels ]_2 , \dots , [ \vels ]_{\vvlength}
)\traspose \in
\real^{\vvlength}$, which is needed to computed the convective term,
$\conv$, results from the projection of a staggered predictor
velocity, $\velh^p$. The matrices
$\vcvect \in \real^{\dim \pvlength \times \dim \pvlength}$, $\conv \in
\real^{\dim \pvlength \times \dim \pvlength}$, $\diff \in \real^{\dim
  \pvlength \times \dim \pvlength}$ are square block diagonal matrices
given by
\begin{equation}
\label{block_diag}
\vcvect = \Identity_{\dim} \otimes \vcvectc , \hphantom{kkk}
\conv = \Identity_{\dim} \otimes \convc , \hphantom{kkk}
\diff = \Identity_{\dim} \otimes \diffc ,
\end{equation}
\noindent where $\Identity_{\dim} \in \real^{\dim \times \dim}$ is the
identity matrix, $\vcvectc \in \real^{\pvlength \times \pvlength}$ is
a diagonal matrix containing the sizes of the cell-centered control
volumes and, $\convc \in \real^{\pvlength \times \pvlength}$ and
$\diffc\in \real^{\pvlength \times \pvlength}$ are the collocated
convective and diffusive operators, respectively. Finally, $\graddc
\in \real^{\dim \pvlength \times \pvlength}$ represents the discrete
gradient operator whereas the matrix $\dive \in \real^{\pvlength
  \times \vvlength}$ is the face-to-cell discrete divergence operator.

\mbigskip

The spatially discrete momentum equation~(\ref{discr_mom}) is
discretized in time using the fully-explicit second-order
$\timeintparam$1L2 scheme (see Ref.~\cite{TRI08-JCP2}
and~\ref{SAT_summary}, for details) for both convection and diffusion
whereas the pressure-velocity coupling is solved using a fractional
step method.


\subsection{Constructing the discrete operators}

This subsection briefly revise the construction of all the discrete
operators needed to solve the NS
equations. {The
constraints} imposed by the {operator} (skew-)symmetries strongly
simplifies ``the discretization problem'' to a set of five basic
discrete operators {(see Ref.~\cite{TRI08-JCP}, for
details)}. Namely,
\begin{equation}
\label{basic_set_operators}
\{ \vcvectc, \vcvects, \NormalVect{}, \dive , \Sscal\} .
\end{equation}
\noindent The first three correspond to basic geometrical information
of the mesh: namely, the diagonal matrices containing the
cell-centered and staggered control volumes, $\vcvectc$ and
$\vcvects$, and the matrix containing the face normal vector,
$\NormalVect{} \equiv ( \NormalVect{1}, \NormalVect{2}, \NormalVect{3}
) \in \real^{\vvlength \times \dim \vvlength}$ where
$\NormalVect{\axis} \in \real^{\vvlength \times \vvlength}$ are
diagonal matrices containing the $x_i$-spatial components of the face
normal vectors, $\normal_{\nface}$. The staggered control volumes,
$\vcvects$, are given by
\begin{equation}
\label{stag_mesh}
[ \vcvects ]_{f,f} \equiv A_{\nface} \delta_{\nface} ,
\end{equation}
\noindent where $A_{\nface}$ is the area of the face $\nface$ and
$\delta_{\nface} =
| \normal_{\nface} \cdot \protect \overrightarrow{\cpA \cpB} |$ is the
projected distance between adjacent cell centers (see
Figure~\ref{mesh}). In this way, the sum of volumes is exactly
preserved $\trace{\vcvects}=\trace{\vcvect}= d \trace{\vcvectc}$
($d=2$ for 2D and $d=3$ for 3D) regardless of the mesh quality and the
location of the cell centers.

\mbigskip

Then, the face-to-cell discrete (integrated) divergence operator,
$\dive$, is defined as follows
\begin{equation}
[ \dive \velh ]_{\nvc} = \sum_{\nface \in \Fcell{\nvc}} \facevel A_{\nface} ,
\end{equation}
\noindent where $\Fcell{\nvc}$ is the set of faces bordering the cell
$\nvc$. Finally, $\Sscal \in \real^{\vvlength \times \pvlength}$ is an
unweighted cell-to-face scalar field interpolation,
\begin{equation}
\sca_{\nface} \approx [ \Sscal \scafieldc ]_{\nface} = \frac{\sca_{\cpA} + \sca_{\cpB}}{2} ,
\end{equation}
\noindent where $\cpA$ and $\cpB$ are the cells adjacent to the face
$\nface$ (see Figure~\ref{mesh}, left). This is needed to construct
the skew-symmetric convective operator (see Eq.~\ref{block_diag}) as
follows
\begin{equation}
\label{convc_def}
\convc \equiv \dive \Us \Sscal ,
\end{equation}
\noindent where $\Us \equiv \diag (\velh) \in \real^{\vvlength \times \vvlength}$
is a diagonal matrix that contains the face velocities,
$\velh \in \real^{\vvlength}$.

\mbigskip

The cell-to-face gradient, {$\gradd \in \real^{\vvlength \times \pvlength}$ is related
with the discrete (integrated) divergence operator, $\dive$, via}
\begin{equation}
\label{gradd}
{\gradd \equiv - \vcvects^{-1} \dive\traspose .}
\end{equation}
{\noindent Then, the discrete Laplacian operator, $\lapld \in
\real^{\pvlength \times \pvlength}$ is, by construction, a symmetric
negative semi-definite matrix}
\begin{equation}
\label{lapld}
{\lapld \equiv \dive \gradd = - \dive \vcvects^{-1} \dive\traspose ,}
\end{equation}
{\noindent Together with Eq.(\ref{stag_mesh}), they respectively lead
to the discrete gradient}
\begin{equation}
[ \vcvects \gradd \presh ]_{\nface} = ( \press_{\cpA} - \press_{\cpB} ) A_{\nface} \hphantom{kk} \Longrightarrow \hphantom{kk}
[ \gradd \presh ]_{\nface} = \frac{\press_{\cpA} - \press_{\cpB}}{\delta_{\nface}} ,
\end{equation}
\noindent 
Laplacian and diffusive operators (see Eq.~\ref{block_diag})
\begin{equation}
[ \lapld \scafieldc ]_{\nvc} = \sum_{\nface \in \Fcell{\nvc}} \frac{( \sca_{\cpA} - \sca_{\cpB} ) A_{\nface}}{\delta_{\nface}} \hspace{6.69mm} \text{and} \hspace{6.69mm}
\diffc \equiv \nu \lapld ,
\end{equation}
\noindent where $\nu$ is the kinematic viscosity. Notice that this
discretization of the diffusive operator is valid
for {incompressible} fluids with constant viscosity. For
non-constant viscosity values, the discretization method has to be
modified accordingly~\cite{TRI12-DmuS}. Finally, the cell-to-face
(momentum) interpolation is constructed as follows
\begin{equation}
\label{Scs_def}
\Scs \equiv \NormalVect{} \vcvects^{-1} \Sthreescal \vcvect \hspace{6.69mm} \text{where} \hphantom{k} \Sthreescal = \Identity_{\dim} \otimes \Sscal ,
\end{equation}
\noindent which {is needed to construct the cell-to-cell gradient
operator, $\graddc \equiv \Ssc \gradd$, where
$\Ssc \equiv \vcvect^{-1} \Scs\traspose \vcvects$ is the face-to-cell
interpolation. Notice that $\Scs$} is basically a volume-weighted
interpolation. It must be noted that an unweighted interpolation,
$\Scs = \NormalVect{} \Sthreescal$, was proposed in the original
paper~\cite{TRI08-JCP}. However, as mentioned above, this can lead to
stability issues. This has been recently
addressed~\cite{SANTRI22ECCOMAS-FSM} showing that the volume-weighted
interpolation defined in Eq.(\ref{Scs_def}) is necessary to guarantee
that the method is unconditionally stable regardless of the mesh
quality.

\begin{figure}[!t]
  \centering{ \includegraphics[height=0.46\textwidth,angle=0]{./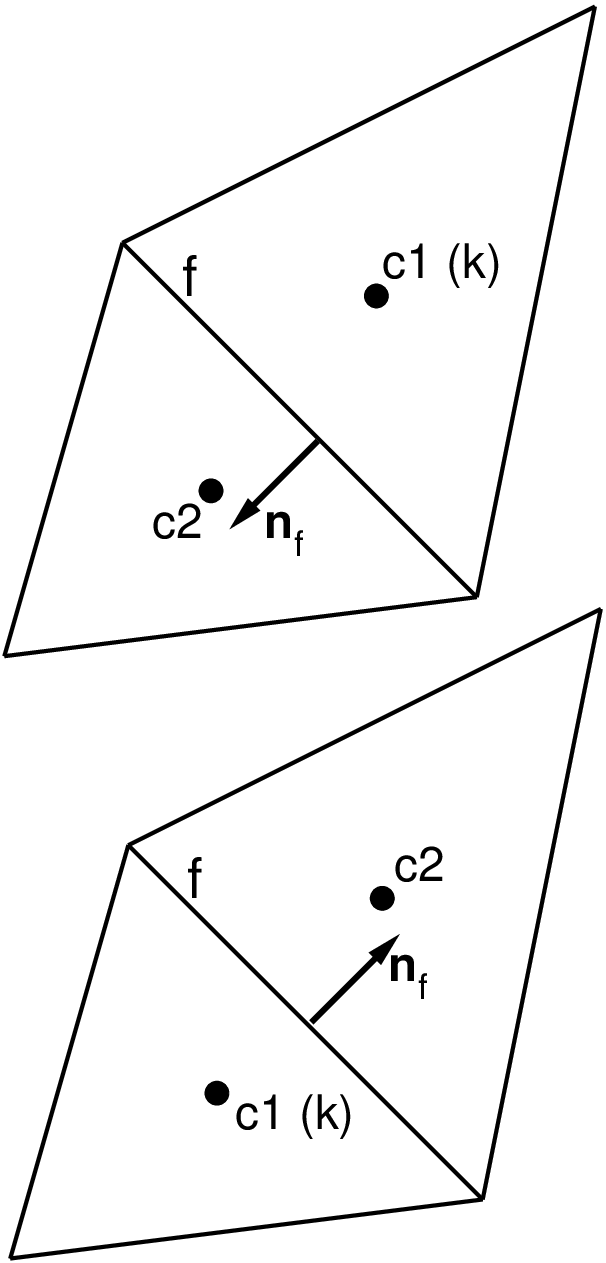} \hspace{9.69mm} \includegraphics[height=0.43\textwidth,angle=0]{./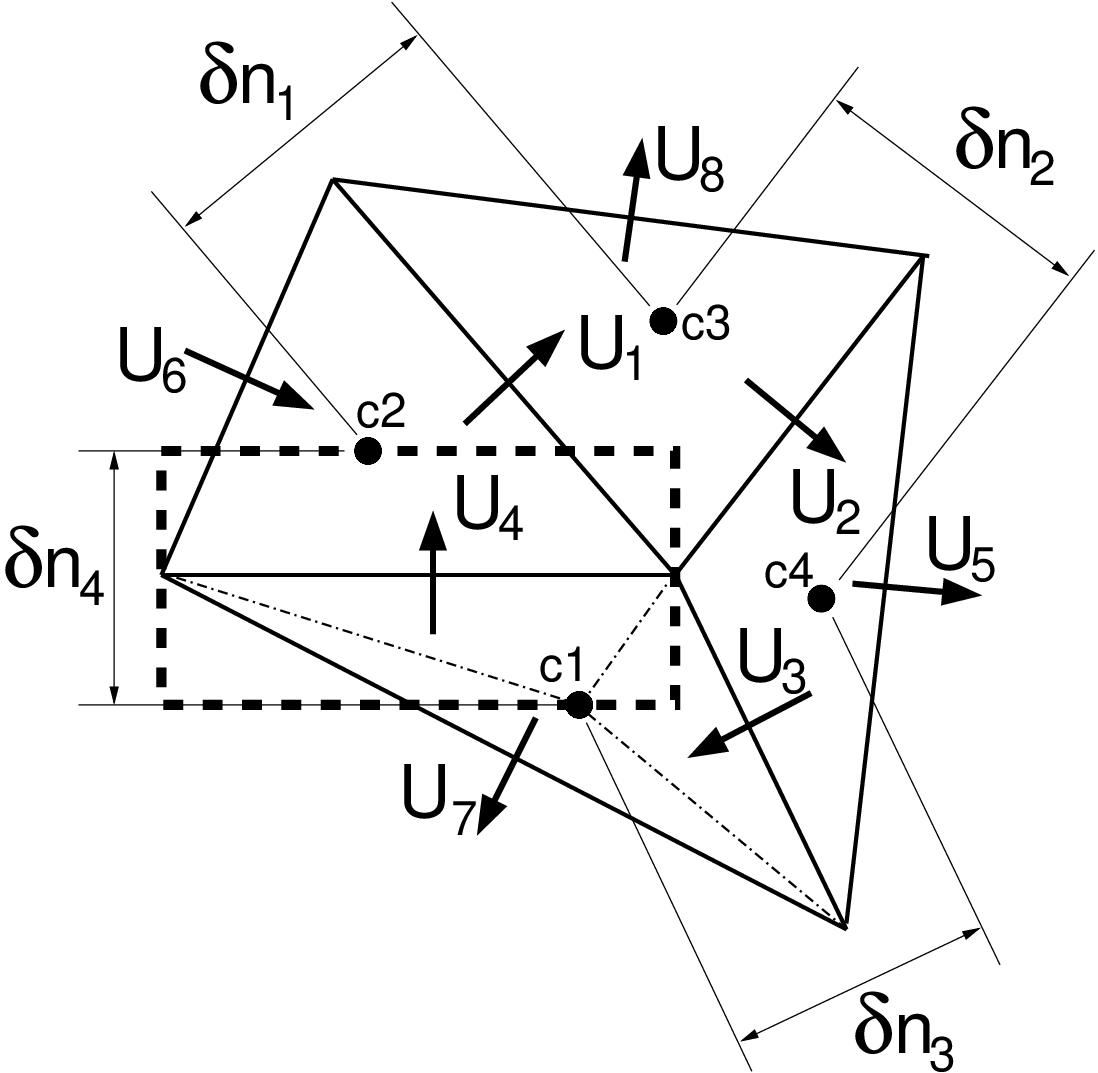}
    }
\caption{Left: face normal and neighbor labeling criterion. Right:
  definition of the volumes, $\vcvects$, associated with the the
  face-normal velocities, $\velh$. Thick dashed rectangle is the
  volume associated with the {staggered} velocity $\mathrm{U}_{4}
  = [ \velh ]_{4}$, \ie~$[ \vcvects ]_{4,4} = A_{4} \delta_{4}$ where
  $A_{4}$ is the face area and $\delta_{4} =
  | \normal_{4} \cdot \protect \overrightarrow{\cpA \cpB} |$ is the
  projected distance between adjacent cell centers. Thin dash-dotted
  lines are placed to illustrate that the sum of volumes is exactly
  preserved $\trace{\vcvects}=\trace{\vcvect}= d \trace{\vcvectc}$
  ($d=2$ for 2D and $d=3$ for 3D) regardless of the location of the
  cell nodes.}
\label{mesh}
\end{figure}


\section{Rethinking CFL condition: eigenbounds of convective and diffusive operators}

\label{EigenCD_vs_CFL}

\subsection{Gershgorin-based linear stability analysis}

Explicit (and semi-explicit) time-integration schemes impose severe
restrictions on the time-step, $\Dt$, due to the fact that the
eigenvalues of the amplification matrix must lie inside the stability
region of the time-integration method. Namely, linearizing (if needed)
the dynamical system (\eg~momentum equation on a 3D collocated mesh
with $\pvlength$ volumes and $\vvlength$ faces) leads to
\begin{equation}
\label{dyn_sys}
\frac{\ud \velhc}{\ud t} = \Rud{}_{\vel} \velhc \hspace{6.69mm} \text{where} \hspace{3.69mm} \Rud{}_{\vel} = \left( \Identity_{\dim} \otimes \Rud{} \right) \in \real^{\dim \pvlength \times \dim \pvlength} ,
\end{equation}
\noindent where the matrix $\Rud{} \equiv \vcvectc^{-1} ( - \convc +
\diffc ) \in \real^{\pvlength \times \pvlength}$ accounts for
the effects of convection and diffusion, and $\velhc \in (\veluc ,
\velvc , \velwc )\traspose \in \real^{\dim \pvlength}$. Then,
different time-integration schemes lead to different stability
regions~\cite{WES01}. The simplest example thereof in the first-order
Euler explicit scheme:
\begin{equation}
\label{forward_Euler}
\frac{\velhc^{n+1}-\velhc^{n}}{\Dt} = \Rud{}_{\vel} \velh^{n} \hspace{3.69mm} \Longrightarrow \hspace{3.69mm} \velhc^{n+1} = \left( \Identity_{\dim} \otimes \Ampl \right) \velhc^{n} \hspace{2.69mm} \text{where} \hspace{2.69mm} \Ampl \equiv ( \Identity + \Dt \Rud{} ) .
\end{equation}
\noindent The A-stability is guaranteed if the spectral radius of the
amplification matrix, $\Ampl$, is smaller than one, \ie~$\rho( \Ampl )
< 1$. This leads to the stability region in terms of the eigenvalues
of $\tilde{\Rud{}} \equiv \Dt \Rud{}$ shown in
Figure~\ref{stab_region} (top). Similar analysis can be done for
other temporal schemes~\cite{WES01}. An example thereof is shown in
the same figure for the one-parameter second-order explicit method
\begin{equation}
\label{k1L2}  
\frac{\velhc^{n+\timeintparam+1/2}-\velhc^{n+\timeintparam-1/2}}{\Dt} = \Rud{}_{\vel} \velhc^{n+\timeintparam} ,
\end{equation}
\noindent where the off-step velocities are given by
\begin{equation}
\label{off-step}  
\velhc^{n+\timeintparam+1/2} = ( \timeintparam + 1/2 ) \velhc^{n+1} - ( \timeintparam - 1/2 ) \velhc^{n} \hspace{3.69mm} \text{and} \hspace{3.69mm} \velhc^{n+\timeintparam} = ( 1 + \timeintparam ) \velhc^{n} - \timeintparam \velhc^{n-1} .
\end{equation}
\noindent This time-integration scheme named $\timeintparam$1L2
(see~\ref{SAT_summary}, for details) can be viewed as a generalization
of the classical second-order Adams--Bashforth (AB2) scheme
($\timeintparam=1/2$). This was used in Refs.~\cite{VER98,VER03} for
DNS of incompressible flows keeping the parameter $\timeintparam$
constant during the simulation. Then, in Ref.~\cite{TRI08-JCP2}, a
self-adaptive strategy was proposed: the parameter $\timeintparam$ is
being re-computed to adapt the linear stability domain to the
instantaneous flow conditions in order to maximize $\Dt$. The idea of
the method is depicted in Figure~\ref{stab_region} (bottom). Hence, at
the end, this or any other method necessarily relies on bounding the
eigenvalues of the dynamical system, \ie~in our case finding
eigenbounds of the matrix $\Rud{}$ given in Eq.(\ref{dyn_sys}). In the
original work~\cite{TRI08-JCP2}, this was done by applying the
Gershgorin circle theorem to $\vcvectc^{-1} \convc$ and
$\vcvectc^{-1} \diffc$ together with the Bendixson theorem {(for
a graphical representation see Figure~\ref{stab_region}, bottom).}
\begin{theorem}[Bendixson~\cite{BEN1902}]
\label{Bendixson_theorem}
Given two square matrices of equal size, $\istensor{X}$ and
$\istensor{Y}$, one with real-valued eigenvalues,
$\lambda^{\istensor{X}} \in \real$, and the other with imaginary ones,
$\lambda^{\istensor{Y}} \in \imag$, then every eigenvalue of the sum,
$\istensor{X}+\istensor{Y}$, is contained in the rectangle
\begin{equation}
\label{Bendixson}
\lambda^{\istensor{X}}_{\min} \le \realpart(\lambda^{\istensor{X}+\istensor{Y}}) \le \lambda^{\istensor{X}}_{\max} \hspace{6.69mm}
\imagpart(\lambda^{\istensor{Y}}_{\min}) \le \imagpart(\lambda^{\istensor{X}+\istensor{Y}}) \le \imagpart(\lambda^{\istensor{Y}}_{\max}) .
\end{equation}
\end{theorem}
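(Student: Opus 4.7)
The approach I would take is the classical Rayleigh-quotient proof of Bendixson's theorem, reading the hypotheses in their standard strong form. The natural way to guarantee that the spectrum of $\istensor{X}$ is real and that of $\istensor{Y}$ is purely imaginary is to take $\istensor{X}$ Hermitian and $\istensor{Y}$ skew-Hermitian; equivalently, $\istensor{X}$ and $\istensor{Y}$ are the Hermitian and skew-Hermitian parts of $\istensor{Z}:=\istensor{X}+\istensor{Y}$, so $\istensor{X}=(\istensor{Z}+\istensor{Z}^{*})/2$ and $\istensor{Y}=(\istensor{Z}-\istensor{Z}^{*})/2$. The first step of my plan is therefore to fix this reading: without it, a matrix with real spectrum need not satisfy $v^{*}\istensor{X}v\in\real$ for every unit $v$ (any non-normal matrix with real eigenvalues is a counterexample), and the claim fails.

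With that reading in place, let $\lambda$ be any eigenvalue of $\istensor{Z}$ with eigenvector $v$ normalized so that $v^{*}v=1$. The central identity is
$$\lambda \;=\; v^{*}\istensor{Z}v \;=\; v^{*}\istensor{X}v \,+\, v^{*}\istensor{Y}v.$$
Because $\istensor{X}$ is Hermitian the first summand lies in $\real$, and because $\istensor{Y}$ is skew-Hermitian the second lies in $\imag$; matching real and imaginary parts yields $\realpart(\lambda)=v^{*}\istensor{X}v$ and $\imagpart(\lambda)=-i\,v^{*}\istensor{Y}v$. The second step is to invoke Courant--Fischer: for any unit $v$, the Rayleigh quotient of a Hermitian matrix is trapped between its extremal eigenvalues, and $-i\istensor{Y}$ is Hermitian with eigenvalues $\imagpart(\lambda^{\istensor{Y}})$. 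Combining the two bounds and using that $\lambda$ was an arbitrary eigenvalue of $\istensor{Z}$ gives the rectangle (\ref{Bendixson}).

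The main obstacle is not algebraic but conceptual, and concerns the way the theorem is later applied. In the paper's setting one wants to take $\istensor{X}=\vcvectc^{-1}\diffc$ and $\istensor{Y}=-\vcvectc^{-1}\convc$; although $\diffc$ is symmetric and $\convc$ skew-symmetric, the products are \emph{not} symmetric/skew-symmetric in the standard Euclidean inner product because $\vcvectc$ need not commute with them in a meaningful way after inversion. The clean fix is to work in the weighted inner product $\langle u,v\rangle_{\vcvectc}:=u^{*}\vcvectc v$, with respect to which $\vcvectc^{-1}\diffc$ is self-adjoint and $\vcvectc^{-1}\convc$ skew-adjoint; the Rayleigh-quotient argument above then goes through unchanged, with $v^{*}\vcvectc v=1$ replacing $v^{*}v=1$ and the Courant--Fischer bounds taken in this weighted geometry. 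Making this inner-product choice explicit is, in my view, the only genuinely delicate point of the proof.
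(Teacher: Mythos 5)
The paper does not actually prove this statement: it is quoted as a classical result with only a citation to Bendixson (1902), and the text proceeds directly to its application. So there is no in-paper proof to compare against; what can be judged is whether your blind proof is sound, and it is. Your argument is the standard Rayleigh-quotient proof: for a unit eigenvector $v$ of $\istensor{X}+\istensor{Y}$ one writes $\lambda = v^{*}\istensor{X}v + v^{*}\istensor{Y}v$, identifies the real and imaginary parts with Rayleigh quotients of the Hermitian matrices $\istensor{X}$ and $-i\istensor{Y}$, and invokes Courant--Fischer. Your preliminary remark is also well taken and worth stressing: as literally printed, the hypotheses (``real eigenvalues'' and ``imaginary eigenvalues'') are too weak --- e.g.\ a nilpotent, non-normal $\istensor{X}$ has real spectrum but complex Rayleigh quotients, and the rectangle bound fails --- so the theorem must be read with $\istensor{X}$ Hermitian and $\istensor{Y}$ skew-Hermitian, which is exactly how the paper uses it ($\diffc$ symmetric, $\convc$ skew-symmetric). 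On your final point about $\vcvectc^{-1}\diffc$ and $\vcvectc^{-1}\convc$ losing (skew-)symmetry in the Euclidean inner product: the paper does address this, not by passing to the weighted inner product as you propose, but by the equivalent device of the explicit similarity transformation to $\vcvectc^{-1/2}\diffc\,\vcvectc^{-1/2}$ and $\vcvectc^{-1/2}\convc\,\vcvectc^{-1/2}$, which preserves the spectrum and restores (skew-)symmetry; the two fixes are the same observation in different language, so your proof and the paper's application are fully consistent.
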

\noindent This can be easily applied to matrix $\Rud{} = -
\vcvectc^{-1} \convc + \vcvectc^{-1} \diffc$ recalling that $\convc = -
\convctraspose$, \ie~$\lambda^{\convmat} \in \imag$, and $\diffc =
\diffc\traspose$ negative semi-definite,~\ie~$\lambda^{\diff} \in
\real_{\le 0}$. At this point, there are a couple of technical issues
that worth mentioning. Although the (skew-)symmetry is lost when
matrices $\convc$ and $\diffc$ are left-multiplied by $\vcvectc^{-1}$,
their eigenvalues are still imaginary and real-valued,
respectively. They actually have the same spectrum as the
(skew-)symmetric matrices $\vcvectc^{-1/2} \convc \vcvectc^{-1/2}$ and
$\vcvectc^{-1/2} \diffc \vcvectc^{-1/2}$,
\begin{align}
\nonumber \vcvectc^{-1} \diffc \velv = \lambda^{\diff} \velv \hspace{1.69mm} &\Longrightarrow \vcvectc^{1/2} ( \vcvectc^{-1} \diffc ) \vcvectc^{-1/2} \vcvectc^{1/2} \velv = \lambda^{\diff} \vcvectc^{1/2} \velv \\
  &\Longrightarrow \hspace{1.69mm} (\vcvectc^{-1/2} \diffc \vcvectc^{-1/2}) \velw = \lambda^{\diff} \velw ,
\end{align}
\noindent where $\velw = \vcvectc^{1/2} \velv$. Notice that the matrix
$\vcvectc$ has strictly positive diagonal elements. This method to
bound the eigenvalues of $\Rud{}$ was originally proposed and referred
as \EigenCD~in Ref.~\cite{TRI08-JCP2}. Later, it was successfully used
for a large variety of DNS and LES simulations on both structured and
unstructured meshes (see
Refs.~\cite{TRI07-IJHMT-I,TRI07-IJHMT-II,TRI08-JCP2,TRI14-CYL,PONTRI18-BFS,DABTRI19-3DTOPO-RB,CAL21},
among others).

\begin{figure}[!t]
  \centering{
    \includegraphics[height=0.69\textwidth,angle=-90]{./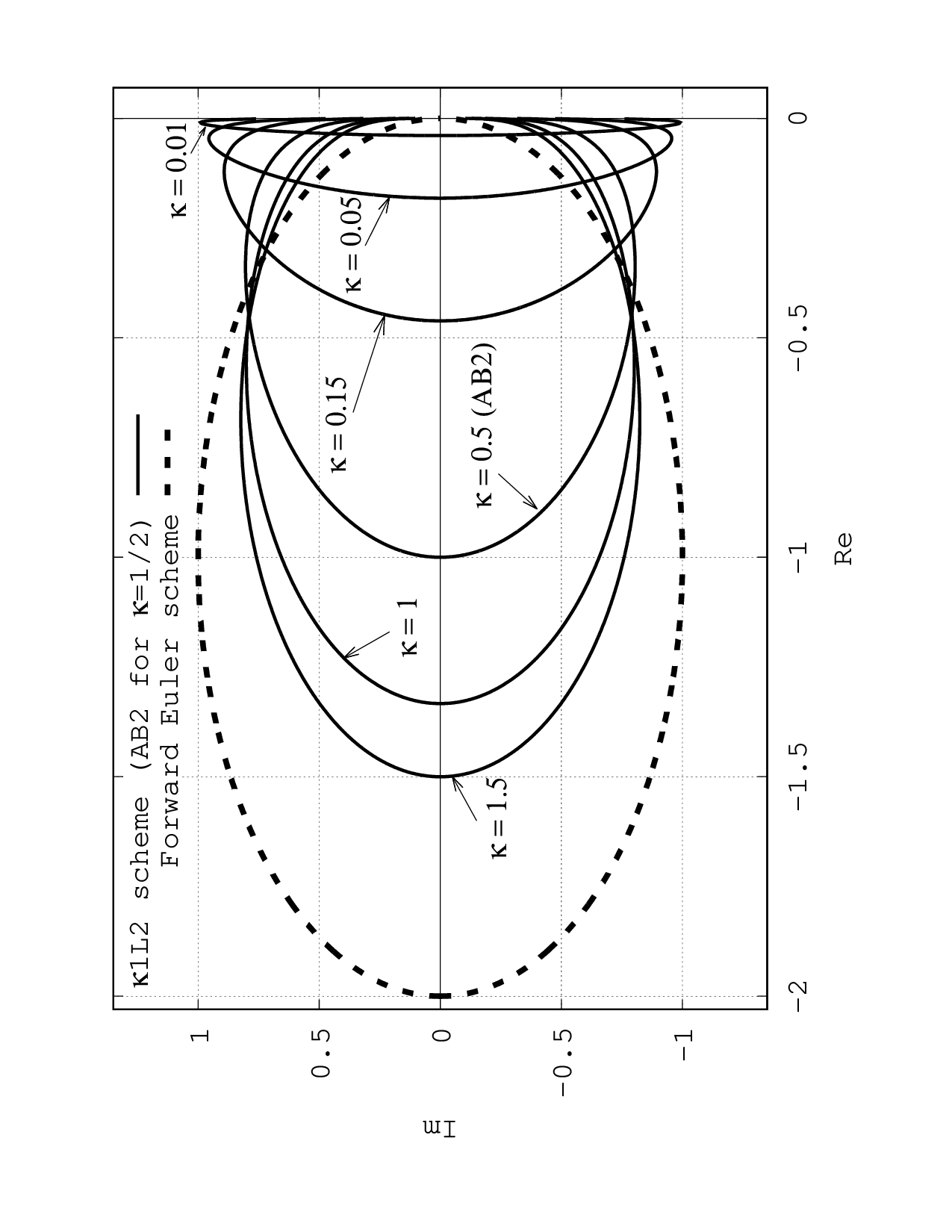}
    \includegraphics[height=0.69\textwidth,angle=-90]{./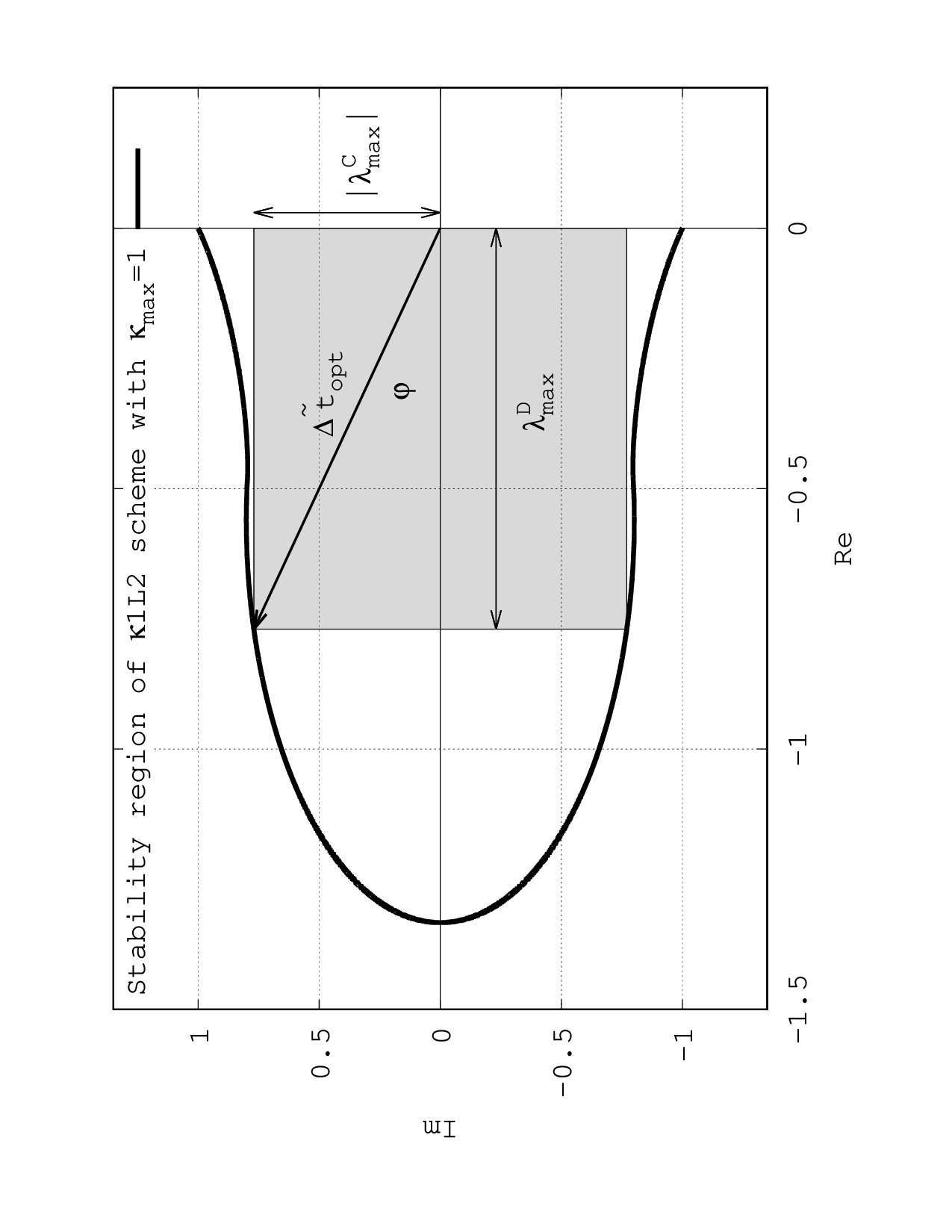}
  }
\caption{Stability region of the first-order forward Euler
  scheme (Eq.~\ref{forward_Euler}) together with the family of
  $\timeintparam$-dependent second-order $\timeintparam$1L2
  time-integration scheme (see Eqs.~\ref{k1L2} and~\ref{off-step}
  and~\ref{SAT_summary}) (top) and their envelope (bottom). {The
  shaded region is a graphical representation of the Bendixson
  theorem (see Theorem~\ref{Bendixson_theorem}).}}
\label{stab_region}
\end{figure}


\subsection{CFL condition: brief historical review}

\label{CFLhistory}

Nevertheless, this is not the standard way to bound the eigenvalues of
$\Rud{}$. In the CFD literature, and in virtually all CFD packages,
stability constraints for $\Dt$ are usually expressed in terms of the
so-called {\it CFL} condition originally proposed in the seminal
paper~\cite{CFL28} by R.~{\it C}ourant, K.~{\it F}riedrichs, and
H.~{\it L}ewy in 1928! They derived the following stability condition
\begin{equation}
\label{CFL}
C = \frac{u \Dt}{\Dx} < C_{\max} ,
\end{equation}
\noindent for a 1D transport equation
\begin{equation}
\frac{\partial \sca}{\partial t} + u \frac{\partial \sca}{\partial x} = 0 ,
\end{equation}
\noindent discretized in a uniform mesh with spacing equal to $\Dx$
where $u$ is the advection velocity. The intuitive idea or ``physical
interpretation" of this formulae can be found, for instance, in the
OpenFOAM\textsuperscript{\textregistered} documentation as {\it``a
measure of the rate at which information is transported under the
influence of a flux field"}~\cite{OpenFOAM}. This, or very similar
formulae can be found in NEK5000~\cite{NEK5000},
COMSOL\textsuperscript{\textregistered}~\cite{COMSOL} or
Basilisk~\cite{BASILISK} codes, among many others. An alternative
definition is used in ANSYS-Fluent~\cite{ANSYS-Fluent}
\begin{equation}
\label{CFL_ANSYS-Fluent}
CFL = \frac{\Dt \sum_{faces} \lambda_f^{\max} A_f}{2V} ,
\end{equation}
\noindent where $A_f$ are the face areas, $V$ is the cell volume and
$\lambda_f^{\max}$ is the maximum of the local eigenvalues. For
incompressible (also compressible at low speed) flows, $\lambda_f =
U_f$ (here, $U_f$ is the face velocity); therefore, this CFL condition
becomes identical to the definition used in
OpenFOAM\textsuperscript{\textregistered}~\cite{OpenFOAM}, SU2
code~\cite{SU2} or Code Saturne~\cite{Saturne} and slightly different
than the definition used in the DLR-TAU code (see Eq.~18
in~\cite{LAN14}). {Interestingly enough, this formula can be
obtained applying the Gershgorin circle theorem to
$\vcvectc^{-1} \conv$ in the particular case where a second-order
symmetry-preserving scheme is used, as we shown in~\cite{TRI08-JCP2}.}
Nevertheless, {it is not
clear when this formula was originally proposed} (at least, not for
the authors) and according to~\cite{privMENTER}, it goes back to
Eq.(22) in Ref.~\cite{WEI99} where the following definition of the CFL
condition is given
\begin{equation}
\label{CFL_Weiss_AIAA99}
CFL = \frac{\Dt \lambda_{max}}{V} ,
\end{equation}
\noindent where $\lambda_{max}$ is the maximum eigenvalue of the
system given by $| u |$ for incompressible (also compressible at low
speed) flows. It must be noted that a multiplication by the face area,
$A_f$, is missing in Eq.(\ref{CFL_Weiss_AIAA99}). Moreover, no
summation by faces is specified here. Going back to previous works by
the same authors, we find the same definition in Ref.~\cite{WEI97}
without specifying how the eigenvalues are being computed. Moreover,
in Ref.~\cite{WEI95} (see Eq.~16) they used the following formula for
bounding the $\Dt$,
\begin{equation}
\label{CFL_Weiss_AIAA95}
\Dt = \min \left( \frac{CFL \Dx}{u'+c'}, \frac{\sigma \Dx^2}{\nu} \right) ,
\end{equation}
\noindent where $\sigma$ is referred as von Neumann number, $u'$ and
$c'$ are respectively the velocity and the speed of sound for the
non-preconditioned system and $\Dx$ is defined as the inter-cell
length scale over which diffusion occurs. Furthermore, in
Ref.~\cite{WEI99-AIAAconf} (Eq.~4) we find the following formula
\begin{equation}
\label{CFL_Weiss_AIAAconf99}
CFL = \Dt \lambda_{max} ( \istensor{D} ) ,
\end{equation}
\noindent where $\lambda_{max} ( \istensor{D} )$ is the maximum eigenvalue
of the chemical Jacobian. The time integration method is a first-order
implicit Euler scheme and the condition~(\ref{CFL_Weiss_AIAAconf99})
is used to keep the system positive-definite,~\ie~$( \Identity
- \Dt \istensor{D} )$ is a positive-definite matrix. The eigenvalues
of $\istensor{D}$ were determined numerically using the LAPACK
library~\cite{LAPACK}.

\mbigskip

{Similar expressions can also be derived through a von Neumann
stability analysis, \ie~non-growth of Fourier modes in the frozen
coefficient case on an unbounded domain (see Ref.~\cite{WES96}, for
instance). Therefore, the analysis is restricted to uniformly spaced
Cartesian grids with periodic boundary conditions. This inherent
restrictions of the Fourier analysis can be by-passed using a local
von Neumann analysis with the local values of the coefficients (see
Chapter~5 in Ref.~\cite{WES01}). This is applicable for Cartesian
meshes with a smooth stretching. However, the analysis cannot be
easily extended to unstructured grids.} In any case, CFL condition
became soon very popular among all the CFD community. To celebrate the
article's 40th anniversary, in 1967 the IBM Journal published a
special issue, that included the English translation of the original
paper~\cite{CFL67}. In 2010, the meeting ``CFL-condition, 80 years
gone" was held in Rio de Janeiro~\cite{CFL13}.


\subsection{Two sides of the same coin}

The CFL condition given in Eq.(\ref{CFL}) can be easily related to the
above explained stability constraints imposed by the eigenvalues of
the matrix $\Rud{}$ given in Eq.(\ref{dyn_sys}). Let us consider a 1D
uniformly spaced mesh with constant advective velocity, $u$. In this
case, the convective and diffusive terms in the NS equations simplify
to
\begin{equation}
\label{Conv-Diff_1D}
\frac{\partial \sca}{\partial t} = - u \frac{\partial \sca}{\partial x} + \nu \frac{\partial^2 \sca}{\partial x^2} .
\end{equation}
\noindent Then, a second-order semi-discrete finite-difference (also
finite-volume) discretization of Eq.(\ref{Conv-Diff_1D}) leads to
\begin{equation}
  \frac{\partial \sca_i}{\partial t} = - u \frac{\sca_{i+1} - \sca_{i-1}}{2 \Dx} + \nu \frac{\sca_{i+1} - 2 \sca_i + \sca_{i-1}}{\Dx^2} .
\end{equation}
\noindent This can be re-arranged in a matrix-vector form as follows
\begin{equation}
\resizebox{0.96\hsize}{!}{$\dfrac{\partial \scafield_h}{\partial t} = 
  \begin{bmatrix}
    0 & \ddots & & \\
    \ddots & 0 & \ddots & \\
    & \dfrac{u}{2\Dx} & 0 & -\dfrac{u}{2\Dx} & \\
    & & \ddots & 0 & \ddots \\ 
    & & & \ddots & 0 & \\ 
  \end{bmatrix} \scafield_h
  +
  \begin{bmatrix}
    0 & \ddots & & \\
    \ddots & 0 & \ddots & \\
    & \dfrac{\nu}{\Dx^2} & -\dfrac{2\nu}{\Dx^2} & \dfrac{\nu}{\Dx^2} & \\
    & & \ddots & 0 & \ddots \\ 
    & & & \ddots & 0 & \\ 
  \end{bmatrix} \scafield_h ,$}
\end{equation}
\noindent where $\scafield_h = ( \sca_1, \cdots, \sca_\pvlength
)\traspose \in \real^{\pvlength}$ is a column vector containing all
the components of the scalar field $\scafield$. Hence, eigenvalues of
the convective and diffusive part can be bounded using the Gershgorin
circle theorem as follows
\begin{equation}
|\lambda^{\convmat}| \le \frac{|u|}{\Dx} \hspace{10.69mm} |\lambda^{\diff}| \le \frac{4 \nu}{\Dx^2} ,
\end{equation}
\noindent which leads to the classical CFL definition proposed almost
a century ago~\cite{CFL28}. {Notice that we are assuming that
the eigenvalues, and in particular its spectral radius, provide an
upper bound for the growth of the power of a matrix. This is true for
non-defective matrices since, as shown in Refs.~\cite{TRE93,TRE97},
the growth of the power can be much larger for defective matrices with
Jordan blocks. In our case, all the matrices are either symmetric or
skew-symmetric (or a combination of both), therefore, they are normal
matrices,~\ie~non-defective matrices.}

\mbigskip

At this point, we expect that it becomes clear that it is probably
more appropriate (and more accurate) to get rid of generalizations of
the classical CFL definition given in Eq.(\ref{CFL}) for general cases
(\ie~multi-dimensional, non-uniform, non-constant velocity,
unstructured meshes...). {In general, for the sake of robustness,
these approaches tend to underestimate $\Dt$ leading to an increase in
the overall computational cost of the simulations.} Instead, the
Gershgorin circle theorem can be applied assuming that the
coefficients of the discrete convective, $\convc$, and diffusive,
$\diffc$, operators are available. This was the main idea of the paper
published one decade ago~\cite{TRI08-JCP}: {to use strict
eigenbounds resulting from the spatial discretization and not inexact
approximations combined with heuristic, sometimes even
trial-and-error, approaches.} In practice, CPU cost
reductions up to more than $4$ {times} were measured for
unstructured grids~\cite{TRI08-JCP2} compared with
a {more classical} CFL condition.


\section{A new efficient approach to compute eigenbounds of convection and diffusion matrices avoiding their construction}

\label{AlgEig_CD_method}

\subsection{Deconstructing convection and diffusion matrices}

Let us consider again the convective operator defined in
Eq.~(\ref{convc_def})
\begin{equation}
\label{conv_oper}
\convc \equiv \dive \Us \Sscal \in \real^{\pvlength \times \pvlength}, 
\end{equation}
\noindent where $\dive \in \real^{\pvlength \times \vvlength}$ is the
face-to-cell divergence operator,
$\Sscal \in \real^{\vvlength \times \pvlength}$ is cell-to-face
interpolation and $\Us = \diag
(\velh) \in \real^{\vvlength \times \vvlength}$ is a diagonal matrix
that contains the face velocities, $\velh \in \real^{\vvlength}$, that
change every time-step. {The direct application of the Gershgorin
circle theorem would require evaluating explicitly the coefficients of
$\convc$ at every time-step and then calling some specific function to
compute the corresponding eigenbounds. As explained before, this type
of approach would increase the code complexity hindering its efficient
cross-platform portability.}

\mbigskip

A similar problem exists for the diffusive term with non-constant (in
time) diffusivity
\begin{equation}
\label{diff_oper}
\diffc(\diffv) \equiv \dive \diffm \gradd \in \real^{\pvlength \times \pvlength},
\end{equation}
\noindent where $\diffm = \diag ( \diffv ) \in \real^{\vvlength
  \times \vvlength}$ is a diagonal matrix containing the diffusivity
  values at the faces, $\diffv \in
\real^{\vvlength}$. Notice that this is also relevant for
eddy-viscosity turbulence models. For details about the
discretization, the reader is referred to Section~\ref{SymPres} or to
the original paper~\cite{TRI08-JCP}.

\mbigskip

At this point, we aim to answer the following research question: {\it
  can we avoid to explicitly reconstruct at each time-step both
  convective, $\convc$, and diffusive, $\diffc(\diffv)$, matrices
  while still being able to compute proper eigenbounds in an
  inexpensive manner?}  To do so, let us firstly write the divergence
  operator, $\dive$, in terms of the cell-to-face,
  $\Tcs \in \real^{\vvlength \times \pvlength}$ and face-to-cell,
  $\Tsc \in \real^{\pvlength \times \vvlength}$, incidence matrices
\begin{equation}
\label{M_deconstuct}
\dive \equiv \Tsc \As \in \real^{\pvlength \times \vvlength},
\end{equation}
\noindent where $\As \in \real^{\vvlength \times \vvlength}$ is a
diagonal matrix containing the face surfaces. Moreover, recalling the
duality between the divergence and the gradient operators (see
Eq.~\ref{gradd})
\begin{equation}
\dive = - ( \vcvects \gradd )\traspose \hspace{3.69mm} \Longrightarrow \hspace{3.69mm} \gradd = - \vcvects^{-1} \dive\traspose , 
\end{equation}
\noindent together with the relation $\Tsc = \Tcs\traspose$ leads to
\begin{equation}
\gradd \equiv - \vcvects^{-1} \As \Tsc\traspose = - \DX^{-1} \Tcs ,
\end{equation}
\noindent where $\DX \equiv \vcvects \As^{-1} \in \real^{\vvlength \times \vvlength}$
is a diagonal matrix containing the projected distances, $\delta
n_{\nface} = | \normal_{\nface} \cdot \overrightarrow{\cpA \cpB} |$,
between the cell centers, $\cpA$ and $\cpB$, of the two cells adjacent
to a face, $f$ (see Figure~\ref{mesh}). Plugging all this into the
definition of the diffusive operator~(\ref{diff_oper}) leads to
\begin{equation}
\label{diff_oper2}
\diffc(\diffv) = - \Tsc \As \diffm \DX^{-1} \Tcs = - \Tsc \pdiffm \Tcs = - \Tcs\traspose \pdiffm \Tcs ,
\end{equation}
\noindent where the diagonal matrix $\pdiffm = \As \diffm \DX^{-1} \in
\real^{\vvlength \times \vvlength}$ has strictly positive diagonal
coefficients. Hence, the diffusive operator is symmetric and negative
semi-definite (see Theorem~\ref{symmetry} in~\ref{appendix_play})
likewise the continuous Laplacian, $\lapl$.

\mbigskip

Similarly, the convective term given in Eq.(\ref{conv_oper}) can be
written as follows
\begin{equation}
\label{conv_oper2}
\convc = \Tsc \Us \As \Sscal ,
\end{equation}
\noindent where the cell-to-face interpolation, $\Sscal$, defines the
numerical scheme we are using. For instance, taking
\begin{equation}
\Sscal^{\SP} = \frac{1}{2} | \Tcs | ,
\end{equation}
\noindent leads to a skew-symmetric matrix, \ie~$\convc = -\convctraspose$
that corresponds to the second-order symmetry-preserving
discretization~\cite{TRI08-JCP,VER03} (see Theorem~\ref{skew-symmetry}
in~\ref{appendix_play}). {Here $|\genmat|$ denotes the entry-wise
absolute value of a real-valued matrix, \ie~$[|\genmat|]_{ij} =
|[\genmat]_{ij}|$.} In summary, convective and diffusive operators
read
\begin{align}
\label{diff_oper3}
\diffc ( \diffv ) &= -\Tcs\traspose \pdiffm \hphantom{|}\Tcs\hphantom{|} \hspace{3.69mm} \text{where } \pdiffm \text{ is a diagonal matrix with } {[\diag(\pdiffm)]_{i} > 0} \hphantom{k} \forall i, \\
\label{conv_oper3}
2 \convc          &= \hphantom{-}\Tcs\traspose \Fs |\Tcs| \hspace{3.69mm} \text{where } \Fs \equiv \As \Us \text{ and } \diag(\Fs) \in ker(\Tcs\traspose) , 
\end{align}
\noindent where, in general, both diagonal matrices $\pdiffm$
(diffusive fluxes) and $\Fs$ (mass fluxes) change on time. Notice that
$\diag(\Fs) \in ker(\Tcs\traspose)$ follows from the incompressibility
constraint given in Eq.(\ref{discr_mass}) and the definition of the
divergence operator given in Eq.(\ref{M_deconstuct}).


\subsection{Eigenbounds for the diffusion matrix}

The idea at this point is to construct other matrices {with} the
same spectrum (except for the zero-valued eigenvalues). To do so, we
will use the following property:
\begin{theorem}
\label{same_eigenvalues}
Let $\istensor{A} \in \real^{\pvlength \times \vvlength}$ and
$\istensor{B} \in \real^{\vvlength \times \pvlength}$ be two
rectangular matrices and $\vvlength
\ge \pvlength$. Then, the square matrices $\istensor{A} \istensor{B}
\in \real^{\pvlength \times \pvlength}$ and $\istensor{A}\traspose
\istensor{B}\traspose \in \real^{\vvlength \times \vvlength}$ have the
same eigenvalues except for the zero-valued ones.
\end{theorem}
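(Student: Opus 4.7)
The plan is to prove the equivalent statement that $\istensor{A}\istensor{B}$ and $\istensor{B}\istensor{A}$ share the same nonzero eigenvalues, and then recover the form in the theorem by noting that a matrix and its transpose share the same spectrum. Indeed, $\istensor{A}\traspose \istensor{B}\traspose = (\istensor{B}\istensor{A})\traspose$, so $\istensor{A}\traspose \istensor{B}\traspose$ has identical eigenvalues to $\istensor{B}\istensor{A} \in \real^{\vvlength \times \vvlength}$. This reduces the problem to the classical ``commutation'' lemma for rectangular factorizations.

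The core argument is an eigenvector-transport argument. First I would take a nonzero eigenvalue $\lambda \neq 0$ of $\istensor{A}\istensor{B}$ with eigenvector $\velv \in \real^{\pvlength} \setminus \{\vecnull\}$, so $\istensor{A}\istensor{B}\velv = \lambda \velv$, and then define $\velw \equiv \istensor{B}\velv \in \real^{\vvlength}$. The key observation is that $\velw$ cannot be the zero vector: if $\istensor{B}\velv = \vecnull$, then $\istensor{A}\istensor{B}\velv = \vecnull = \lambda \velv$, which forces $\lambda = 0$ since $\velv \neq \vecnull$, contradicting the assumption. With $\velw$ nonzero in hand, a direct calculation
\begin{equation}
\istensor{B}\istensor{A}\velw = \istensor{B}\istensor{A}\istensor{B}\velv = \istensor{B}(\lambda \velv) = \lambda \velw
\end{equation}
shows that $\lambda$ is an eigenvalue of $\istensor{B}\istensor{A}$. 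Running the same argument with the roles of $\istensor{A}$ and $\istensor{B}$ swapped (mapping eigenvectors of $\istensor{B}\istensor{A}$ to eigenvectors of $\istensor{A}\istensor{B}$ via $\velv = \istensor{A}\velw$) gives the reverse inclusion, establishing that the sets of nonzero eigenvalues coincide.

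I do not foresee a true obstacle here, since the result is classical; the only subtlety to be careful about is the phrase ``except for the zero-valued ones.'' The hypothesis $\vvlength \ge \pvlength$ guarantees that the larger matrix $\istensor{B}\istensor{A} \in \real^{\vvlength \times \vvlength}$ has at least $\vvlength - \pvlength$ additional zero eigenvalues (its rank is at most $\pvlength$), so the spectra differ only by the multiplicity of the eigenvalue $0$. I would make this explicit in the statement of the proof to avoid the reader worrying about multiplicities of $\lambda = 0$. The only minor bookkeeping is to confirm that algebraic multiplicities of nonzero eigenvalues also agree, which can be deferred to the Sylvester-type identity $\lambda^{\vvlength} \det(\lambda \Identity_{\pvlength} - \istensor{A}\istensor{B}) = \lambda^{\pvlength}\det(\lambda \Identity_{\vvlength} - \istensor{B}\istensor{A})$ if needed, but for the subsequent application to eigenbound estimation the basic set-equality of nonzero eigenvalues suffices.
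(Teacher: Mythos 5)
Your proof is correct and follows essentially the same route as the paper's: reduce $\istensor{A}\traspose\istensor{B}\traspose$ to $\istensor{B}\istensor{A}$ via the transpose-invariance of the characteristic polynomial, then transport eigenvectors of $\istensor{A}\istensor{B}$ through $\istensor{B}$, using $\lambda\neq 0$ to guarantee $\istensor{B}\velv\neq\vecnull$. You are in fact slightly more thorough than the paper, which states only the forward inclusion and leaves the symmetric reverse argument and the multiplicity bookkeeping implicit.
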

\begin{proof}
A square matrix $\istensor{Q}$ and its transpose,
$\istensor{Q}\traspose$, have the same characteristic polynomial,
\ie~$\det(\lambda \Identity - \istensor{Q})=\det(\lambda \Identity -
\istensor{Q}\traspose)=0$; therefore, they also have the same
spectrum. Then, both $\istensor{A}\traspose \istensor{B}\traspose$ and
$\istensor{B} \istensor{A}$ have the same spectrum
\newcommand{\myrightarrow}{\hspace{2.69mm} \rightarrow \hspace{2.69mm}}
\newcommand{\myleftrightarrow}{\hspace{2.69mm} {\Leftrightarrow} \hspace{2.69mm}}
\begin{equation}
\istensor{A}\traspose \istensor{B}\traspose \velw_i = \lambda_i \velw_i \myleftrightarrow
\istensor{B} \istensor{A} \velz_i = \lambda_i \velz_i \hspace{6.69mm} \forall i \in \{1,\dots,\vvlength\} .
\end{equation}
\noindent Then, let $\lambda \neq 0$ be an eigenvalue of $\istensor{A}
\istensor{B}$ with an associated eigenvector $\velv$,
\begin{equation}
\istensor{A} \istensor{B} \velv= \lambda \velv \myrightarrow
\istensor{B} \istensor{A} (\istensor{B} \velv) = \lambda (\istensor{B} \velv) \myrightarrow
\istensor{B} \istensor{A} \velz = \lambda \velz .        
\end{equation}
\noindent Notice that $\istensor{B} \velv \neq \vecnull$ since $\lambda \neq
0$. Hence, $\lambda$ is a non-zero eigenvalue of $\istensor{B}
\istensor{A}$ and subsequently also an eigenvalue of
$\istensor{A}\traspose \istensor{B}\traspose$.  \qedhere
\end{proof}

\mbigskip

Therefore, a family of $\alpha$-dependent matrices with the same
spectrum (except for the zero-valued eigenvalues) as those given in
Eqs.(\ref{diff_oper3}) and~(\ref{conv_oper3}) can be constructed using
Theorem~\ref{same_eigenvalues}. Namely, matrix
\begin{align}
\label{alpha_def1}
- {(}\pdiffm^{\alpha} &\Tcs{)}\hphantom{|}{(}\Tcs\traspose\hphantom{|} \pdiffm^{1-\alpha}{)} && \text{(diffusive)}, 
\end{align}
\noindent has the same spectrum as {$-(\pdiffm^{\alpha} \Tcs)\traspose (\Tcs\traspose \pdiffm^{1-\alpha})\traspose = -\Tcs\traspose \pdiffm \Tcs$}.
Consequently,
\begin{equation}
\label{alpha_def2}
\rho ( \diffc ( \diffv ) ) = \rho ( \pdiffm^{\alpha} \Tcs \hphantom{|}\Tcs\traspose\hphantom{|} \pdiffm^{1-\alpha} ) .
\end{equation}
\noindent regardless of the values of $\alpha$.

\mbigskip

For instance, the following four matrices have the same spectrum
(except for the zero-valued eigenvalues)
\begin{equation}
\label{Diffs_same_spectrum}
\left\{
- \Tcs\traspose \pdiffm \Tcs \sepcomma -\Tcs \Tcs\traspose \pdiffm \sepcomma -\pdiffm^{1/2} \Tcs \Tcs\traspose \pdiffm^{1/2} \sepcomma -\pdiffm \Tcs \Tcs\traspose  
\right\} ,
\end{equation}
\noindent {where the last three correspond to values of
$\alpha=0$, $1/2$, and $1$ in Eq.(\ref{alpha_def2}), respectively.}
The advantage of the new forms is that only the matrix
$-\Tcs \Tcs\traspose$ has to be computed (once) and stored. Note that
this face-to-face matrix has {$-2$ in the diagonal and $\pm 1$ in
the non-zero off-diagonal elements, which correspond to the faces of
the two adjacent control volumes (see Eq.~\ref{Tcs_Tsc_def}
in~\ref{matrix_construction})}. Then, {to find an upper bound (in
absolute value) of the eigenvalues, we can apply the Gershgorin circle
theorem as follows}
\begin{subequations}
\begin{eqnarray}
\label{eigenboundD1}
\rho ( \diffc ( \diffv ) ) =& \rho(\Tcs \Tcs\traspose \pdiffm) &\le \max\{ | \Tcs \Tcs\traspose | \diag( \pdiffm ) \} , \\
\label{eigenboundD2}
\rho ( \diffc ( \diffv ) ) =& \rho(\pdiffm^{1/2} \Tcs \Tcs\traspose \pdiffm^{1/2}) &\le \max\{ \diag(\pdiffm^{1/2}) \Hadprod | \Tcs \Tcs\traspose | \diag( \pdiffm^{1/2} ) \} , \\
\label{eigenboundD3}
\rho ( \diffc ( \diffv ) ) =&  \rho(\pdiffm \Tcs \Tcs\traspose) &\le \max\{ \diag(\pdiffm) \Hadprod | \Tcs \Tcs\traspose | \vecones \} ,
\end{eqnarray}
\end{subequations}
\noindent where $\Hadprod$ denotes the Hadamard product (element-wise
product) {and $\vecones \in \real^{\vvlength}$ is a vector of
ones defined at the faces}. As stated above, these three forms
correspond to values of $\alpha=0$, $1/2$ and $1$ in
Eq.(\ref{alpha_def2}), respectively.
\begin{remark}
\label{remark_vcvectc}
In practice, we need estimations of the spectral radius of
$\vcvectc^{-1} \diffc ( \diffv )$ and not $\diffc ( \diffv )$. This can
be easily done by replacing $| \Tcs \Tcs\traspose |$ by $| \Tcs
\vcvectc^{-1} \Tcs\traspose |$ in Eqs.(\ref{eigenboundD1}),
(\ref{eigenboundD2}) and (\ref{eigenboundD3}). An equivalent remark
can be made for the forthcoming discussion about the convective
matrix, $\convc$.
\end{remark}


\subsection{Eigenbounds for the convective matrix}

The convective term given in Eq.(\ref{conv_oper3}) can be treated in a
similar manner. {Notice that} the
diagonal matrix $\Fs$ (mass fluxes across the faces) can take both
positive and negative values depending on the flow
direction. {In this case, matrices}
{
\begin{align}
\label{alpha_conv1}
{(}|\Fs|^{\alpha} &\Tcs{)} {(}| \Tcs\traspose | | \Fs |^{-\alpha} \Fs{)}
 && \text{(convective)}, \\
\label{alpha_conv2}
{(}|\Fs|^{\alpha-1} \Fs &\Tcs{)} {(}| \Tcs\traspose | | \Fs |^{1-\alpha}{)} && \text{(convective)}, 
\end{align}
}
\noindent {have the same spectrum as $\Tcs\traspose \Fs |\Tcs|$.
It must be noted that indeterminate forms $1/0$ may eventually occur
for $\alpha<0$ or $\alpha>1$ in Eqs.(\ref{alpha_conv1})
and~(\ref{alpha_conv2}) if a mass flux (diagonal terms of $\Fs$)
becomes zero.} {{Then, similarly} to
Eq.(\ref{Diffs_same_spectrum}), {the following} five
matrices have the same spectrum (except for the zero-valued
eigenvalues)}
\begin{equation}
\label{Convs_same_spectrum}
\left\{
\Tcs\traspose \Fs |\Tcs| \sepcomma  \Tcs |\Tcs\traspose| \Fs \sepcomma |\Fs|^{1/2} \Tcs |\Tcs\traspose| |\Fs|^{-1/2} \Fs \sepcomma |\Fs|^{-1/2} \Fs \Tcs |\Tcs\traspose| |\Fs| \sepcomma \Fs \Tcs |\Tcs\traspose| \right\} ,
\end{equation}
\noindent {where the last four correspond to values of $\alpha=0$,
$1/2$ in Eq.(\ref{alpha_conv1}), and $\alpha=1/2$ and $1$ in
Eq.(\ref{alpha_conv2}), respectively.} In the last four splittings,
only the matrix $\Tcs | \Tcs\traspose |$ has to be pre-computed and
stored. This matrix is skew-symmetric with $\pm 1$ in the non-zero
off-diagonal elements. Then, the Gershgorin circle theorem can be
applied as follows
\begin{subequations}
\begin{eqnarray}
\label{eigenboundC1}
2 \rho ( \convc) =& \rho(\Tcs |\Tcs\traspose| \Fs ) &\le \max\{ {\left| \Tcs |\Tcs\traspose|\right|} \diag( | \Fs | ) \} , \\
\label{eigenboundC2}
2 \rho ( \convc) =&\hspace{-3mm} \rho(|\Fs|^{\frac{1}{2}} \Tcs |\Tcs\traspose| |\Fs|^{-\frac{1}{2}} \Fs ) \hspace{-3mm} &\le \max\{ \diag(|\Fs|^{\frac{1}{2}}) \Hadprod {\left| \Tcs |\Tcs\traspose|\right|} \diag( |\Fs|^{-\frac{1}{2}} \Fs ) \} , \\
\label{eigenboundC3}
2 \rho ( \convc) =& \rho(\Fs \Tcs |\Tcs\traspose| ) &\le \max\{ \diag( | \Fs | ) \Hadprod {\left| \Tcs |\Tcs\traspose|\right|} \vecones \} , 
\end{eqnarray}
\end{subequations}
\noindent to find an upper bound of their eigenvalues, which, in this case,
lie on the imaginary axis. However, in practical flows, none of these
approaches is able to provide better (or, at least, similar estimates)
as applying the Gershgorin circle theorem directly to the matrix
$\convc$. A simple explanation for this is the following: matrix
${\left| \Tcs |\Tcs\traspose|\right|} \Fs$ has more non-zero
off-diagonal coefficients per row than matrix $\Tcs\traspose \Fs
|\Tcs|$, \eg~for a structured Cartesian mesh in $d$-dimensions, the
former has $2(2d-1)$ whereas the latter has only $2d$
non-zeros. Therefore, more mass fluxes (in absolute value) are
contributing to the calculation of the Gershgorin circle radii.

\mbigskip
\begin{theorem}[Perron--Frobenius theorem~\cite{PER1907,FRO1912}]
\label{Perron-Frobenius_theorem}
Given a real positive square matrix, \ie~$\istensor{A} \in
\real^{\pvlength \times \pvlength}$ and $[ \istensor{A} ]_{ij} > 0
\hphantom{k} \forall i,j$ , it has a unique largest (in magnitude)
real eigenvalue, $r \in \real^{+}$, with a corresponding eigenvector,
$\isvector{v} \in \real^{n}$, with strictly positive components, \ie
\begin{equation}
\label{Perron-Frobenius}
A \isvector{v} = r \isvector{v} \hspace{6.69mm} \Longrightarrow \hspace{3.69mm} |\lambda|<r \text{\hphantom{kk}and\hphantom{kk}} \isvector{v}_i > 0 \hphantom{kk} \forall{i \in \{1,\cdots,\pvlength\}} ,
\end{equation}
\noindent where $\lambda$ denotes any eigenvalue of $\istensor{A}$
except $r$, and $r$ is the so-called Perron--Frobenius eigenvalue.
\end{theorem}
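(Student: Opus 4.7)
The plan is to prove the Perron--Frobenius theorem in three stages: existence of a positive eigenpair via a compactness / fixed-point argument, a triangle-inequality bound $|\lambda| \le r$ for every eigenvalue, and the promotion to strict dominance together with uniqueness of the positive eigenvector. For existence, I would consider the normalized map $\isvector{x} \mapsto \istensor{A} \isvector{x} / \| \istensor{A} \isvector{x} \|_{1}$ on the standard simplex $\Sigma = \{ \isvector{x} \in \real^{\pvlength} : \isvector{x}_i \ge 0, \sum_{i} \isvector{x}_i = 1 \}$. Since $[\istensor{A}]_{ij} > 0$, the vector $\istensor{A} \isvector{x}$ has strictly positive entries for every $\isvector{x} \in \Sigma$, so the map is well-defined and continuous on a compact, convex set. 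Brouwer's fixed-point theorem then yields $\isvector{v} \in \Sigma$ with $\istensor{A} \isvector{v} = r \isvector{v}$, where $r = \| \istensor{A} \isvector{v} \|_{1} > 0$, and the strict positivity $\isvector{v}_i > 0$ follows automatically because at least one component of $\isvector{v}$ is positive and $\istensor{A} \isvector{v}$ is a positive linear combination of the columns of $\istensor{A}$.

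Next, I would bound every eigenvalue in magnitude by $r$. Given any eigenpair $\istensor{A} \isvector{w} = \lambda \isvector{w}$, applying the triangle inequality componentwise to $\lambda \isvector{w}_i = \sum_j [\istensor{A}]_{ij} \isvector{w}_j$ yields $|\lambda| | \isvector{w} | \le \istensor{A} | \isvector{w} |$ in the entry-wise sense, where $|\isvector{w}|$ denotes the componentwise modulus. To pass from this componentwise inequality to a scalar one I would introduce the left Perron eigenvector: applying the existence step to $\istensor{A}\traspose$ produces a strictly positive $\isvector{u}$ with $\istensor{A}\traspose \isvector{u} = r' \isvector{u}$, and the chain $r' \isvector{u}\traspose \isvector{v} = (\istensor{A}\traspose \isvector{u})\traspose \isvector{v} = \isvector{u}\traspose \istensor{A} \isvector{v} = r \isvector{u}\traspose \isvector{v}$ combined with $\isvector{u}\traspose \isvector{v} > 0$ forces $r' = r$. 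Pairing the componentwise bound with $\isvector{u}$ then gives $|\lambda| (\isvector{u}\traspose |\isvector{w}|) \le \isvector{u}\traspose \istensor{A} |\isvector{w}| = r (\isvector{u}\traspose |\isvector{w}|)$, whence $|\lambda| \le r$.

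The main obstacle, and the most delicate part, is the passage to \emph{strict} dominance $|\lambda| < r$ for $\lambda \ne r$ together with the uniqueness (up to scaling) of the positive eigenvector. For uniqueness, suppose $\isvector{v}'$ is a second strictly positive eigenvector for $r$ and set $t = \min_i \isvector{v}_i / \isvector{v}'_i > 0$; then $\isvector{v} - t \isvector{v}' \ge 0$ has at least one zero entry while $\istensor{A} ( \isvector{v} - t \isvector{v}' ) = r ( \isvector{v} - t \isvector{v}' )$, and the strict positivity of $\istensor{A}$ applied to a nonzero nonnegative vector would produce a strictly positive right-hand side, a contradiction unless $\isvector{v} = t \isvector{v}'$. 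For strict spectral dominance, the key observation is that equality in $| \sum_j [\istensor{A}]_{ij} \isvector{w}_j | = \sum_j [\istensor{A}]_{ij} |\isvector{w}_j|$ forces all terms $[\istensor{A}]_{ij} \isvector{w}_j$ with $j$ ranging over any row to share a common complex phase, precisely because every $[\istensor{A}]_{ij}$ is strictly positive; hence, if $|\lambda| = r$ then $\isvector{w}$ equals $e^{i\theta} |\isvector{w}|$ up to a unimodular factor, $|\isvector{w}|$ is a strictly positive Perron eigenvector (by uniqueness a scalar multiple of $\isvector{v}$), and consequently $\lambda = r$. I expect this equality-in-the-triangle-inequality step, combined with cleanly bookkeeping what ``largest in magnitude'' means when complex eigenvalues are allowed, to be where the real work lies; everything else is essentially a soft consequence of strict positivity of $\istensor{A}$ together with Brouwer.
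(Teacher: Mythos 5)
The paper never proves this statement: it is quoted verbatim as a classical result with citations to Perron (1907) and Frobenius (1912) and is used only as an ingredient (via Wielandt's theorem) in bounding the spectral radius of the convective operator. So there is no in-paper argument to compare yours against; what you have written is essentially the standard textbook proof of Perron's theorem for strictly positive matrices, and it is sound. The Brouwer fixed-point construction on the simplex, the identification $r'=r$ of the left and right Perron values via $\isvector{u}\traspose\isvector{v}>0$, the $\min_i \isvector{v}_i/\isvector{v}'_i$ argument for uniqueness of the positive eigenvector, and the equality-case-of-the-triangle-inequality argument for strict dominance are all correct. The one place where your write-up is thinner than it should be is the bridge from $|\lambda|=r$ to the componentwise equality $\istensor{A}|\isvector{w}|=r|\isvector{w}|$: you need to pair the entrywise inequality $r|\isvector{w}|\le\istensor{A}|\isvector{w}|$ with the strictly positive left eigenvector $\isvector{u}$ to conclude $\isvector{u}\traspose(\istensor{A}|\isvector{w}|-r|\isvector{w}|)=0$ and hence equality, after which $|\isvector{w}|>0$ and the common-phase argument go through as you describe. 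That step uses machinery you have already set up, so it is an omission of bookkeeping rather than of an idea. Two cosmetic remarks: the theorem as stated also asserts that $r$ is the \emph{unique} largest real eigenvalue, which your strict dominance $|\lambda|<r$ for $\lambda\ne r$ already delivers (you do not need, and do not claim, algebraic simplicity); and note that the hypothesis of strict positivity of every entry is essential to your argument, which is consistent with how the paper applies the theorem, namely to the zero-diagonal matrix $\diffFluxoff$ only after invoking Wielandt's comparison rather than to matrices with zero entries directly.
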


\begin{theorem}[Wielandt's theorem~\cite{GRA07}]
\label{Wielandt_theorem}
Given a matrix $\istensor{A} \in \real^{\pvlength \times \pvlength}$
that satisfies the conditions of the Perron--Frobenius theorem (see
Theorem~\ref{Perron-Frobenius_theorem}) and a matrix $\istensor{B} \in
\real^{\pvlength \times \pvlength}$ such as
\begin{equation}
\label{Wielandt}
| b_{ij} | \le a_{ij} \hspace{6.69mm} \forall i,j ,
\end{equation}
\noindent where $b_{ij} = [ \istensor{B} ]_{ij}$ and $a_{ij} = [
  \istensor{A} ]_{ij}$. Then, any eigenvalue $\lambda^{\istensor{B}}$
of $\istensor{B}$ satisfies the inequality $| \lambda^{\istensor{B}} |
\le r$ where $r$ is the Perron--Frobenius eigenvalue of $\istensor{A}$.
\end{theorem}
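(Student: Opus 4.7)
The plan is to reduce the general complex eigenvalue problem for $\istensor{B}$ to a componentwise inequality that can be tested against the Perron--Frobenius eigenvector of $\istensor{A}$. First I would take an arbitrary eigenpair $(\lambda^{\istensor{B}}, \isvector{y})$ of $\istensor{B}$ with $\isvector{y} \neq \vecnull$, and look at the identity $\istensor{B}\isvector{y} = \lambda^{\istensor{B}} \isvector{y}$ one row at a time. Applying the triangle inequality together with the hypothesis $|b_{ij}| \le a_{ij}$ yields
\begin{equation}
|\lambda^{\istensor{B}}|\,|y_i| \;=\; \Bigl|\sum_{j} b_{ij} y_j\Bigr| \;\le\; \sum_{j} |b_{ij}|\,|y_j| \;\le\; \sum_{j} a_{ij}\,|y_j| ,
\end{equation}
so that, writing $\isvector{z} \equiv (|y_1|,\ldots,|y_\pvlength|)\traspose \ge \vecnull$, we obtain the componentwise majorization $|\lambda^{\istensor{B}}|\,\isvector{z} \le \istensor{A}\isvector{z}$.

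Next I would leverage Perron--Frobenius applied to $\istensor{A}\traspose$, which also satisfies the hypotheses of Theorem~\ref{Perron-Frobenius_theorem} with the same spectral radius $r$, to produce a strictly positive left eigenvector $\isvector{u} > \vecnull$ of $\istensor{A}$, i.e.\ $\isvector{u}\traspose \istensor{A} = r\,\isvector{u}\traspose$. Left-multiplying the majorization by $\isvector{u}\traspose$ preserves the inequality since $\isvector{u} > \vecnull$, giving
\begin{equation}
|\lambda^{\istensor{B}}|\,\isvector{u}\traspose \isvector{z} \;\le\; \isvector{u}\traspose \istensor{A} \isvector{z} \;=\; r\,\isvector{u}\traspose \isvector{z}.
\end{equation}

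The conclusion $|\lambda^{\istensor{B}}| \le r$ then follows by dividing by the scalar $\isvector{u}\traspose \isvector{z}$, provided this quantity is strictly positive. This is the step I expect to be the only subtle point: $\isvector{z}$ is nonnegative but not obviously nonzero a priori; however, $\isvector{y} \neq \vecnull$ forces at least one $|y_i| > 0$, and since every component of $\isvector{u}$ is strictly positive by the Perron--Frobenius conclusion, the inner product $\isvector{u}\traspose \isvector{z} = \sum_i u_i |y_i|$ is a sum of nonnegative terms with at least one strictly positive summand, hence $\isvector{u}\traspose \isvector{z} > 0$. Dividing yields the claim, and since $\lambda^{\istensor{B}}$ was arbitrary, the bound holds for the entire spectrum of $\istensor{B}$.
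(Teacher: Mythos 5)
The paper does not prove this statement: Wielandt's theorem is quoted as a known classical result with a citation to~\cite{GRA07}, so there is no in-paper argument to compare against. Your proof is the standard one and is correct and complete: the row-wise triangle inequality gives the entrywise majorization $|\lambda^{\istensor{B}}|\,\isvector{z} \le \istensor{A}\isvector{z}$ with $\isvector{z}=|\isvector{y}|\ge\vecnull$, $\isvector{z}\neq\vecnull$; pairing with the strictly positive left Perron eigenvector of $\istensor{A}$ (obtained by applying Theorem~\ref{Perron-Frobenius_theorem} to $\istensor{A}\traspose$, which is also entrywise positive and shares the spectral radius $r$) turns this into $|\lambda^{\istensor{B}}|\,\isvector{u}\traspose\isvector{z}\le r\,\isvector{u}\traspose\isvector{z}$, and you correctly identify and dispose of the only delicate point, namely that $\isvector{u}\traspose\isvector{z}>0$ because $\isvector{u}>\vecnull$ and $\isvector{z}$ has at least one positive component. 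No gaps.
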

\begin{theorem}[Lemma 2 in Nikiforov~\cite{NIK07}]
\label{Nikiforov_theorem}
Let $\istensor{A} \in \real^{\pvlength \times \pvlength}$ be an
irreducible non-negative symmetric matrix and $\istensor{R} \in
\real^{\pvlength \times \pvlength}$ be the diagonal matrix of its
rowsums, $[\istensor{R}]_{ii} = \sum_{j=1}^{\pvlength} [ \istensor{A}
]_{ij}$. Then
\begin{equation}
\label{Nikiforov}
\rho \left( \istensor{R} + \frac{1}{\parNiki-1} \istensor{A} \right) \ge \frac{\parNiki}{\parNiki-1} \rho ( \istensor{A} ) ,  
\end{equation}
\noindent with equality holding if and only if all rowsums of
$\istensor{A}$ are equal.
\end{theorem}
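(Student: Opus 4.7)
The plan is to combine the Perron--Frobenius theorem with the Rayleigh quotient characterization of the spectral radius for symmetric matrices. Since $\istensor{A}$ is irreducible, non-negative and symmetric, Perron--Frobenius provides a strictly positive eigenvector $\isvector{v}\in\real^{\pvlength}$ with $\istensor{A}\isvector{v}=\rho(\istensor{A})\isvector{v}$. Because $\istensor{R}+\tfrac{1}{\parNiki-1}\istensor{A}$ is symmetric, its spectral radius (for $\parNiki>1$, so that the operator is non-negative and the largest eigenvalue equals the spectral radius) admits the Rayleigh bound
\begin{equation*}
\rho\!\left(\istensor{R}+\tfrac{1}{\parNiki-1}\istensor{A}\right) \;\ge\; \frac{\isvector{v}\traspose \left(\istensor{R}+\tfrac{1}{\parNiki-1}\istensor{A}\right)\isvector{v}}{\isvector{v}\traspose\isvector{v}} .
\end{equation*}
The first step is to evaluate the two quadratic forms using the Perron vector.

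Next I would compute directly $\isvector{v}\traspose\istensor{A}\isvector{v}=\rho(\istensor{A})\|\isvector{v}\|^{2}$ and $\isvector{v}\traspose\istensor{R}\isvector{v}=\sum_i v_i^{2}\,[\istensor{R}]_{ii}=\sum_{i,j}a_{ij}v_i^{2}$. The key step is to lower-bound the latter sum by $\rho(\istensor{A})\|\isvector{v}\|^{2}$. Using the symmetry $a_{ij}=a_{ji}$ to symmetrize the double sum and the elementary inequality $v_i^{2}+v_j^{2}\ge 2v_iv_j$, I obtain
\begin{equation*}
2\sum_{i,j}a_{ij}v_i^{2} \;=\; \sum_{i,j}a_{ij}(v_i^{2}+v_j^{2}) \;\ge\; 2\sum_{i,j}a_{ij}v_iv_j \;=\; 2\,\isvector{v}\traspose\istensor{A}\isvector{v} \;=\; 2\rho(\istensor{A})\|\isvector{v}\|^{2},
\end{equation*}
where non-negativity $a_{ij}\ge 0$ is what lets me drop the cross-term inequality inside the sum. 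Adding the two pieces yields $\isvector{v}\traspose(\istensor{R}+\tfrac{1}{\parNiki-1}\istensor{A})\isvector{v}\ge\tfrac{\parNiki}{\parNiki-1}\rho(\istensor{A})\|\isvector{v}\|^{2}$, which divided by $\|\isvector{v}\|^{2}$ gives the claimed bound.

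For the equality case, equality in the Rayleigh step forces $\isvector{v}$ to be an eigenvector of $\istensor{R}+\tfrac{1}{\parNiki-1}\istensor{A}$ associated to its largest eigenvalue, and equality in the inequality above forces $a_{ij}(v_i-v_j)^{2}=0$ for every pair $(i,j)$. Since $v_i>0$ by Perron--Frobenius, this means $v_i=v_j$ whenever $a_{ij}>0$. Irreducibility of $\istensor{A}$ means the associated graph is connected, so $\isvector{v}$ must be a positive constant vector; then $\istensor{A}\isvector{v}=\rho(\istensor{A})\isvector{v}$ forces every rowsum to equal $\rho(\istensor{A})$. Conversely, if all rowsums of $\istensor{A}$ coincide, then $\istensor{R}=\rho(\istensor{A})\,\Identity$ is scalar, the Perron eigenvector is constant, and the bound is attained with equality by direct substitution.

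The main obstacle I anticipate is the symmetrization trick that controls $\isvector{v}\traspose\istensor{R}\isvector{v}$ in terms of $\isvector{v}\traspose\istensor{A}\isvector{v}$: it is the only nontrivial step, and it is precisely where both the symmetry and the non-negativity of $\istensor{A}$ are used simultaneously. The rest of the argument --- the Rayleigh quotient lower bound and the equality analysis via irreducibility --- is essentially mechanical once that inequality is in place.
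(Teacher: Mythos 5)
Your proof is correct. Note that the paper itself does not prove this statement---it is quoted as Lemma~2 of Nikiforov~\cite{NIK07} and used as a black box---so there is no in-paper argument to compare against; your Rayleigh-quotient argument with the Perron vector (using symmetry to write $2\isvector{v}\traspose\istensor{R}\isvector{v}=\sum_{i,j}a_{ij}(v_i^2+v_j^2)\ge 2\isvector{v}\traspose\istensor{A}\isvector{v}$, and irreducibility plus $a_{ij}(v_i-v_j)^2=0$ to settle the equality case) is sound and is essentially the standard proof of Nikiforov's lemma. The only step you gloss over is the converse equality direction: when all rowsums equal $s$ one should note that $s=\rho(\istensor{A})$ because a non-negative irreducible matrix with a positive eigenvector has that eigenvalue as its Perron root, and that $\rho(\rho(\istensor{A})\Identity+\tfrac{1}{\parNiki-1}\istensor{A})$ equals the largest eigenvalue (not $|\lambda_{\min}|$) because the matrix is entrywise non-negative---both one-line remarks.
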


To circumvent this problem with the bounds of the spectral radius of
the convective term, $\convc$, we can use the Wielandt's theorem (see
Theorem~\ref{Wielandt_theorem}) to relate the spectral radius of the
matrices
\begin{equation}
\label{diff_C}
2 \convc \equiv \Tcs\traspose \Fs |\Tcs| \text{\hphantom{kkk}and\hphantom{kkk}} {\diffFlux} \equiv - \Tcs\traspose | \Fs | \Tcs,
\end{equation}
\noindent where $\convc$ is the same convective operator defined in
Eq.(\ref{conv_oper3}) and
  ${\diffFlux} \in \real^{\pvlength \times \pvlength}$
  is a diffusive-like operator where the face diffusivities are
  replaced by the magnitude of the mass fluxes, $|\Fs|$. The matrix
  $\convc$ is zero-diagonal whereas the matrix
  ${\diffFlux}$ has strictly negative
  diagonal coefficients. At this point, it is worth noticing that the
  off-diagonal elements of $2 \convc$ (in absolute value) and
  ${\diffFlux}$ are equal. Hence, the
  zero-diagonal matrix
\begin{equation}
\label{diff_C2}
{\diffFluxoff \equiv \diffFlux - \diag(\diag(\diffFlux)) = 2 | \convc |,}
\end{equation}
\noindent satisfies the conditions of the Perron--Frobenius theorem
(see Theorem~\ref{Perron-Frobenius_theorem}). Then, we can apply
Wielandt's theorem (Theorem~\ref{Wielandt_theorem}) since
\begin{equation}
\label{ineq_C_Dcoff1}
{2 | [ \convc ]_{ij} | \le [ \diffFluxoff ]_{ij} \hspace{3.69mm} \forall i,j \hspace{6.69mm} \Longrightarrow \hspace{6.69mm} 2 | \lambda^{\convmat} | \le \rho ( \diffFluxoff ) .}
\end{equation}
\noindent {where $\lambda^{\convmat}$ represents any eigenvalue of the matrix $\convc$}.
In our case, taking {$\istensor{R} = - \diag ( \diag
( \diffFlux ))$}, {$\istensor{A} = \diffFluxoff$}
and $\parNiki=2$ in Eq.(\ref{Nikiforov}) of
Theorem~\ref{Nikiforov_theorem} together with the
inequality~(\ref{ineq_C_Dcoff1}) leads to
\begin{equation}
\label{ineq_C_Dcoff2}
{\rho ( | \diffFlux | ) \stackrel{\text{Thm}~\ref{Nikiforov_theorem}}{\ge} 2 \rho ( \diffFluxoff ) \stackrel{(\ref{diff_C2})}{=} 4 \rho ( | \convc | ) \stackrel{(\ref{ineq_C_Dcoff1})}\ge 4 \rho ( \convc ) .}
\end{equation}
\noindent Recalling that the {\it leitmotiv} for all this analysis was to
avoid constructing the matrix $\convc$, it is obvious that relying on
the construction of another (similar in structure) matrix such
as {$| \diffFlux |$} would not
make much sense. At this point, we can make use of the following
properties of incidence and adjacency matrices {(see
Theorem~\ref{prop_incidence_matrix} in~\ref{appendix_play})}
\begin{align}
\label{InAd_prop1}
| \Tcs\traspose \Tcs | &= | \Tcs\traspose | | \Tcs | , \\
\label{InAd_prop2}
\left| \Tcs\traspose | \Fs | \Tcs \right| &= | \Tcs\traspose | | \Fs | | \Tcs | ,
\end{align}
\noindent to show that
\begin{equation}
\label{C_Dc_equality}
{\rho ( | \diffFlux | ) = \rho \left( \left| \Tcs\traspose | \Fs | \Tcs \right| \right) \stackrel{(\ref{InAd_prop2})}{=}
\rho ( | \Tcs\traspose | | \Fs | | \Tcs | ) \stackrel{\text{Thm}~\ref{same_eigenvalues}}{=} \rho ( | \Tcs | | \Tcs\traspose | | \Fs | ) \stackrel{(\ref{InAd_prop1})}{=}
\rho ( | \Tcs \Tcs\traspose | | \Fs | ) .}
\end{equation}
{Notice that identity~(\ref{InAd_prop1}) is just a particular case of
identity~(\ref{InAd_prop2}) with $\Fs = \Identity$.} {Then,
recalling the inequality~(\ref{ineq_C_Dcoff2}), we can finally show
that $\rho ( \convc )$ can be bounded with $\rho ( | \Fs |^{\alpha}
| \Tcs \Tcs\traspose | | \Fs |^{1-\alpha} )$,~\ie
\begin{equation}
\label{AlgEigConv1}
{\rho ( | \Fs |^{\alpha}   | \Tcs \Tcs\traspose | | \Fs |^{1-\alpha} ) \stackrel{\text{Thm}~\ref{same_eigenvalues}}{=} \rho ( | \Tcs \Tcs\traspose | | \Fs | ) \stackrel{(\ref{C_Dc_equality})}{=} \rho ( | \diffFlux | ) \stackrel{(\ref{ineq_C_Dcoff2})}{\ge} 4 \rho ( \convc ) ,}
\end{equation}
\noindent  regardless of the value of $\alpha$. Let us remind that
indeterminate $1/0$ forms may eventually occur for $\alpha<0$ or
$\alpha>1$ if a mass flux (diagonal terms in $\Fs$) becomes zero.}
\begin{remark}
\label{remark_upwind}
In case the discrete convective term is not skew-symmetric, the method
can be easily adapted as follows: imaginary contributions still come
from $\convc$ whereas negative real-valued contributions are added to
the diffusive term by replacing
\begin{equation}
\pdiffm \longrightarrow \pdiffm + \frac{1}{2} \diag ( | \Fs | ( \vecones - \UPblend ) )
\end{equation}
\noindent where $\UPblend \in \real^{\vvlength}$ is a vector that defines
the local blending factor between
symmetry-preserving {($\scaUPblend=1$)} and upwind
schemes {($\scaUPblend=0$). For details,
see~\ref{appendix_upwind}.}
\end{remark}

\begin{figure}[!t]
  \centering{
    \raisebox{-0.5\height}{\includegraphics[height=0.47\textwidth,angle=0]{./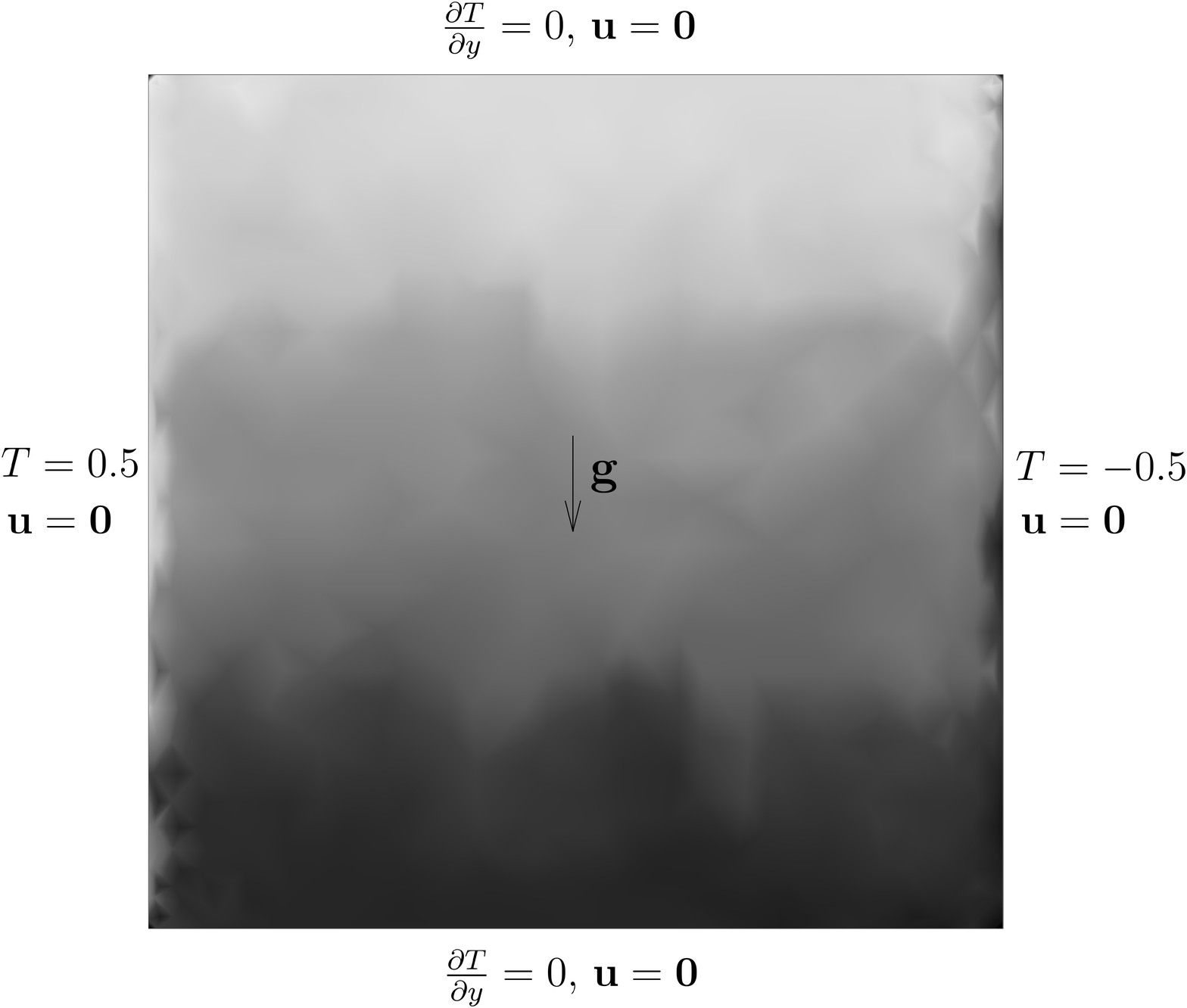}}
    \hspace{3mm}
    \raisebox{-0.5\height}{\includegraphics[height=0.40\textwidth,angle=0]{./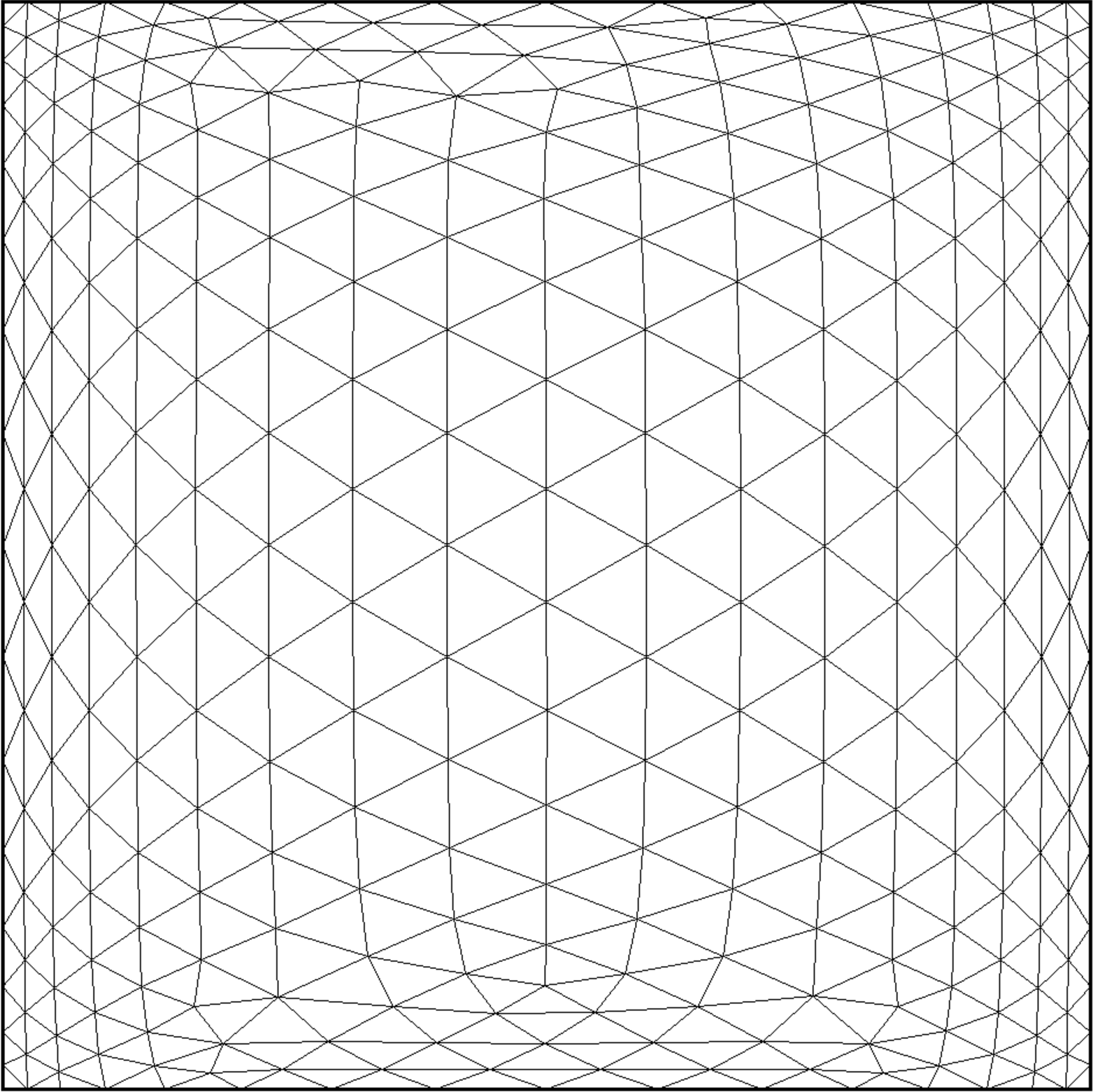}}
  }
  \caption{Two-dimensional air-filled ($\PR=0.71$) differentially
    heated cavity at $\Ra=10^{9}$ in a square domain. Left: schema of
    the flow configuration together with a flow visualization of the
    temperature field corresponding to the statistically steady
    state. Right: unstructured mesh used for the present tests. It is
    composed of $565$ triangular elements stretched to the walls.}
\label{schema_DHC}
\end{figure}

\begin{figure}[!t]
  \centering{
    \includegraphics[height=0.69\textwidth,angle=-90]{./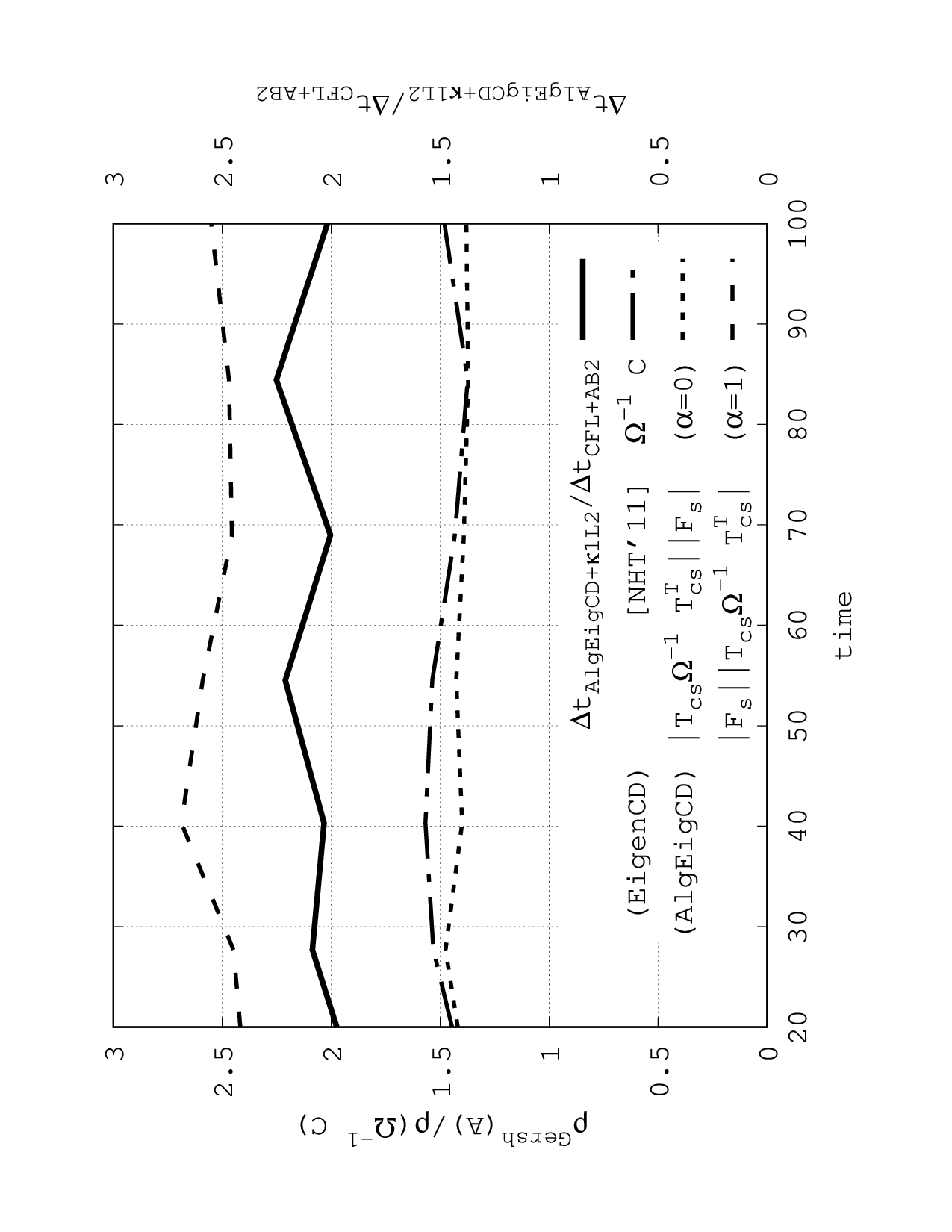}
    \includegraphics[height=0.69\textwidth,angle=-90]{./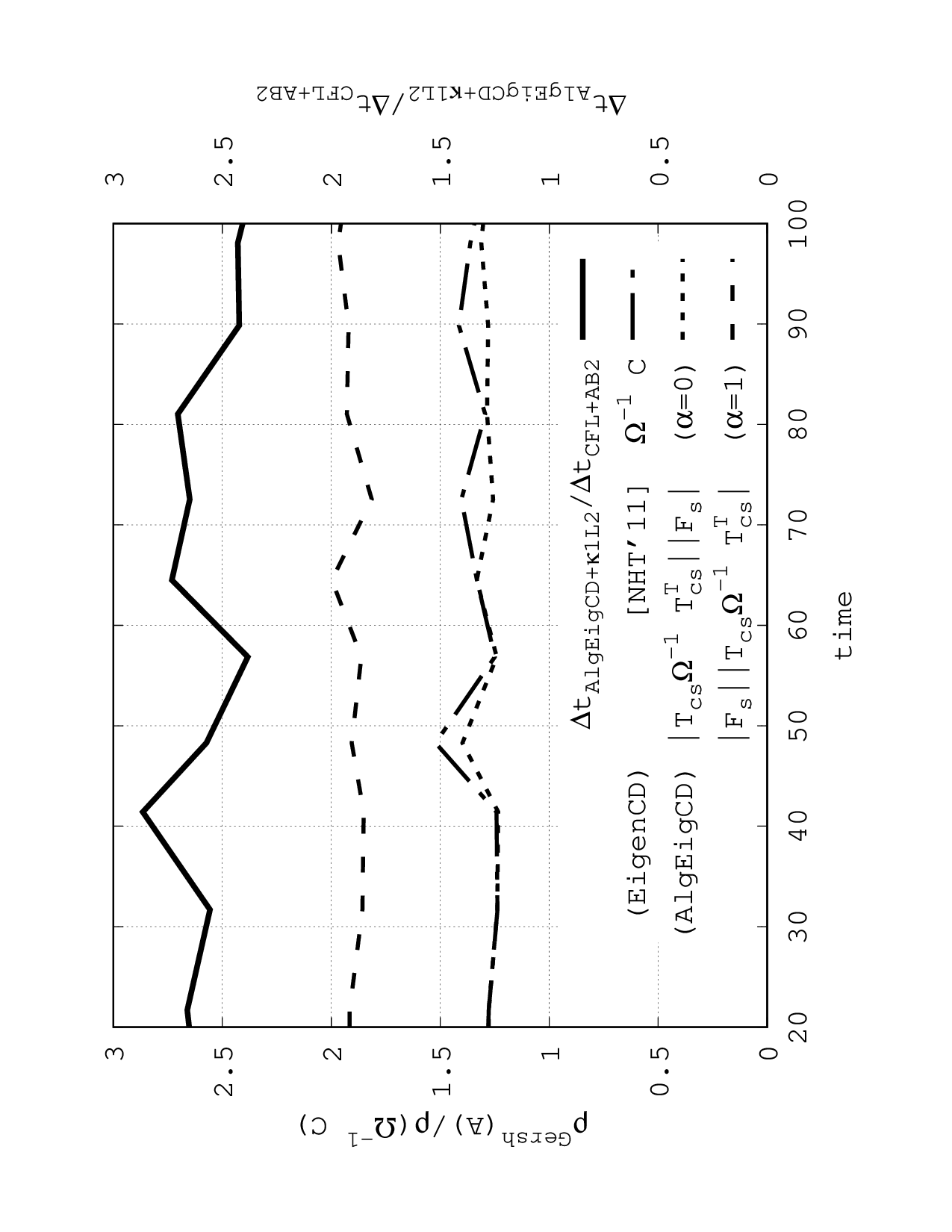}
  }
\caption{Numerical results obtained for the two-dimensional air-filled
  differentially heated cavity displayed in Figure~\ref{schema_DHC}
  using a Cartesian stretched mesh with $23 \times 23 = 529$~control
  volumes (top) and an unstructured mesh composed of $565$~triangular
  elements (bottom). {Notice that for this particular case,
    \ie~estimation of the spectral radius of $\vcvectc^{-1} \conv$
    with a second-order symmetry-preserving discretization, the
    estimations of the \EigenCD~method are exactly the same as those
    given by the discretization-agnostic approach given in
    Eq.(\ref{CFL_ANSYS-Fluent}).}}
\label{results_DHC}
\end{figure}

\begin{table}
\begin{center}
  \resizebox{\columnwidth}{!}{
    \begin{tabular}{l|ccc|ccc|c}
  & $N_x$ & $N_y$ & $N_z$ & $\avgtime{\angle}/{(\pi/2)}$ & $\avgtime{\dt_{\CFLAB}}$ & $\avgtime{\dt_{\NewMeth}}$ & $\frac{\avgtime{{\dt}_{\NewMeth}}}{\avgtime{\dt_{\CFLAB}}}$  \\
\hline
RBC1e8-MeshA++      & $800$  & $416$ & $416$ & $0.251$ & $6.65 \times 10^{-3}$ & $1.29 \times 10^{-2}$  & $1.94$   \\
RBC1e8-MeshA+       & $576$  & $296$ & $296$ & $0.390$ & $1.24 \times 10^{-2}$ & $2.15 \times 10^{-2}$  & $1.73$   \\
RBC1e8-MeshA (DNS)  & $400$  & $208$ & $208$ & $0.499$ & $2.21 \times 10^{-2}$ & $3.59 \times 10^{-2}$  & $1.63$   \\
RBC1e8-MeshB        & $288$  & $144$ & $144$ & $0.606$ & $3.53 \times 10^{-2}$ & $5.84 \times 10^{-2}$  & $1.66$   \\
RBC1e8-MeshC        & $200$  & $104$ & $104$ & $0.696$ & $5.08 \times 10^{-2}$ & $8.49 \times 10^{-2}$  & $1.67$   \\
RBC1e8-MeshD        & $144$  & $76$  & $76$  & $0.777$ & $6.38 \times 10^{-2}$ & $1.20 \times 10^{-1}$  & $1.88$   \\
RBC1e8-MeshE        & $100$  & $52$  & $52$  & $0.852$ & $8.67 \times 10^{-2}$ & $1.88 \times 10^{-1}$  & $2.17$   \\
\hline                                                                                                 
RBC1e10-MeshA (DNS) & $1024$ & $768$ & $768$ & $0.716$ & $4.02 \times 10^{-4}$ & $7.72 \times 10^{-4}$ & $1.92$     \\
RBC1e10-MeshB       & $768$  & $544$ & $544$ & $0.790$ & $5.76 \times 10^{-4}$ & $1.12 \times 10^{-3}$ & $1.94$      \\
RBC1e10-MeshC       & $512$  & $384$ & $384$ & $0.846$ & $8.46 \times 10^{-4}$ & $1.70 \times 10^{-3}$ & $2.01$     \\
RBC1e10-MeshD       & $384$  & $270$ & $270$ & $0.889$ & $1.22 \times 10^{-3}$ & $2.52 \times 10^{-3}$ & $2.06$     \\
RBC1e10-MeshE       & $256$  & $192$ & $192$ & $0.920$ & $1.69 \times 10^{-3}$ & $3.81 \times 10^{-3}$ & $2.25$     
    \end{tabular}
    }
\end{center}
\caption{Tests for the air-filled ($\PR=0.7$) Rayleigh-B\'{e}nard
  convection at Rayleigh numbers $\Ra=10^{8}$ and $10^{10}$ using
  Cartesian meshes stretched towards the walls. Meshes RBC1e8-MeshA
  and RBC1e10-MeshA were respectively used in
  Refs.~\cite{DABTRI15-TOPO-RB,DABTRI19-3DTOPO-RB} to carry out DNS
  simulations.}
\label{meshes_RBC}
\end{table}

\begin{figure}[!t]
  \centering{
    \includegraphics[height=0.47\textwidth,angle=0]{./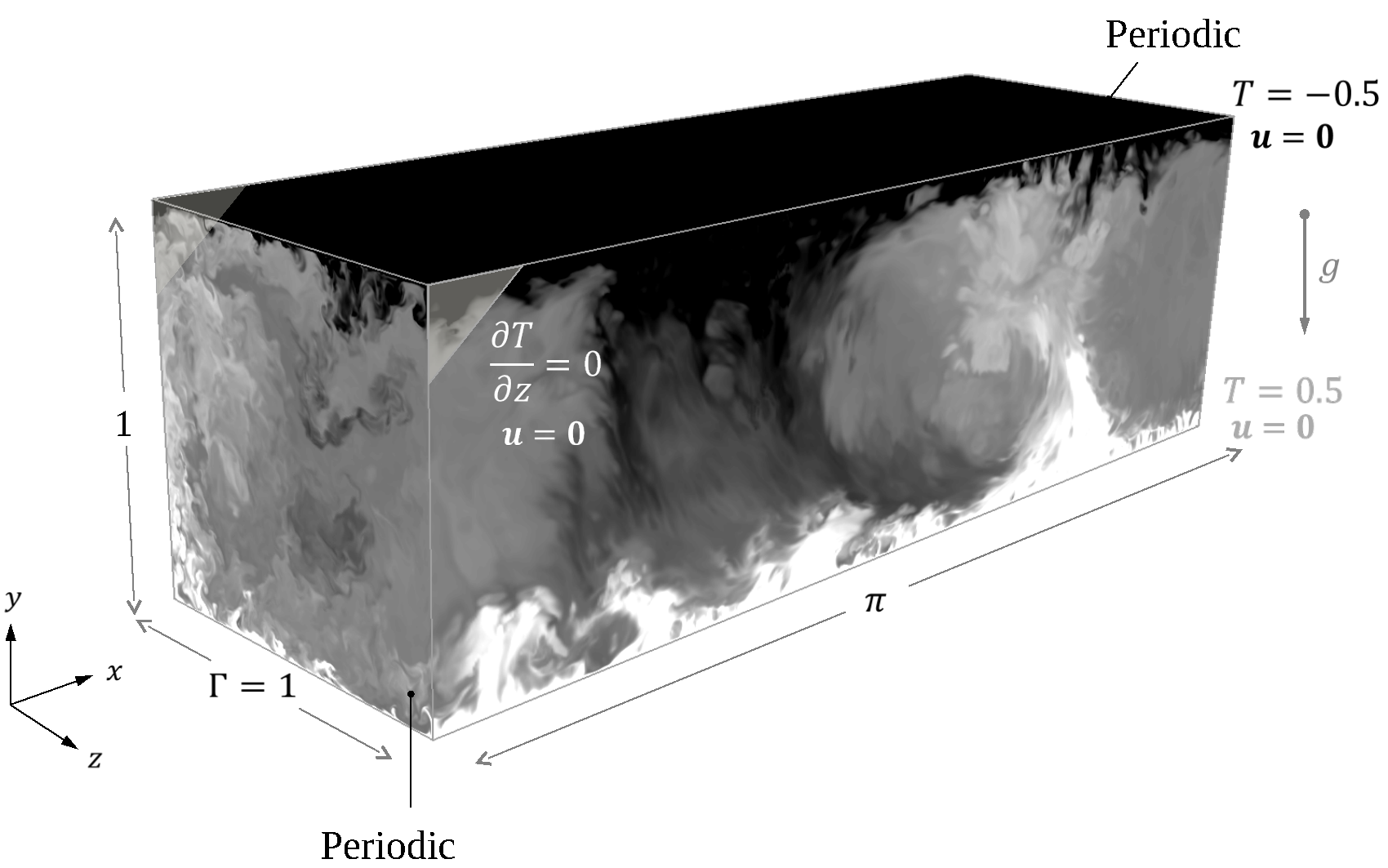}
  }
  \caption{Schema of the Rayleigh-B\'{e}nard configuration studied
    displayed together with an instantaneous temperature field
    corresponding to the air-filled ($\PR=0.7$) DNS (mesh
    RBC1e10-MeshA in Table~\ref{meshes_RBC}) at
    $Ra=10^{10}$~\cite{DABTRI15-TOPO-RB,DABTRI19-3DTOPO-RB}.}
\label{schema_RBC}
\end{figure}

\begin{figure}[!t]
  \centering{
    \includegraphics[height=0.69\textwidth,angle=-90]{./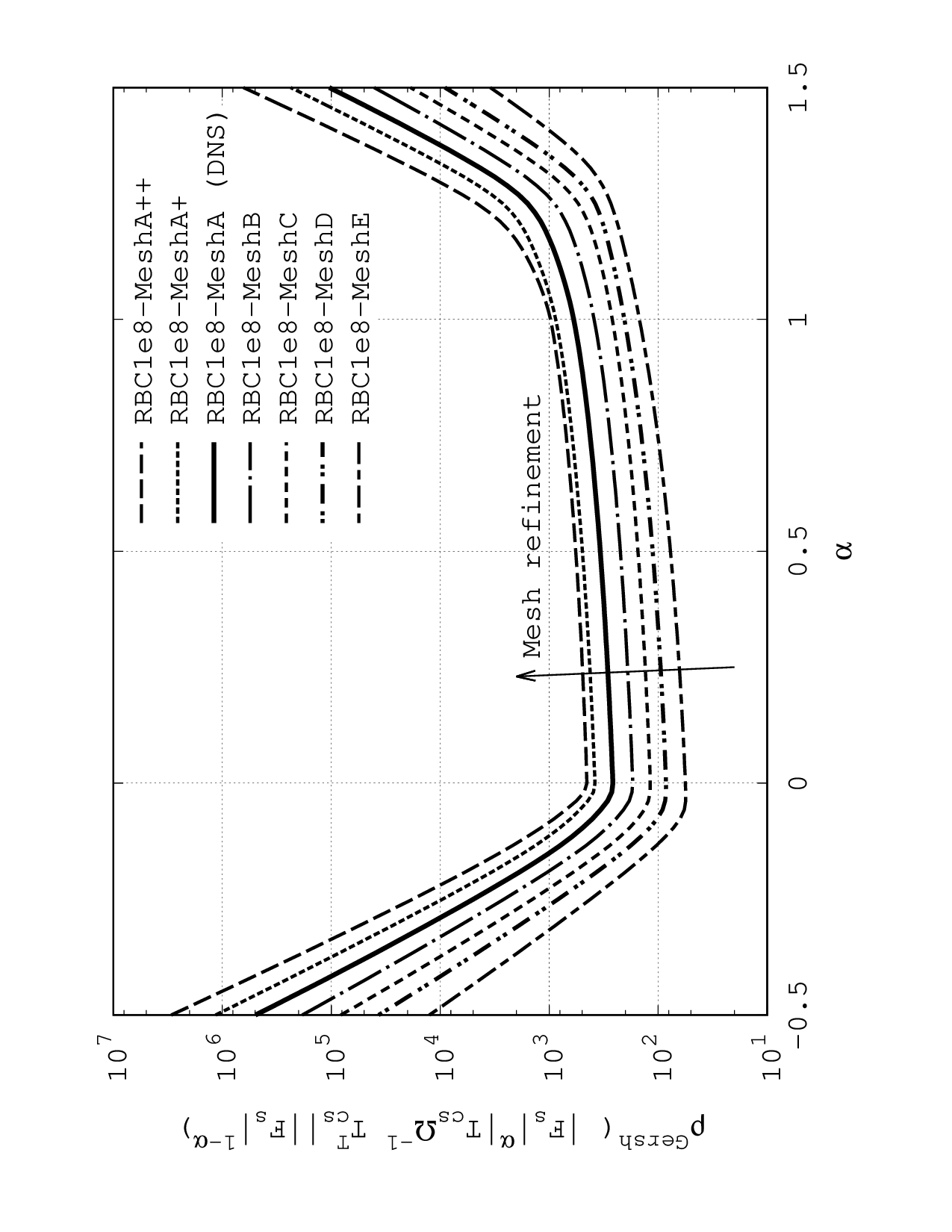}
    \includegraphics[height=0.69\textwidth,angle=-90]{./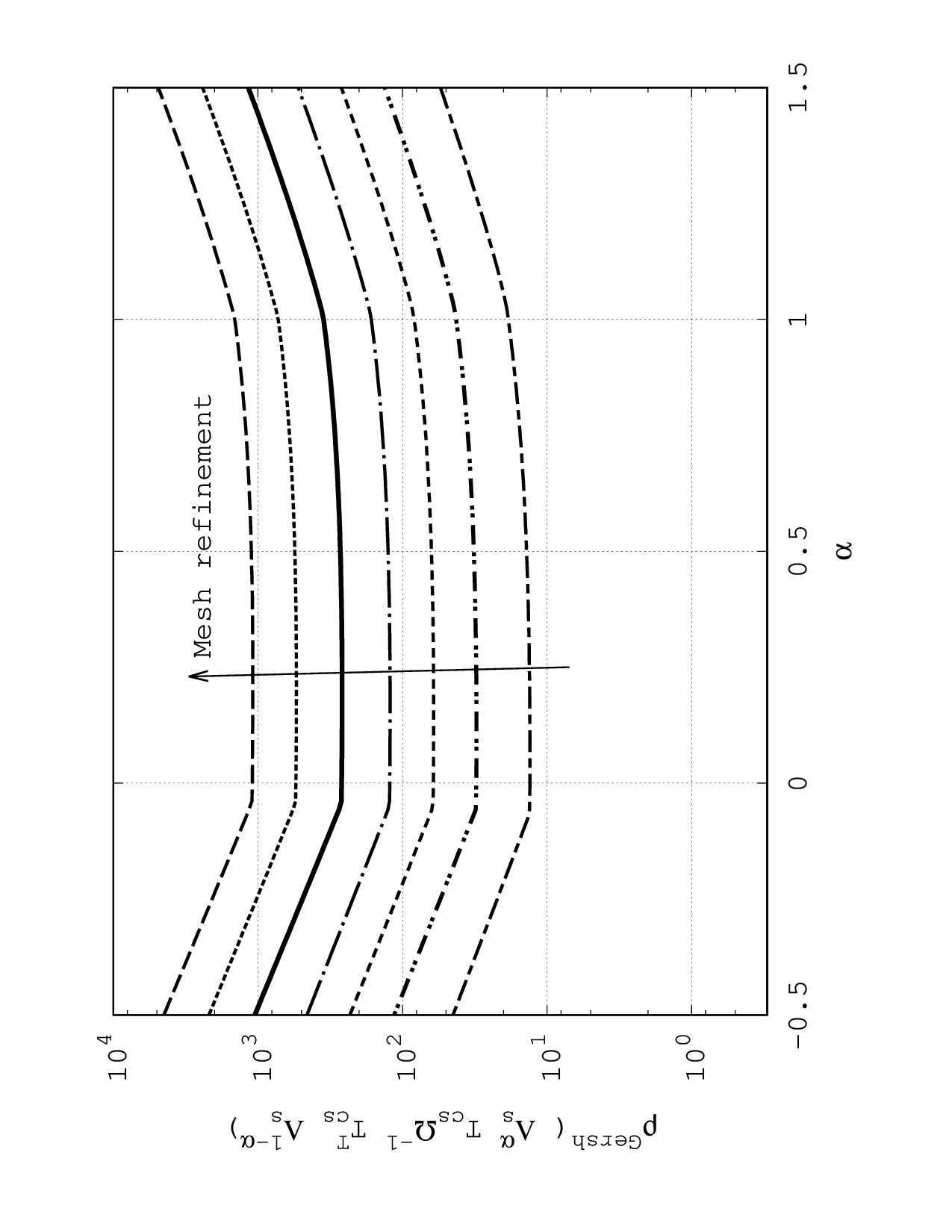}
  }
\caption{Numerical results obtained for the air-filled Rayleigh
  B\'{e}nard configuration displayed in Figure~\ref{schema_RBC} at
  $\Ra=10^{8}$. Eigenbounds for the convective (top) and diffusive
  (bottom) operators using the $\alpha$-dependent expressions given in
  Eqs.(\ref{AlgEigConv2}) and~(\ref{AlgEigDiff2}). Results correspond
  to the statistically steady state and have been averaged over
  time. Details of the meshes are in Table~\ref{meshes_RBC}.}
\label{results_RBC_1e8}
\end{figure}

\begin{figure}[!t]
  \centering{
    \includegraphics[height=0.69\textwidth,angle=-90]{./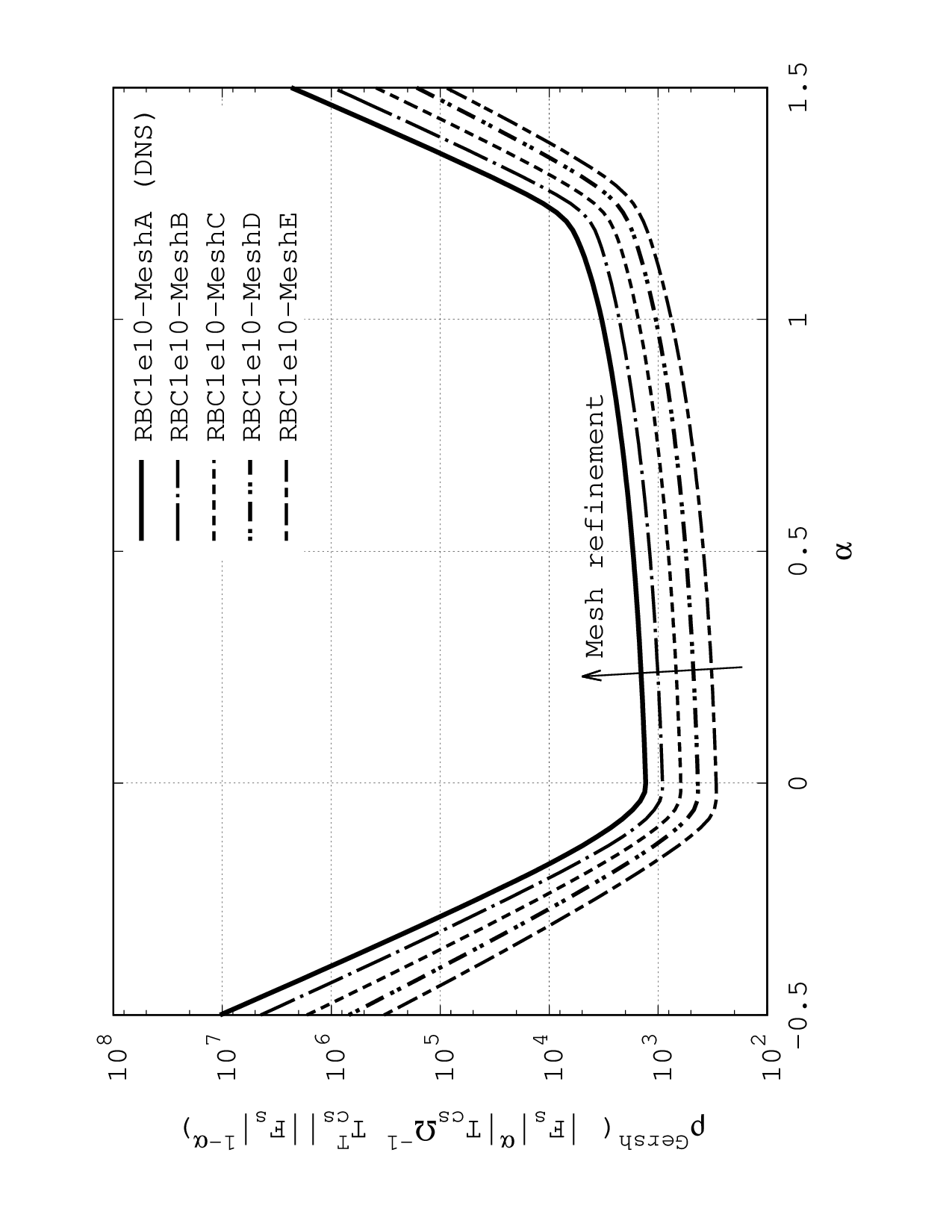}
    \includegraphics[height=0.69\textwidth,angle=-90]{./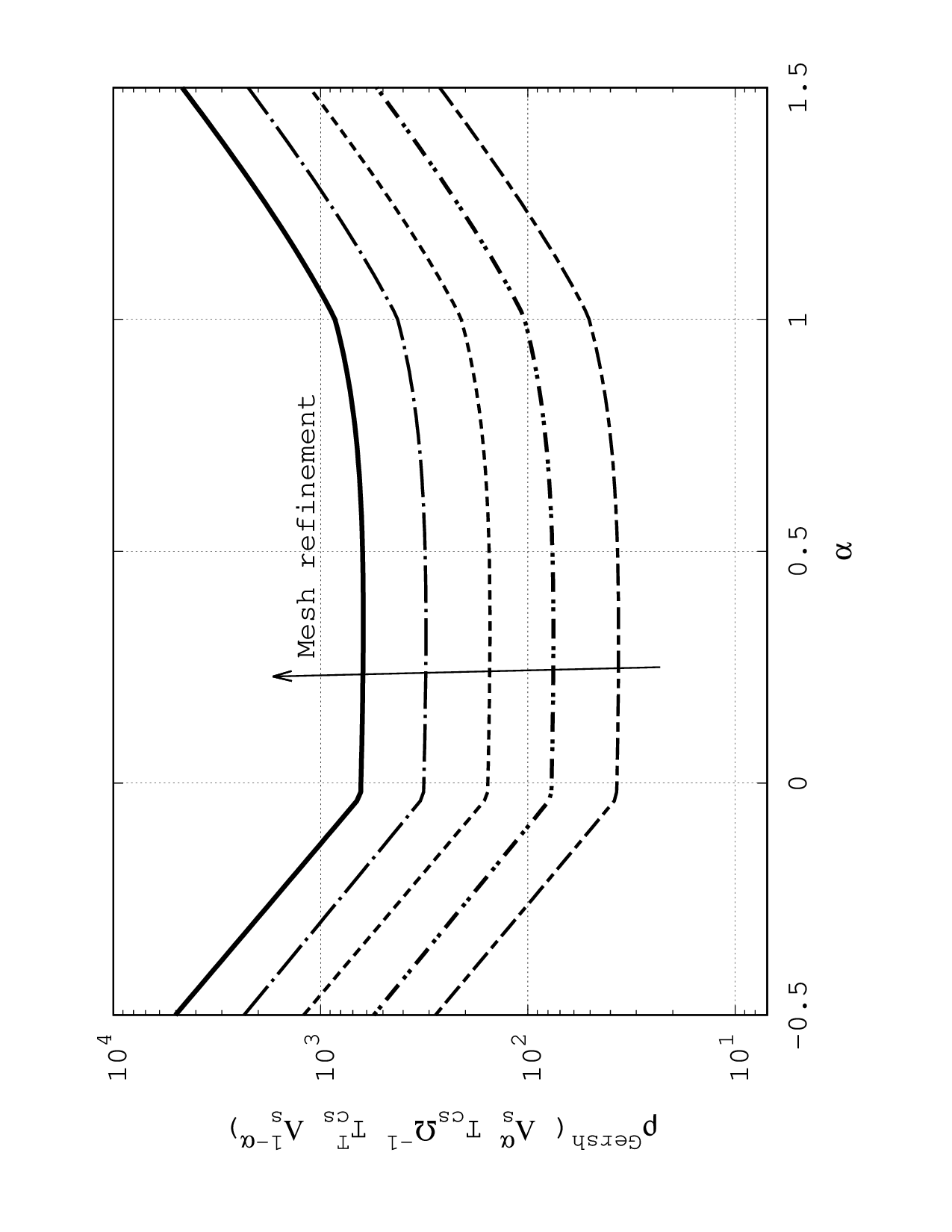}
  }
\caption{Same as in Figure~\ref{results_RBC_1e8} but at $\Ra=10^{10}$.}
\label{results_RBC_1e10}
\end{figure}

\begin{figure}[!t]
  \centering{
    \includegraphics[height=0.69\textwidth,angle=-90]{./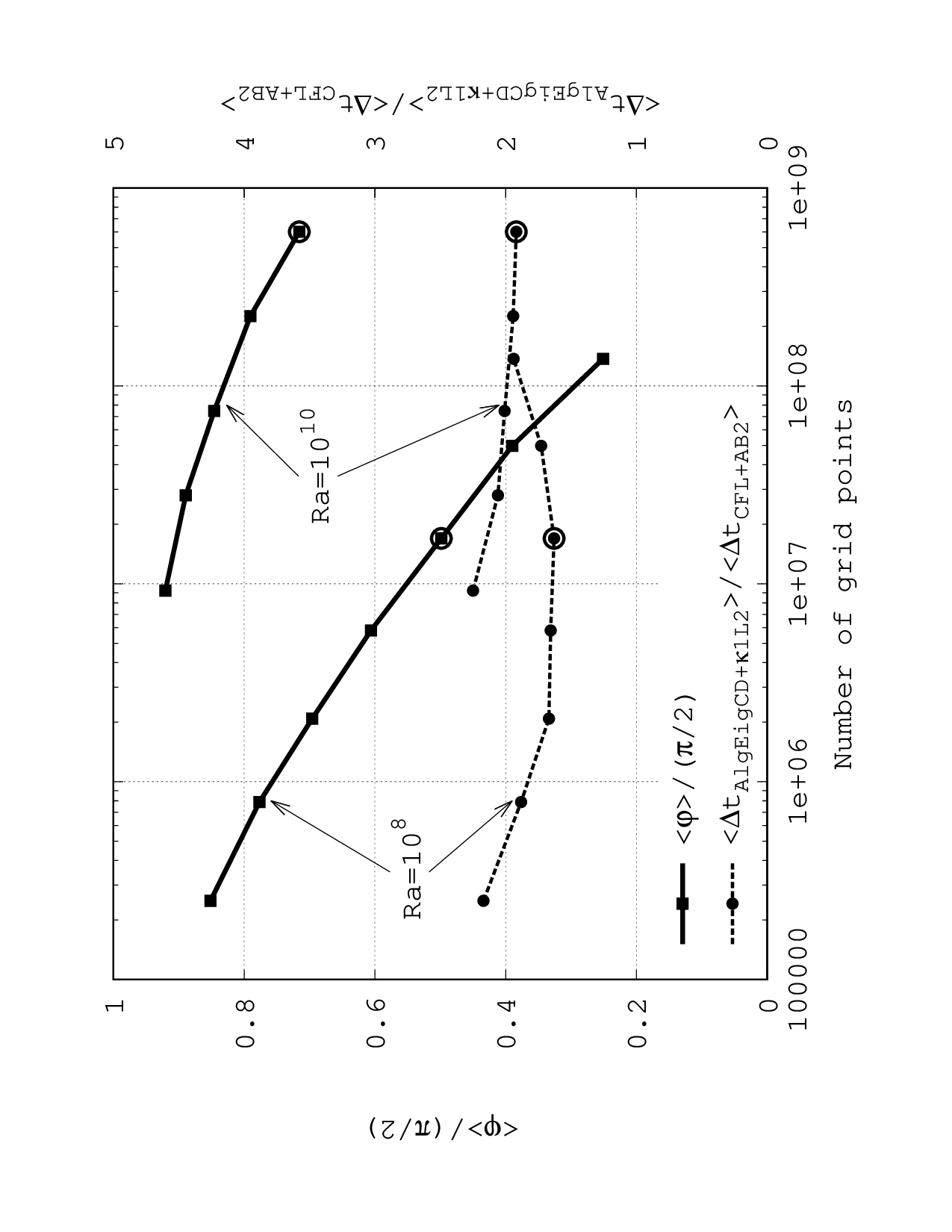}
  }
\caption{Numerical results obtained for the air-filled Rayleigh
  B\'{e}nard configuration displayed in Figure~\ref{schema_RBC} at
  $\Ra=10^{8}$ and $10^{10}$. Average $\angle$ and ratio between
  {the time-step $\Dt^{\NewMeth}$, computed using the
    \AlgEigCD~method in conjunction with the self-adaptive
    $\timeintparam 1L2$ time-integration method
    (see~\ref{SAT_summary}), and $\Dt^{\CFLAB}$ obtained} with the
  more {classical} CFL condition given in
  Eq.(\ref{CFL_code}). Meshes correspond to those shown in
  Table~\ref{meshes_RBC}.}
\label{results_RBC_1e8_1e10}
\end{figure}


\section{Numerical tests}

\label{results}

{Shortly, eigenbounds of convective, $\convc$, and diffusive,
  $\diffc$, matrices can be respectively computed using the
  inequality~(\ref{AlgEigConv1}) and Eq.(\ref{alpha_def2}).  Then, recalling
  Remark~\ref{remark_vcvectc}, eigenbounds of matrices $\vcvectc^{-1}
  \convc$ and $\vcvectc^{-1} \diffc$ can be computed as follows
\begin{align}
\nonumber \rho ( \vcvectc^{-1} \convc ) \stackrel{(\ref{AlgEigConv1})}{\le}& \frac{1}{4} \rho ( | \Fs |^{\alpha} |\Tcs \vcvectc^{-1} \Tcs\traspose | | \Fs |^{1-\alpha} ) \le \frac{1}{4} \rho^{Gersh} ( | \Fs |^{\alpha} |\Tcs \vcvectc^{-1} \Tcs\traspose | | \Fs |^{1-\alpha} ) = \\
\label{AlgEigConv2}                          =& \frac{1}{4} \max \{ \diag ( | \Fs |^{\alpha} ) \Hadprod |\Tcs \vcvectc^{-1} \Tcs\traspose | \diag ( | \Fs |^{1-\alpha} ) \} , \\
\nonumber \rho ( \vcvectc^{-1} \diffc ( \diffv ) ) \stackrel{(\ref{alpha_def2})}{=}& \hphantom{\frac{1}{4}} \rho(\pdiffm^{\alpha} \Tcs \vcvectc^{-1} \Tcs\traspose \pdiffm^{1-\alpha}) \le \rho^{Gersh} (\pdiffm^{\alpha} \Tcs \vcvectc^{-1} \Tcs\traspose \pdiffm^{1-\alpha}) = \\
\label{AlgEigDiff2}                          =& \hphantom{\frac{1}{4}} \max\{ \diag(\hphantom{|}\pdiffm^{\alpha}\hphantom{|}) \Hadprod | \Tcs \vcvectc^{-1} \Tcs\traspose | \diag(\hphantom{|}\pdiffm^{1-\alpha}\hphantom{|}) \} ,
\end{align}
\noindent where $\rho^{Gersh} ( \genmat ) \ge \rho ( \genmat )$ refers
to the eigenbound obtained applying the Gershgorin circle theorem to
matrix $\genmat$. These inequalities rely on the construction of the
matrix $|\Tcs \vcvectc^{-1} \Tcs\traspose |$
(see~\ref{matrix_construction}), which can be done in a pre-processing
stage.}

\mbigskip

In this section, the performance of this methodology is tested and
compared with {our previous \EigenCD~method proposed
  in~\cite{TRI08-JCP2}. Notice that for a second-order
  symmetry-preserving discretization, the estimations of the spectral
  radius of $\vcvectc^{-1} \conv$ given by the \EigenCD~method exactly
  collapses to Eq.~\ref{CFL_ANSYS-Fluent}, which is used in several
  codes (see Section~\ref{CFLhistory}, for details). Apart from this,
  we also compare with} a {classical} CFL criterion
given by
\begin{align}
\label{CFL_code}
\Dt_{CFL} = \min \left\{ \frac{C_{C}}{\lambda_{CFL}^{\convmat}}, \frac{C_{D}}{\lambda_{CFL}^{\diff}} \right\}  \hspace{3mm} \text{where} \hspace{3mm} \lambda_{CFL}^{\convmat} = \max_{\nface} \left\{ \frac{\facevel}{\delta_{\nface}} \right\} \hspace{2mm} \text{and} \hspace{2mm} \lambda_{CFL}^{\diff} = \max_{\nface} \left\{ \frac{{4}\nu_\nface}{\Ndim \delta_{\nface}^2} \right\} ,
\end{align}
\noindent where $\Ndim$ is the number of spatial directions and the
values of $C_{C}$ and $C_{D}$ are set to $0.35$ and 
{0.8}, respectively. These values were used in combination with an
AB2 scheme in the first versions of our {in-house STG
}code~\cite{SORTRI03} {to guarantee that all the eigenvalues lie
  inside the stability region (see Figure~\ref{stab_region}, top)
  regardless of the flow conditions}. {We opt to keep this
  comparison with this less accurate approach since it still remains a
  rather common practice within the CFD community and, as mentioned in
  Section~\ref{CFLhistory}, several popular codes (also many in-house
  codes) are still using it (or very similar approaches).}

\mbigskip

Firstly, we want to measure the actual accuracy of the method to
compute eigenbounds using Eqs.~(\ref{AlgEigConv2})
and~(\ref{AlgEigDiff2}) and compare these results with the exact
spectral radii $\rho(\vcvectc^{-1} \convc)$ and $\rho(\vcvectc^{-1}
\diffc ( \diffv ))$, respectively. The latter are computed using a
singular value decomposition (SVD), which strongly limits the mesh
size. To test this, we have considered a two-dimensional air-filled
(Prandtl number, $\PR=0.71$) differentially heated cavity in a square
domain at Rayleigh number equal to $\Ra=10^{9}$ (for details of this
flow configuration see, for instance,
Refs.~\cite{TRI07-IJHMT-I,TRI07-IJHMT-II}). {Two different meshes
  have been used. Namely, a structured Cartesian mesh with $23 \times
  23 = 529$ control volumes with a stretching towards the walls given
  by the following hyperbolic-tangent function
\begin{equation}
\x_{i} = \frac{L}{2} \left( 1 + \frac{\tanh \left\{ \gammax ( 2 ( i - 1 ) / \Nx - 1) \right\}}{\tanh \gammax}  \right) \hspace{3.69mm} i \in \{ 1, \dots , \Nx + 1 \},
\end{equation}
\noindent where $L$ is the domain size, $\Nx$ is the number of control
volumes in the $\x$-direction and the concentration parameter is set
to $\gammax=1.5$ (also for the $\y$-direction). The second mesh is an
unstructured mesh composed of $565$~triangular elements with an
equivalent stretching (see Figure~\ref{schema_DHC}{, right}).}  It
goes without saying that this mesh resolution is insufficient to
properly resolve this configuration. Nevertheless, it is remarkable
that the symmetry-preserving discretization outlined in
Section~\ref{SymPres} remains stable even with such coarse
meshes. Figure~\ref{results_DHC} displays the results obtained with
these two meshes{ using the in-house
  UMC-code~\cite{TRI08-JCP,TRI12-DmuS}}. Among all the possible values
of $\alpha$ in Eqs.(\ref{AlgEigConv2}) and~(\ref{AlgEigDiff2}) only
results with $\alpha=0$ and $\alpha=1$ are shown together with the
eigenbounds provided by the \EigenCD~method proposed
in~\cite{TRI08-JCP2}. Trends are similar for both cases, showing that
the eigenbounds obtained with $|\Tcs \vcvectc^{-1} \Tcs\traspose | |
\Fs |$ ($\alpha=0$) are significantly better than with $| \Fs | |\Tcs
\vcvectc^{-1} \Tcs\traspose | $ ($\alpha=1$) and slightly better that
those obtained with the \EigenCD~method. Notice that the first method
($\alpha = 0$) is labeled as \AlgEigCD~indicating that this will be
the preferred option at the end. This dependence on $\alpha$ is
analyzed in more detail in the next two test-cases. Furthermore,
Figure~\ref{results_DHC} also shows the ratio between the time-step
$\Dt_{\NewMeth}$ obtained with this \AlgEigCD~method in combination
with the self-adaptive second-order time-integration scheme
$\timeintparam$1L2 (see~\ref{SAT_summary}, for details) and the
time-step $\Dt_{\CFLAB}$ obtained with the CFL condition given in
Eq.(\ref{CFL_code}) in conjunction with the AB2 time-integration
scheme. This is a good measure of the overall benefit of the
methodology. The ratio $\Dt_{\NewMeth}/\Dt_{\CFLAB}$ takes values
around $2$ for the Cartesian stretched mesh (Figure~\ref{results_DHC},
top) whereas slightly higher values are obtained for the unstructured
one (Figure~\ref{results_DHC}, bottom). Similar ratios were already
obtained in Ref.~\cite{TRI08-JCP2} but for the \EigenCD~method.

\mbigskip

The next two tests cases aim to study in more detail the $\alpha$
dependence of Eqs.~(\ref{AlgEigConv2}) and~(\ref{AlgEigDiff2}) and to
test the performance of the method with more realistic
configurations. Firstly, we consider an air-filled ($\PR=0.7$)
Rayleigh-B\'{e}nard configuration at Rayleigh numbers $\Ra=10^{8}$ and
$10^{10}$ using Cartesian meshes stretched towards the walls. A schema
of this configuration together with an instantaneous temperature field
at $\Ra=10^{10}$ is displayed in Figure~\ref{schema_RBC} and the set
of meshes considered here are shown in Table~\ref{meshes_RBC}. Notice
that meshes RBC1e8-MeshA and RBC1e10-MeshA were respectively used in
Refs.~\cite{DABTRI15-TOPO-RB,DABTRI19-3DTOPO-RB} to carry out DNS
simulations at $\Ra=10^{8}$ and $\Ra=10^{10}${ using the
  in-house STG-code~\cite{TRI06-JFM,GORTRI13-PCFD12}}. The reader is
referred to these papers for further details about these
configurations and the criteria used to construct these
meshes. Results analyzing in detail the influence of $\alpha$ in
Eqs.~(\ref{AlgEigConv2}) and~(\ref{AlgEigDiff2}) can be found in
Figures~\ref{results_RBC_1e8} ($\Ra=10^{8}$)
and~\ref{results_RBC_1e10} ($\Ra=10^{10}$) for both the convective
(top) and diffusive (bottom) terms. As expected, for a given
$\Ra$-number, the finer the mesh, the higher the predicted
eigenbounds. However, the most interesting feature of these figures is
the fact that all the plots follow the same trend regardless of the
$\Ra$-number and mesh refinement: namely, there is a clear
over-prediction for $\alpha < 0$ and $\alpha > 1$ whereas within the
range $0 \le \alpha \le 1$ the optimal value of $\alpha$, \ie~the one
that provides the smallest eigenbound, is always located at (or very
close to) $\alpha = 0$. Finally, Figure~\ref{results_RBC_1e8_1e10}
shows how the $\timeintparam$1L2 time-integration scheme dynamically
adapts to flows conditions: the angle $\angle$ (see~\ref{SAT_summary})
is given by
\begin{equation}
\label{angle_def}
\vapfn = \frac{\vapf^{\convmat+\diff}}{\| \vapf^{\convmat+\diff} \|} \hspace{3mm} \Longrightarrow \hspace{3mm} \angle = \tan^{-1} \left( \frac{\imagpart(\vapf^{\convmat+\diff})}{|\realpart(\vapf^{\convmat+\diff})|}  \right), \\
\end{equation}
\noindent where in our case
\begin{align}
{\imagpart(\vapf^{\convmat+\diff})} &= 1/4 \rho^{Gersh} ( |\Tcs \vcvectc^{-1} \Tcs\traspose | | \Fs | ) = \frac{1}{4} \max \{ |\Tcs \vcvectc^{-1} \Tcs\traspose | \diag ( | \Fs | ) \} , \\
{|\realpart(\vapf^{\convmat+\diff})|} &= \hphantom{1/4} \rho^{Gersh} ( \hphantom{|} \Tcs \vcvectc^{-1} \Tcs\traspose \hphantom{|} \hphantom{|} \pdiffm \hphantom{|} ) = \hphantom{\frac{1}{4}} \max\{ | \Tcs \vcvectc^{-1} \Tcs\traspose | \diag(\hphantom{|} \pdiffm \hphantom{|}) .
\end{align}
\noindent Thus, $0 \le \angle/(\pi/2) \le 1$ is a measure of ratio
between the strength of convection respect to diffusion. Therefore,
for a given $\Ra$-number the finer the mesh the stronger the diffusive
term with respect to the convective one. This tendency is clearly
observed in Figure~\ref{results_RBC_1e8_1e10}. Furthermore, we can
also observe that again the ratio
$\avgtime{\Dt_{\NewMeth}}/\avgtime{\Dt_{\CFLAB}}$ takes values close
to~$2$ regardless of the $\Ra$-number and mesh resolution {(exact
  values are shown in the last column of
  Table~\ref{meshes_RBC})}. Here, brackets $\avgtime{\cdot}$ refer to
quantities averaged in time {during the so-called statistically
  steady state.}

\mbigskip

\begin{figure}[!t]
  \centering{
    \includegraphics[height=0.42\textwidth,angle=0]{./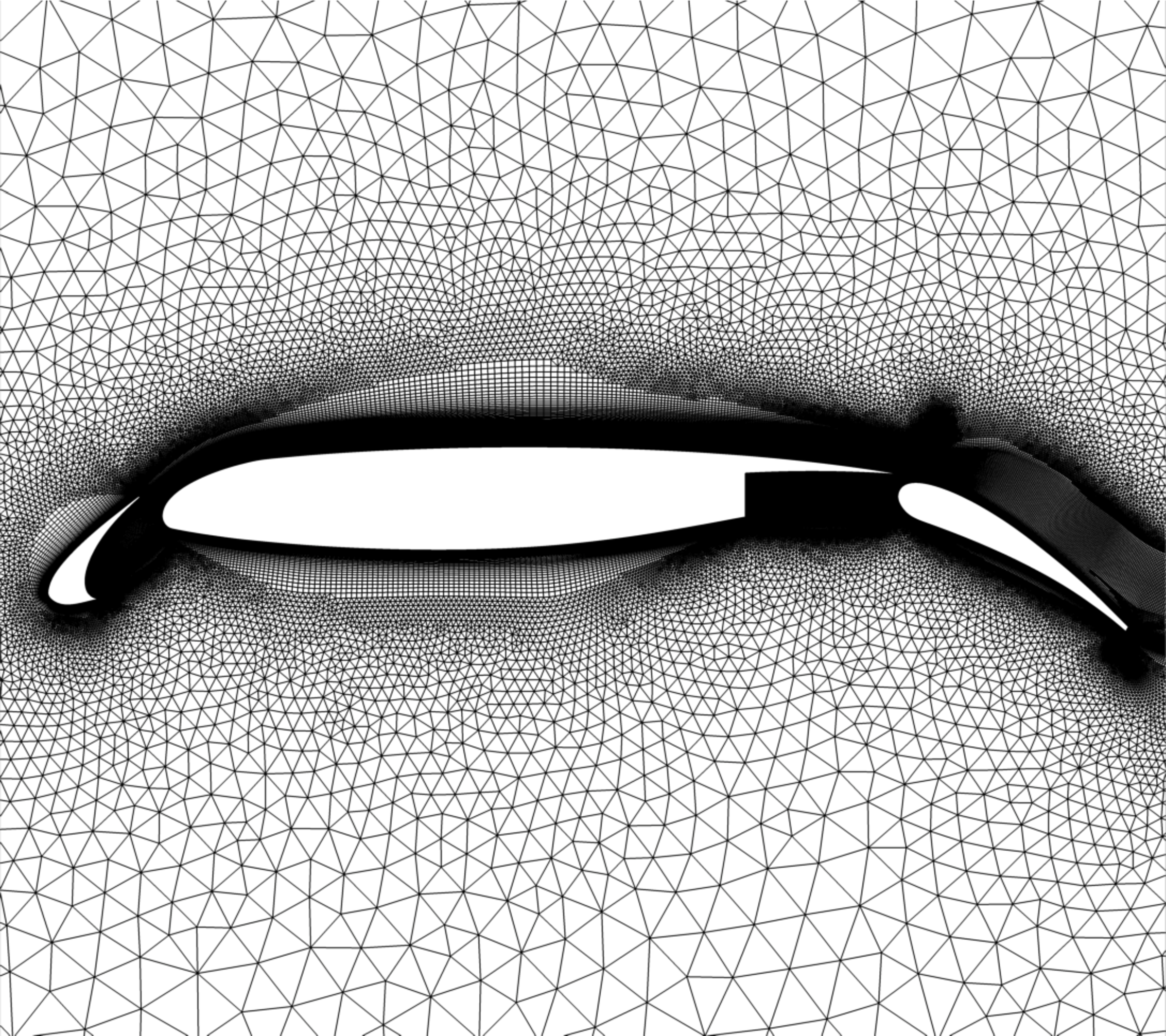}
    \includegraphics[height=0.42\textwidth,angle=0]{./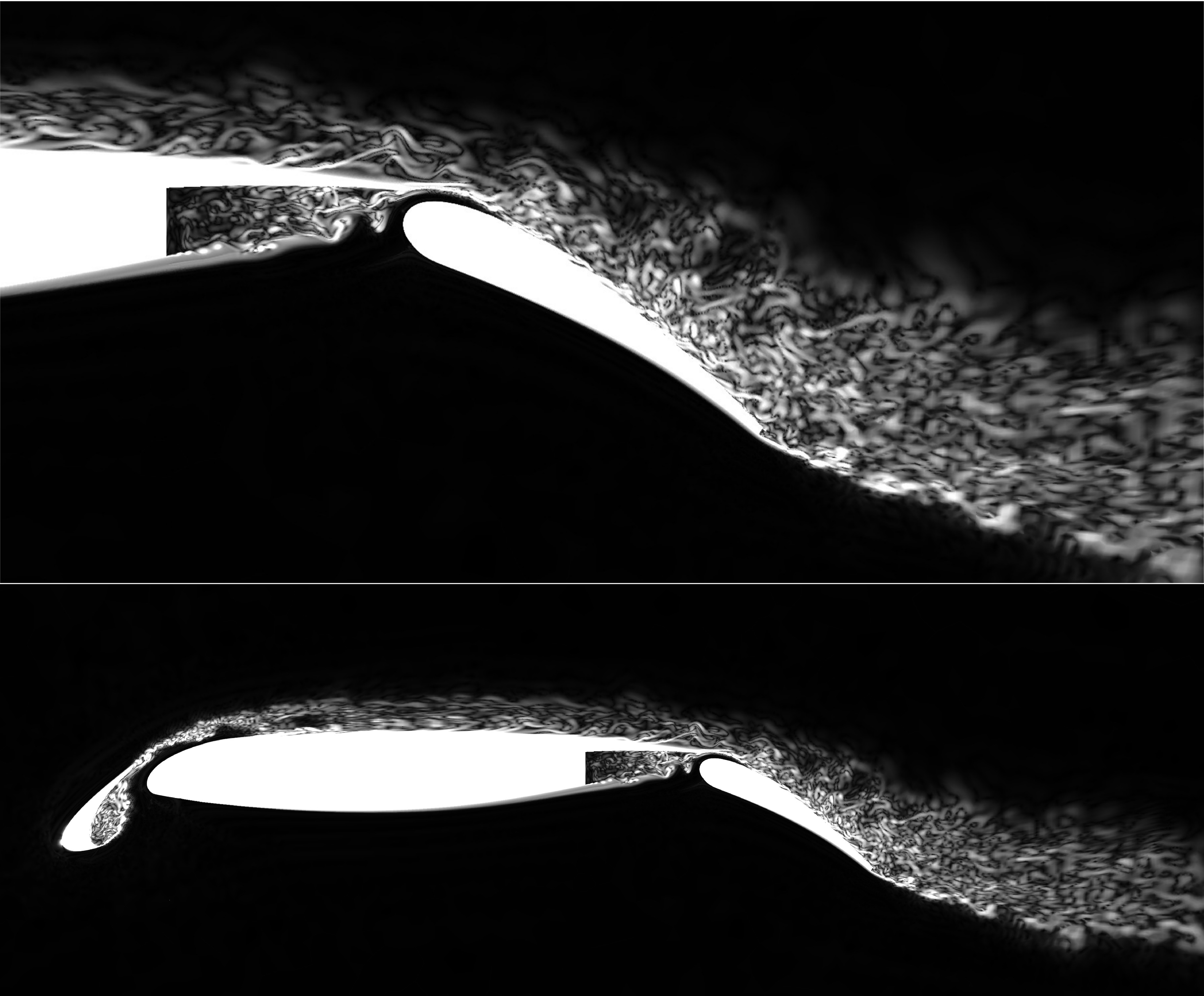}
  }
  \caption{30P30N multi-element high-lift airfoil. Left: zoom around
    the airfoil of the unstructured mesh used for the present
    tests. Right: flow visualization of the vorticity magnitude at
    Reynolds number $10^{6}$ and an angle of attack of $5.5^o$.}
\label{schema_30P30N}
\end{figure}

\begin{figure}[!t]
  \centering{
    \includegraphics[height=0.69\textwidth,angle=-90]{./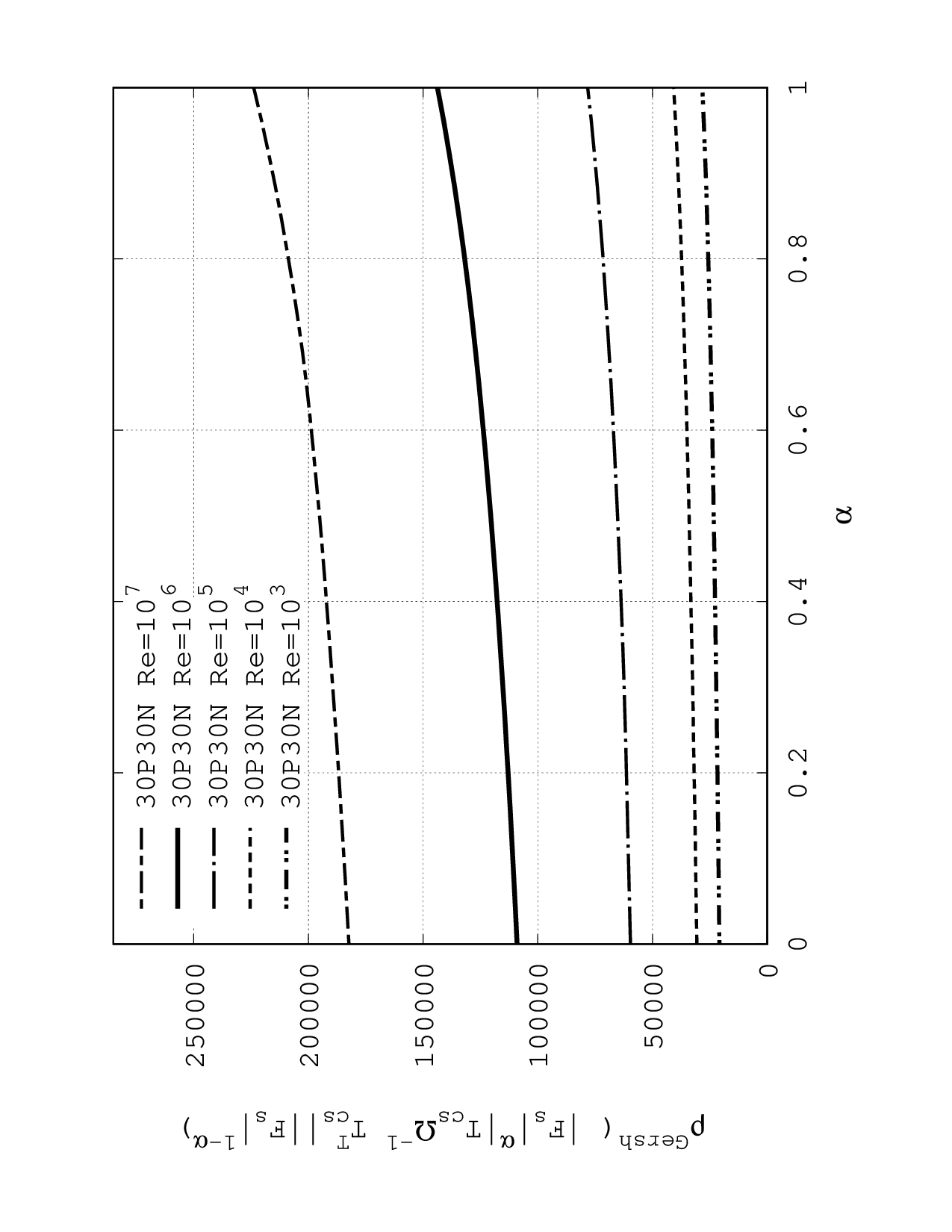}
    \includegraphics[height=0.69\textwidth,angle=-90]{./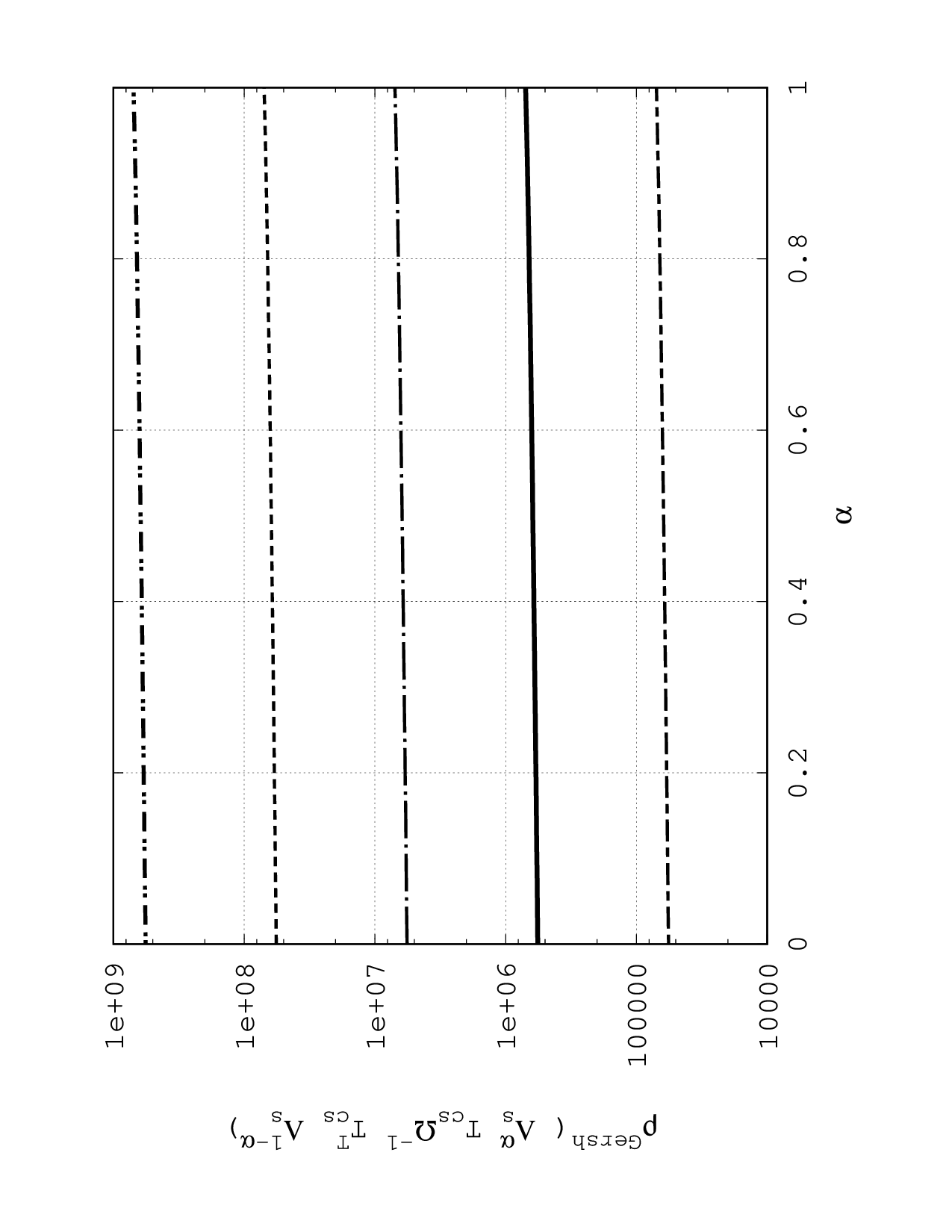}
  }
\caption{Same as in Figures~\ref{results_RBC_1e8}
  and~\ref{results_RBC_1e10} but for the high-lift airfoil 30P30N
  displayed in Figure~\ref{schema_30P30N} at different Reynolds
  numbers using the same mesh.}
\label{results_30P30N}
\end{figure}

\begin{figure}[!t]
  \centering{
    \includegraphics[height=0.69\textwidth,angle=-90]{./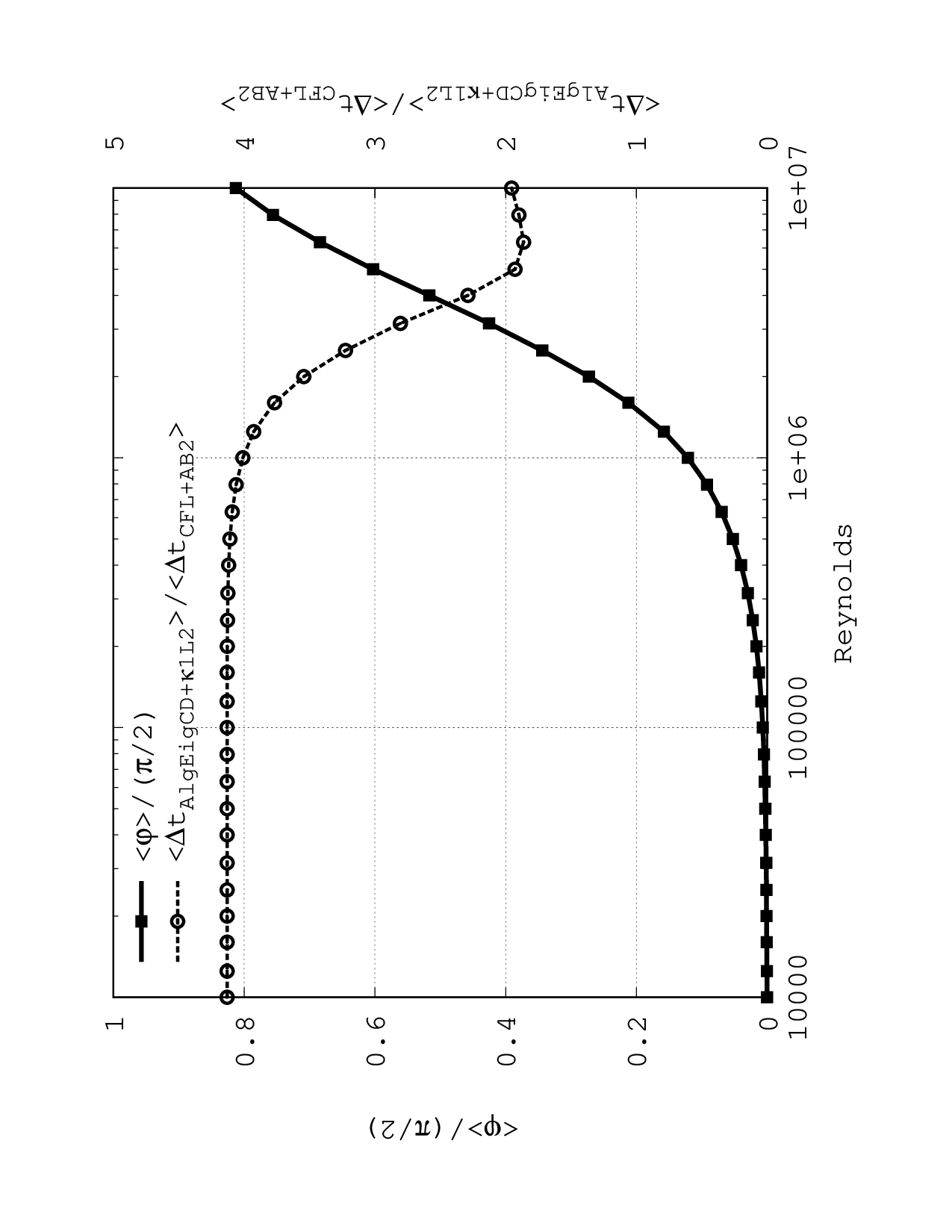}
    \includegraphics[height=0.69\textwidth,angle=-90]{./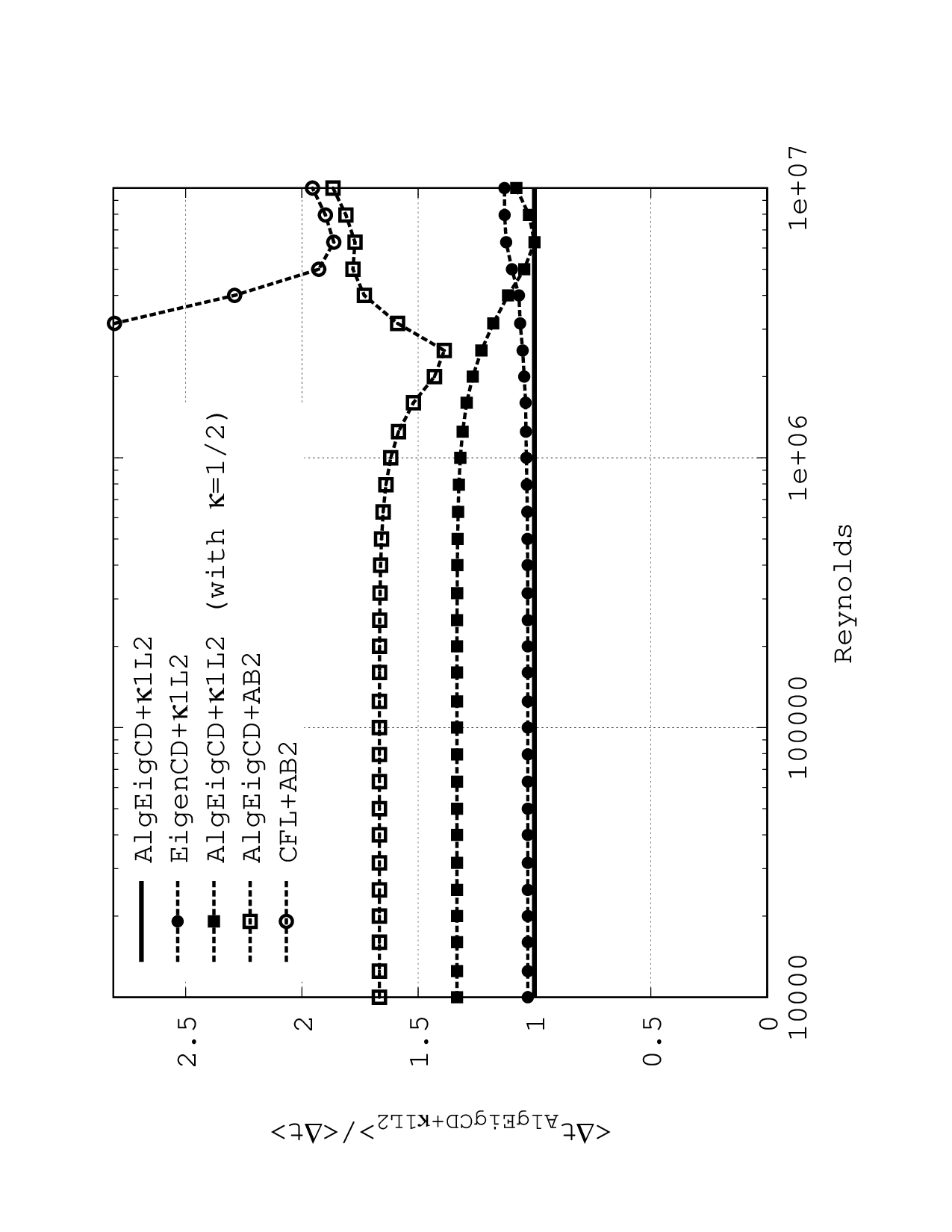}
  }
  \caption{{Top: same} as in
    Figure~\ref{results_RBC_1e8_1e10} but for the high-lift airfoil
    30P30N displayed in Figure~\ref{schema_30P30N} at different
    Reynolds numbers using the same mesh. {Bottom: comparison of
      $\avgtime{\Dt_{\NewMeth}}$ with the $\avgtime{\Dt}$ obtained
      with other three approaches apart from the \CFLAB, which is
      already displayed in the top figure.}}
\label{results_phi_30P30N}
\end{figure}

Finally, the last test-case is a {3D} flow around a 30P30N
multi-element high-lift airfoil at an angle of attack of $5.5^o$ (see
Figure~\ref{schema_30P30N} {and Ref.\cite{PAS16-AIAA} for details
  of the flow}). The mesh is unstructured with $\approx 12.5M$ control
volumes combining hexahedral elements ($\approx 8.2M$) and triangular
prisms ($\approx 4.3M$). {Flow fields have been obtained
  with the in-house NOISEtte code~\cite{GOR22}. }Results displayed in
Figure~\ref{results_30P30N} (top) show the same tendency as in the
previous case confirming that optimal eigenbounds are obtained by
setting $\alpha = 0$ in Eqs.~(\ref{AlgEigConv2})
and~(\ref{AlgEigDiff2}). In this case, the adaptability of the method
has been studied on the same mesh but changing the Reynolds number,
$Re$. Results for a wide range of $Re$ are shown in
Figure~\ref{results_phi_30P30N} (top). As expected for low-$Re$, the
diffusive term is dominant, \ie~$\angle \approx 0$ whereas for (very)
high-$Re$ the convective term becomes the dominant one. Regarding the
ratio $\avgtime{\Dt_{\NewMeth}}/\avgtime{\Dt_{\CFLAB}}$ in this case
it takes values around $4$ for $Re \lesssim 10^{6}$ and goes down to
approximately~$2$ at $Re=10^{7}$. Notice that for the range of
$Re$-numbers, this mesh is designed for (see
Figure~\ref{schema_30P30N}, right), the overall gain in terms of $\Dt$
is approximately $4$. Similar ratios were already observed for the
\EigenCD~method on unstructured meshes~\cite{TRI08-JCP2}. {This
  overall gain results from a combination of factors, which are
  analyzed in detail in Figure~\ref{results_phi_30P30N} (bottom) where
  the ratio $\avgtime{\Dt_{\NewMeth}}$ is compared with the
  $\avgtime{\Dt}$ obtained with other three approaches apart from the
  \CFLAB. Namely, (i) \EigenCD+$\timeintparam 1L2$ is the same as
  \AlgEigCD+$\timeintparam 1L2$ but directly using the Gershgorin
  circle theorem to matrices $\vcvectc^{-1} \convc$ and $\vcvectc^{-1}
  \diffc$. Interestingly enough, the new method provides slightly
  better estimations. Nevertheless, the main advantage respect to the
  $\EigenCD$ method proposed in~\cite{TRI08-JCP2} is that the new
  method does not require to compute the coefficients of the matrix
  and it relies on very simple algebraic kernels, which simplifies its
  implementation and guarantees cross-platform portability. Then, (ii)
  $\NewMeth$ with $\timeintparam=1/2$ consists on using the new
  $\AlgEigCD$ method to compute the eigenbounds of the convective and
  diffusive operators but forcing $\timeintparam=1/2$. Notice that in
  this case, the $\timeintparam 1L2$ and the AB2 schemes have exactly
  the same stability region; however, the method is still using the
  information regarding the location of the eigenvalues in the complex
  plane to find out with is the maximum $\Dt$ that lies inside the
  stability region. In this way, the differences respect to
  $\avgtime{\Dt_{\NewMeth}}$ are only due to the fact that we are not
  allowing the $\timeintparam 1L2$ scheme to self-adapt. Finally, the
  approach (iii) \AlgEigCD+AB2 is basically the same as the
  \CFLAB~method given in Eq.(\ref{CFL_code}) but replacing
  $\lambda_{CFL}^{\convmat}$ and $\lambda_{CFL}^{\diff}$ by the values
  obtained with the new \AlgEigCD~method. Therefore, in this case, the
  differences respect to $\avgtime{\Dt_{\NewMeth}}$ are due to the
  self-adaptivity of the $\timeintparam 1L2$ scheme. Therefore, the
  difference between this last method and the \CFLAB~method can only
  be attributed to the inaccuracy in the computation of
  $\lambda_{CFL}^{\convmat}$ and $\lambda_{CFL}^{\diff}$ in
  Eq.(\ref{CFL_code}). From the results shown in
  Figure~\ref{results_phi_30P30N}, it becomes clear that the
  expression used to compute $\lambda_{CFL}^{\diff}$ is quite
  inaccurate for unstructured grids.}

\section{Concluding remarks}

\label{conclusions}

In summary, the newly proposed \AlgEigCD~method simply relies on the
construction of the matrix $|\Tcs \vcvectc^{-1} \Tcs\traspose |$ which
can be done {at the} pre-processing stage. Then, this
matrix is used to {compute} eigenbounds of matrices
$\vcvectc^{-1} \diff$ and $\vcvectc^{-1} \conv$ as follows
\begin{eqnarray}
\label{AlgEigConv3}
\rho ( \vcvectc^{-1} \convc ) \le& 1/4 \rho ( |\Tcs \vcvectc^{-1} \Tcs\traspose | | \Fs | ) &\le 1/4 \max \{ |\Tcs \vcvectc^{-1} \Tcs\traspose | \diag ( | \Fs | )  \} , \\
\label{AlgEigDiff3}
\rho ( \vcvectc^{-1} \diffc ( \diffv ) ) =&  \phantom{1/4} \rho( \hphantom{|} \Tcs \vcvectc^{-1} \Tcs\traspose \hphantom{|} \hphantom{|} \pdiffm \hphantom{|}) &\le \hphantom{1/4} \max\{ | \Tcs \vcvectc^{-1} \Tcs\traspose | \diag(\hphantom{|}\pdiffm\hphantom{|}) \} ,
\end{eqnarray}
\noindent where the former inequality follows from
Eq.(\ref{AlgEigConv1}) and the application of the Gershgorin circle
theorem to matrix $| \Tcs \Tcs\traspose | | \Fs | $. Similarly, for
the latter and Eq.(\ref{eigenboundD1}). Notice that in these cases,
the diagonal matrix $\vcvectc^{-1}$ has been introduced (see
Remark~\ref{remark_vcvectc}). {Numerical results show an
  improvement with respect to other less general approaches,
  especially for unstructured meshes.} This is an observation that was
already done in Ref.~\cite{TRI08-JCP2} where the \EigenCD~method was
proposed. Nevertheless, the new method is slightly improving the
former one. {All these performance improvements are attributed
  to the fact that the spectral radius is computed in a more accurate
  manner leading to larger time-steps. Indeed, the cost itself of the
  different methods is very similar and just represents a very small
  fraction of the overall simulation.} However, the key elements of
the newly proposed \AlgEigCD~are the fact that no new matrix have to
be re-computed every time-step and that, in practice, only relies on
a~\SpMV~where only the vectors $\diag ( | \Fs | )$ and $\diag( \pdiffm
)$ change on time. Hence, implementation and cross-platform
portability are straightforward. {Moreover, since the same
  matrix (see Eqs.~\ref{AlgEigConv3} and~\ref{AlgEigDiff3}) in used
  for both the convective and all the diffusive terms (notice that,
  apart from the momentum equation, you may have other transport
  equations), all the required~\SpMV's can be computed at once,
  replacing a set of \SpMV’s by a sparse matrix-matrix product (\SpMM)
  which leads to a higher arithmetic intensity and, therefore, a
  better performance~\cite{ALSTRI22-Sym4CFD}.}

\mbigskip

Although the proposed methodology has been deduced in the context of
the symmetry-preserving spatial discretization outlined in
Section~\ref{SymPres}, it can be applied to other schemes resulting
from some sort of blending, \eg~hybrid schemes, flux limiters, etc,
between the symmetry-preserving and the first-order upwind scheme
(see~\ref{appendix_upwind}, for details). {This virtually
  includes all practical low-order (first- or second-order)
  discretizations.} On the other hand, the authors are aware that many
other (higher-order) schemes exists in the CFD literature that indeed
lead to slightly higher spectral radius than lower-order
schemes~\cite{RUATRI19-DIS}. In these cases, an appropriate correction
factor should be introduced to guarantee the stability of the
method. {This correcting factor, which is scheme dependent, can
  be analytically computed using a Fourier analysis as
  in~\cite{RUATRI19-DIS}.} 

\mbigskip

Finally, it worth mentioning that we have plans to extend this method
to other time-integration schemes with larger stability domains and
subsequently larger time-steps. {This raises the
  question regarding the accuracy of the solution. For DNS and LES
  simulations, the time-step computed {\it \`{a} la} CFL is usually
  smaller than the smallest temporal scale of the
  flow~\cite{TRI08-JCP2}. This is the case of all problems analysed in
  this paper. Nevertheless, users of this method (or similar ones)
  must be aware that the stability of the time-integration scheme does
  not guarantee the absence of numerical errors or artifacts that may
  affect the quality of the solution~\cite{PLATRI24DLES14-RKschemes}.}
Another interesting line of research would be combining this approach
with the existing family of symplectic~\cite{SAN13} and
pseudo-symplectic~\cite{CAP17} RK~time-integration methods to get rid
of the artificial dissipation introduced by the temporal schemes. This
is also part of our future research plans.

\section*{Acknowledgments} {F.X.T.}, {X.A-F.}, {A.A-B.} and {A.O.}
are supported by SIMEX project (PID2022-142174OB-I00) of
\emph{Ministerio de Ciencia e Innovaci\'{o}n} and the RETOtwin project
(PDC2021-120970-I00) of \emph{Ministerio de Econom\'{i}a y
  Competitividad}, Spain. {A.A-B.} was supported by the predoctoral
  grants DIN2018-010061 and 2019-DI-90, by
  MCIN/AEI/10.13039/501100011033 and the Catalan Agency for Management
  of University and Research Grants (AGAUR). Calculations were carried
  out on MareNostrum~4 supercomputer at BSC. The authors thankfully
  acknowledge these institutions. Authors also wish to thank
  Dr.~J.~Ruano and Dr.~I.~J\'{o}n\'{a}s for their helpful comments and
  advices.

\appendix

\section{Playing with incidence matrices}

\label{appendix_play}

Let us consider the cell-to-face incidence matrix $\Tcs \in
\real^{\vvlength \times \pvlength}$ which has two non-zero elements
per row (a $+1$ and a $-1$ corresponding to the cells adjacent to a
face) {and the face-to-cell incidence matrix, $\Tsc =
  \Tcs\traspose \in \real^{\pvlength \times \vvlength}$. For instance,
  for the mesh with $4$~control volumes and $8$~faces shown in
  Figure~\ref{mesh} (right), the latter reads}
\begin{equation}
\label{Tsc_def}
\Tsc = \Tcs\traspose =
\left(
\begin{array}{rrrrrrrr}
 0 &  0 & -1 & +1 &  0 &  0 & +1 &  0 \\
+1 &  0 &  0 & -1 &  0 & -1 &  0 &  0 \\
-1 & +1 &  0 &  0 &  0 &  0 &  0 & +1 \\
 0 & -1 & +1 &  0 & +1 &  0 &  0 &  0
\end{array}
\right).
\end{equation}
We want to show the following two properties.
\begin{theorem}
\label{symmetry}
Given a diagonal matrix $\pdiffm \in \real^{\vvlength \times
  \vvlength}$ with strictly positive diagonal values, the matrix
$\Tcs\traspose \pdiffm \Tcs \in \real^{\pvlength \times \pvlength}$ is
symmetric positive semi-definite.
\end{theorem}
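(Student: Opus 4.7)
The plan is to verify the two claims of the theorem in turn, both of which follow from standard manipulations since $\pdiffm$ is diagonal with strictly positive entries.

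First, I would establish symmetry by a direct transpose computation. Since $\pdiffm$ is diagonal, we have $\pdiffm\traspose = \pdiffm$, and therefore
\begin{equation*}
\left(\Tcs\traspose \pdiffm \Tcs\right)\traspose = \Tcs\traspose \pdiffm\traspose \left(\Tcs\traspose\right)\traspose = \Tcs\traspose \pdiffm \Tcs,
\end{equation*}
so the matrix equals its own transpose. This step is essentially free and requires no properties of $\Tcs$ itself.

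Next, I would show positive semi-definiteness by a quadratic-form argument. For any vector $\velv \in \real^{\pvlength}$, setting $\velw = \Tcs \velv \in \real^{\vvlength}$, I have
\begin{equation*}
\velv\traspose \left(\Tcs\traspose \pdiffm \Tcs\right) \velv = \velw\traspose \pdiffm \velw = \sum_{f=1}^{\vvlength} [\pdiffm]_{ff}\, [\velw]_f^{\,2} \geq 0,
\end{equation*}
where the last inequality uses the hypothesis $[\pdiffm]_{ff} > 0$ for all $f$. Since the quadratic form is nonnegative for all $\velv$, the matrix is positive semi-definite.

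There is really no obstacle here: the result is the standard $B\traspose D B$ construction with $D$ a positive diagonal matrix, and the only ingredients needed are the symmetry of diagonal matrices and the positivity of their entries. The only remark worth making is that strict positive definiteness fails in general because $\Tcs$ has a nontrivial kernel (constant vectors in $\real^{\pvlength}$ lie in $\ker \Tcs$, reflecting the fact that discrete gradients annihilate constants), which is precisely why the theorem asserts only the semi-definite version and matches the known property of the continuous Laplacian referenced after Eq.~(\ref{diff_oper2}).
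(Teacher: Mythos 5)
Your proof is correct and follows essentially the same route as the paper's: symmetry by direct transposition, and positive semi-definiteness via the quadratic form $\velv\traspose \Tcs\traspose \pdiffm \Tcs \velv = \velw\traspose \pdiffm \velw \ge 0$ with $\velw = \Tcs\velv$. Your closing remark about the kernel of $\Tcs$ containing the constant vector matches the paper's observation on when equality holds.
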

\begin{proof}
Symmetry of matrix $\Tcs\traspose \pdiffm \Tcs$ follows
straightforwardly, \ie~$(\Tcs\traspose \pdiffm \Tcs)\traspose =
\Tcs\traspose \pdiffm \Tcs$. {Positive (semi-)definetiness
  follows from the fact that the diagonal coefficients of the matrix
  $\pdiffm$ are strictly positive
\begin{equation}
  \velv_c\traspose \Tcs\traspose \pdiffm \Tcs \velv_c = \velw_s\traspose \pdiffm \velw_s \ge 0 \hspace{6.69mm} \forall \velv_c \in \real^{\pvlength} ,
\end{equation}
\noindent where $\velw_s = \Tcs \velv_c \in
\real^{\vvlength}$. Equality only holds for the unity vector,
$\veconec \in \real^{\pvlength}$, which is the only vector that
belongs to the kernel of $\Tcs$, \ie~$\veconec \in \kernel{\Tcs}$.}
\end{proof}

\begin{theorem}
\label{skew-symmetry}
Given a diagonal matrix $\Fs \in \real^{\vvlength \times \vvlength}$
such as $\diag(\Fs) \in \kernel{\Tcs\traspose}$, the matrix
$\Tcs\traspose \Fs |\Tcs| \in \real^{\pvlength \times \pvlength}$ is
skew-symmetric.
\end{theorem}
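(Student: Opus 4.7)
The plan is to show the defining identity $(\Tcs\traspose \Fs |\Tcs|)\traspose = -\Tcs\traspose \Fs |\Tcs|$ by a direct entry-wise computation, splitting into off-diagonal and diagonal cases. Since $\Fs$ is diagonal, taking transposes gives $(\Tcs\traspose \Fs |\Tcs|)\traspose = |\Tcs|\traspose \Fs \Tcs$, so I must verify
\[
[\Tcs\traspose \Fs |\Tcs|]_{ij} + [|\Tcs|\traspose \Fs \Tcs]_{ij} = 0 \qquad \forall i,j.
\]

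The key observation is the structural property of the incidence matrix $\Tcs$: each row $f$ has exactly two nonzero entries, a $+1$ and a $-1$, corresponding to the two cells $c_1,c_2$ adjacent to face $f$ (cf.\ Eq.~\ref{Tsc_def}). Consequently, if $i \neq j$ and some face $f$ contributes to either sum, then $f$ must be adjacent to both cells $i$ and $j$, so $\{i,j\}$ are exactly the two cells bordering $f$ and therefore $[\Tcs]_{fi} = -[\Tcs]_{fj} = \pm 1$, whereas $|[\Tcs]_{fi}| = |[\Tcs]_{fj}| = 1$. Writing the two contributing terms explicitly,
\[
[\Tcs]_{fi}\,[\Fs]_{ff}\,|[\Tcs]|_{fj} + |[\Tcs]|_{fi}\,[\Fs]_{ff}\,[\Tcs]_{fj} = [\Fs]_{ff}\bigl([\Tcs]_{fi} + [\Tcs]_{fj}\bigr) = 0,
\]
so the off-diagonal entries of the sum vanish.

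For the diagonal case $i = j$, only faces $f$ adjacent to cell $i$ contribute, and each such term reduces to $[\Tcs]_{fi}\,[\Fs]_{ff}$ because $|[\Tcs]|_{fi} = 1$. Hence
\[
[\Tcs\traspose \Fs |\Tcs|]_{ii} = \sum_{f} [\Tcs]_{fi}\,[\Fs]_{ff} = [\Tcs\traspose \diag(\Fs)]_{i},
\]
which vanishes by the hypothesis $\diag(\Fs) \in \kernel{\Tcs\traspose}$. This exactly encodes the discrete incompressibility constraint, and is the only place where that hypothesis enters the proof.

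Combining both cases gives $[\Tcs\traspose \Fs |\Tcs|]_{ij} = -[\Tcs\traspose \Fs |\Tcs|]_{ji}$ for $i \neq j$ and zero diagonal, proving skew-symmetry. I expect no real obstacle here: the argument is essentially bookkeeping, and the only subtle point is recognizing that the sign cancellation in the off-diagonal contribution is a structural consequence of the orientation convention built into $\Tcs$ (one $+1$ and one $-1$ per row), while the diagonal vanishing is precisely the discrete divergence-free condition on the mass fluxes.
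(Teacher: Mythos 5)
Your proof is correct and follows essentially the same route as the paper's: an entry-wise computation exploiting that each row of $\Tcs$ has exactly one $+1$ and one $-1$, giving the off-diagonal antisymmetry $[\genmat]_{ik}=-[\genmat]_{ki}$, and invoking $\diag(\Fs)\in\kernel{\Tcs\traspose}$ only to kill the diagonal via $t_{ji}|t_{ji}|=t_{ji}$. The only cosmetic difference is that you phrase it as verifying $\genmat+\genmat\traspose=\nullmat$ while the paper checks the two index orderings directly; the content is identical.
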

\begin{proof}
To prove that the matrix $\genmat \equiv \Tcs\traspose \Fs |\Tcs|$ is
skew-symmetric, we need to show that $\canobas_j\traspose \genmat
\canobas_i = - \canobas_i\traspose \genmat \canobas_j \hphantom{k}
\forall i,j \in \{1,\dots,\pvlength\}$, where $\canobas_{k}$ are
elements of the canonical basis. Firstly, we can compute the
coefficients of $\genmat$ as follows
\begin{equation}
\genmatcoeff_{ik} = t_{ji} f_j | t_{jk} | ,
\end{equation}
\noindent where $\genmatcoeff_{ik} = [ \genmat ]_{ik}$, $t_{jk} =
          [\Tcs]_{jk}$ and $f_j = [ \diag ( \Fs ) ]_j$. Then,
          recalling that the {cell-to-face incidence matrix
            $\Tcs$ has only two non-zero elements per row (and
            $\Tsc=\Tcs\traspose$ per column; see Eq.~\ref{Tsc_def}), a
            $+1$ and a $-1$, we can easily show that the off-diagonal
            elements satisfy}
\begin{equation}
\label{off-diagonal_rel}
\genmatcoeff_{ki} = t_{jk} f_j | t_{ji} | , \hspace{3mm} \genmatcoeff_{ik} = t_{ji}
f_j | t_{jk} | \hspace{3.69mm} \Longrightarrow \hspace{3.69mm}
\genmatcoeff_{ik} = - \genmatcoeff_{ki} \hphantom{kk} \forall i\neq k . 
\end{equation}
\noindent Finally, the diagonal elements of $\genmat$ are given by
\begin{equation}
[\diag(\genmat)]_i = t_{ji} f_j | t_{ji} |  ,
\end{equation}
\noindent which can be re-arranged noticing that $t_{ji} | t_{ji} | =
t_{ji}$,
{
\begin{equation}
[\diag(\genmat)]_i = t_{ji} f_j = 0 \hspace{6.69mm} \forall i \in \{1,\dots,\pvlength\} ,
\end{equation}
\noindent since $\diag(\Fs) \in \kernel{\Tcs\traspose}$. Together with
Eq.(\ref{off-diagonal_rel}), this shows that matrix $\Tcs\traspose \Fs
|\Tcs|$ is skew-symmetric.}
\end{proof}

{
\begin{theorem}
\label{prop_incidence_matrix}
Given a diagonal matrix $\Fs \in \real^{\vvlength \times \vvlength}$,
matrices $\left| \Tcs\traspose | \Fs | \Tcs \right| \in
\real^{\pvlength \times \pvlength}$ and $| \Tcs\traspose | | \Fs | |
\Tcs | \in \real^{\pvlength \times \pvlength}$ are equal
\begin{equation}
\label{InAd_prop2_theorem}
\left| \Tcs\traspose | \Fs | \Tcs \right| = | \Tcs\traspose | | \Fs | | \Tcs | .
\end{equation}
\end{theorem}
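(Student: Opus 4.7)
The plan is to prove the matrix identity entry by entry, exploiting the very rigid structure of the incidence matrix $\Tcs$: each row has exactly two non-zero entries, one equal to $+1$ and the other to $-1$ (these correspond to the two cells adjacent to a given face).

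I would start by writing out the generic coefficient of $\Tcs\traspose |\Fs| \Tcs$ at position $(i,k)$ as the sum $\sum_{j} t_{ji}\, |f_j|\, t_{jk}$, and the corresponding coefficient of $|\Tcs\traspose|\,|\Fs|\,|\Tcs|$ as $\sum_{j} |t_{ji}|\, |f_j|\, |t_{jk}|$, where $t_{jk}=[\Tcs]_{jk}\in\{-1,0,+1\}$ and $|f_j|\ge 0$. The claim then reduces to showing that, for every fixed pair $(i,k)$, all non-zero summands in the first sum carry the same sign; once that is established, one can pull out that common sign, take absolute values, and observe the two expressions coincide.

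The sign-consistency follows from two simple geometric observations about $\Tcs$. First, for the diagonal case $i=k$, each non-zero summand is $t_{ji}^{\,2}|f_j|=|t_{ji}|\,|f_j|\,|t_{ji}|\ge 0$, so the identity is immediate. Second, for off-diagonal entries $i\neq k$, the product $t_{ji}t_{jk}$ is non-zero only when face $j$ is adjacent to both cells $i$ and $k$; in a conforming mesh this forces $j$ to be the unique interface between those two cells, and by the $(+1,-1)$ structure of each row of $\Tcs$ we necessarily have $t_{ji}t_{jk}=-1$. Hence every non-zero summand contributes $-|f_j|$, giving $|\!\sum_j t_{ji}|f_j|t_{jk}|=\sum_j |t_{ji}|\,|f_j|\,|t_{jk}|$.

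The main (and essentially only) obstacle is making the ``same sign'' argument airtight, which amounts to recording the combinatorial fact that in the underlying mesh any two distinct cells share at most one face and each face is shared by exactly two cells; this is built into the definition of $\Tcs$ used throughout the paper. Once this is stated explicitly, the entrywise comparison closes the proof for arbitrary diagonal $\Fs$ (the role of $|\Fs|$ being only to ensure the face weights are non-negative, so that nothing can cancel inside the absolute value). Since identity~(\ref{InAd_prop1}) is the special case $\Fs=\Identity$, it needs no separate treatment.
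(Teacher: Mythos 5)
Your proposal is correct and follows essentially the same route as the paper's proof: an entrywise comparison of $\left| \Tcs\traspose | \Fs | \Tcs \right|$ and $| \Tcs\traspose | | \Fs | | \Tcs |$ that exploits the fact that each row of $\Tcs$ contains exactly one $+1$ and one $-1$, treating the diagonal ($t_{ji}^2|f_j|\ge 0$) and off-diagonal cases separately. The only (minor) difference is in the off-diagonal step: the paper argues that the sum over faces has a single non-zero contribution (the unique face shared by cells $i$ and $k$), whereas you observe that every non-zero summand equals $-|f_j|$ and hence all carry the same sign, which is a slightly more robust justification but not a genuinely different argument.
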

\begin{proof}
To prove that $\genmat \equiv \left| \Tcs\traspose | \Fs | \Tcs
\right|$ and $\genmatB \equiv | \Tcs\traspose | | \Fs | | \Tcs |$ are
identical matrices, firstly we compute their coefficients as follows
\begin{align}
\genmatcoeff_{ik} &= | t_{ji} f_j  t_{jk} | , \\
\genmatBcoeff_{ik} &= | t_{ji} | | f_j | | t_{jk} | ,
\end{align}
\noindent where $\genmatcoeff_{ik} = [ \genmat ]_{ik}$,
$\genmatBcoeff_{ik} = [ \genmatB ]_{ik}$, $t_{jk} = [\Tcs]_{jk}$ and
$f_j = [ \diag ( \Fs ) ]_j$. Then, recalling that the cell-to-face
incidence matrix $\Tcs$ has only two non-zero elements per row $j$
(and $\Tsc=\Tcs\traspose$ per column; see Eq.~\ref{Tsc_def}), a $+1$
and a $-1$, it is easy to see that non-zero off-diagonal elements of
matrices $\genmat$ and $\genmatB$ (summation index $j$ which
corresponds to the faces) have only one non-zero contribution that
corresponds to the two cells, $i$ and $k$, adjacent to the face
$j$. Therefore, it follows straightforwardly
\begin{equation}
\label{off-diagonal_rel2}
\genmatcoeff_{ik} = | t_{ji} f_j  t_{jk} | = | t_{ji} | | f_j | | t_{jk} | = \genmatBcoeff_{ik} \hphantom{kkk} \forall i\neq k . 
\end{equation}
\noindent Finally, the diagonal elements are given by
\begin{equation}
[\diag(\genmat)]_i = | t_{ji} f_j  t_{ji} | = | f_j t_{ji}^2 | = | t_{ji} | | f_j | | t_{ji} | = [\diag(\genmatB)]_i.
\end{equation}
Together with Eq.(\ref{off-diagonal_rel2}), this shows that matrices
$\left| \Tcs\traspose | \Fs | \Tcs \right|$ and $| \Tcs\traspose | |
\Fs | | \Tcs |$ are equal.
\end{proof}
}


\section{Dealing with upwinding schemes}

\label{appendix_upwind}

Let us firstly consider a first-order upwind scheme~\cite{PAT80}. In
this case, the interpolation operator, $\Sscal$, needed to construct
the convective matrix (see Eq.~\ref{conv_oper}) is given by a row-wise
linear combination of $1/2|\Tcs|$ (cell-to-face unweighted
interpolation) and $1/2 \Tcs$ (difference between adjacent cell
values). Then, the sign matrix in front of $1/2 \Tcs$ depends of the
flow direction, \ie~on the sign of mass fluxes, $\Fs$. Hence, the
first-order upwind scheme reads as follows
\begin{equation}
\convUP \equiv \dive \Us \Sscal^{\UP}  \hspace{6.69mm} \text{where} \hspace{6.69mm} \Sscal^{\UP} \equiv \frac{1}{2} | \Tcs | + \frac{1}{2} \sign(\Us) \Tcs .
\end{equation}
\noindent This can be re-written in terms of mass fluxes across faces,
$\Fs$, and the cell-to-face incidence matrix, $\Tcs$, 
\begin{equation}
\convUP \equiv \frac{1}{2} \Tcs\traspose \Fs | \Tcs | + \frac{1}{2} \Tcs\traspose \Fs \sign(\Fs) \Tcs ,
\end{equation}
\noindent where $\sign(\Fs) \in \real^{\vvlength \times \vvlength}$
results into a diagonal matrix containing the signs of $\Fs$. This
expression can be further simplified 
\begin{equation}
\convUP = \convc - \frac{1}{2} \diffc(\diag(|\Fs|)) ,
\end{equation}
\noindent where $\convc$ and $\diffc$ are respectively the
symmetry-preserving discretization of the convective term given in
Eq.(\ref{conv_oper3}) and the discrete diffusive operator given in
Eq.(\ref{diff_oper3}).

\mbigskip

Nevertheless, in many occasions upwind scheme is blended with
symmetry-preserving scheme, \eg~hybrid schemes, flux
limiters~\cite{VALALVTRI21-FL},... At the end, this blending between
symmetry-preserving {($\scaUPblend=1$)} and upwind
{($\scaUPblend=0$)} can be defined in terms of a {vector
  $\UPblend \in \real^{\vvlength}$ defined at the faces. It can be
  seen as a second input parameter for the convective operator}
\begin{equation}
\convcarg{\velh,\UPblend} \equiv \convc - \frac{1}{2} \diffc(|\Fs| ( \vecones - \UPblend ) ) .
\end{equation}
\noindent Then, the eigenvalues of $\convcarg{\velh,\UPblend}$ can be
bounded using Bendixson theorem (see Theorem~\ref{Bendixson_theorem})
as follows: imaginary contributions come from $\convc$ whereas
negative real-valued contributions can simply be added to the
diffusive term by replacing
\begin{equation}
\pdiffm \longrightarrow \pdiffm + \frac{1}{2} \diag ( | \Fs | ( \vecones - \UPblend ) ) .
\end{equation}


\section{Self-adaptive time-integration scheme}

\label{SAT_summary}

For the sake of completeness, this appendix shortly revises the
self-adaptive second-order time-integration scheme $\timeintparam$1L2,
which was originally proposed in Ref.~\cite{TRI08-JCP2}. Given the
first-order ordinary differential equation
\begin{equation}
\frac{\ud \sca}{\ud t} = f ( \sca ) ,
\end{equation}
\noindent it is discretized in time as follows
\begin{equation}
\label{oneleg2}
\left( \timeintparam + \frac{1}{2} \right) \sca^{n+1} - 2 \timeintparam \sca^{n} + \left( \timeintparam - \frac{1}{2} \right) \sca^{n-1} = \dt f \left( (1+\timeintparam)\sca^n - \timeintparam \sca^{n-1} \right) .
\end{equation}
\noindent Then, assuming that $\dt$ is small enough, the non-linear
function $f ( \cdot )$ can be linearized, \ie~$f ( x ) \approx \vapf
x$ where $\vapf \in \complex$. In this way, the problem becomes
\begin{equation}
\label{oneleg4}
\left( \timeintparam + \frac{1}{2} \right) \sca^{n+1} - 2 \timeintparam \sca^{n} + \left( \timeintparam - \frac{1}{2} \right) \sca^{n-1} = 
\vapfn \dtn \left( (1+\timeintparam)\sca^n - \timeintparam \sca^{n-1} \right) ,
\end{equation}
\noindent where $\vapfn$ is the unitary vector $\vapfn = \vapf / \|
\vapf \| \in \complex$ with $\angle \in [ -\pi/2 , \pi/2 ]$ and $\dtn
= \dt \| \vapf \| \in \real^{+}$. It can be viewed as a generalization
of the classical second-order Adams--Bashforth scheme
($\timeintparam=1/2$) where the parameter $\timeintparam$ is used to
adapt the region of stability to the instantaneous flow conditions in
order to maximize $\Dt$. The idea of the method is depicted in
Figure~\ref{stab_region} (bottom). To keep the time-integration method
stable, we need that the eigenvalues, $\vapg$, of the amplification
matrix $T$
\begin{equation}
\label{oneleg5}
\left( 
\begin{array}{c}
\sca^{n+1} \\ \sca^{n}
\end{array}
\right) = T
\left( 
\begin{array}{c}
\sca^{n} \\
\sca^{n-1}
\end{array}
\right)
\hspace{3mm}
\text{with} \hspace{2mm} 
T= \left(
\begin{array}{cc}
A (\timeintparam , \vapfn \dtn ) & B  (\timeintparam , \vapfn  \dtn ) \\
1 & 0
\end{array}
\right) ,
\end{equation}
\noindent to be smaller than unity, $\| \vapg \| \le 1$. The complex 
functions $A$ and $B$ are given by $A ( x , y ) = ( 2x + xy + y ) / (
x + 1/2)$ and $B (x,y) = - (x + xy - 1/2) / ( x + 1/2)$,
respectively. Therefore, the two eigenvalues of the linear system are
given by
\begin{equation}
\vapg = \frac{1}{2} \left( A \pm \sqrt{A^2 + 4 B} \right) .
\end{equation}
\noindent Hence, the idea of the method reads: given a $\angle$, to
determine which is the $\timeintparam$ that leads to the maximum
$\dtn$ possible (see Algorithm~\ref{k1L2_algorithm}).
\begin{algorithm}[!t]
\caption{Self-adaptive time-integration scheme $\timeintparam$1L2 }
\begin{algorithmic}[1]
  \Require $\vapf \in \complex$ where $\realpart(\vapf) \le 0$. Note: in our case, $|\realpart(\vapf)| \ge \rho( \vcvectc^{-1} \diff ( \diffv ))$ and $\imagpart ( \vapf ) \ge \rho ( \vcvectc^{-1} \conv )$
  \Ensure $\timeintparam$, $\dt$
\State \label{k1L2_algorithm_step1} Determine $\angle$, where $\vapfn = \vapf / \| \vapf \|$, {\ie~$\angle = \tan^{-1} ( \imagpart(\vapf) / |\realpart(\vapf)| )$.}
\State \label{k1L2_algorithm_step2} Find $\timeintparam = \Kopt ( \angle )$ and $\dtn = \Topt ( \angle )$ {using Eqs.(\ref{Topt_eq}) and~(\ref{Kopt_eq}).}
\State \label{k1L2_algorithm_step3} Advance on time the problem (\ref{oneleg2}) with $\dt = \dtn / \| \vapf \|$ and $\timeintparam$.
\end{algorithmic}
\label{k1L2_algorithm}
\end{algorithm}
Note that since the stability domain is always symmetric with respect
to the real axis we can restrict the analysis to the range $\angle \in
[ 0 , \pi/2 ]$. Thus, the only thing that remains is to determine the
exact form of the functions $\Kopt(\cdot)$ and $\Topt(\cdot)$
{(see step~\ref{k1L2_algorithm_step2} in
  Algorithm~\ref{k1L2_algorithm})}. This was numerically found in
Ref.\cite{TRI08-JCP2} (Figure~\ref{KTopt} displays the form of these
functions). In order to provide an easy-to-implement method, they can
be approximated by means of piece-wise polynomial functions. For
$\Topt ( \angle )$, the following approximation was proposed
\begin{figure}[!t]
\centerline{
\includegraphics[width=63mm,angle=-90]{./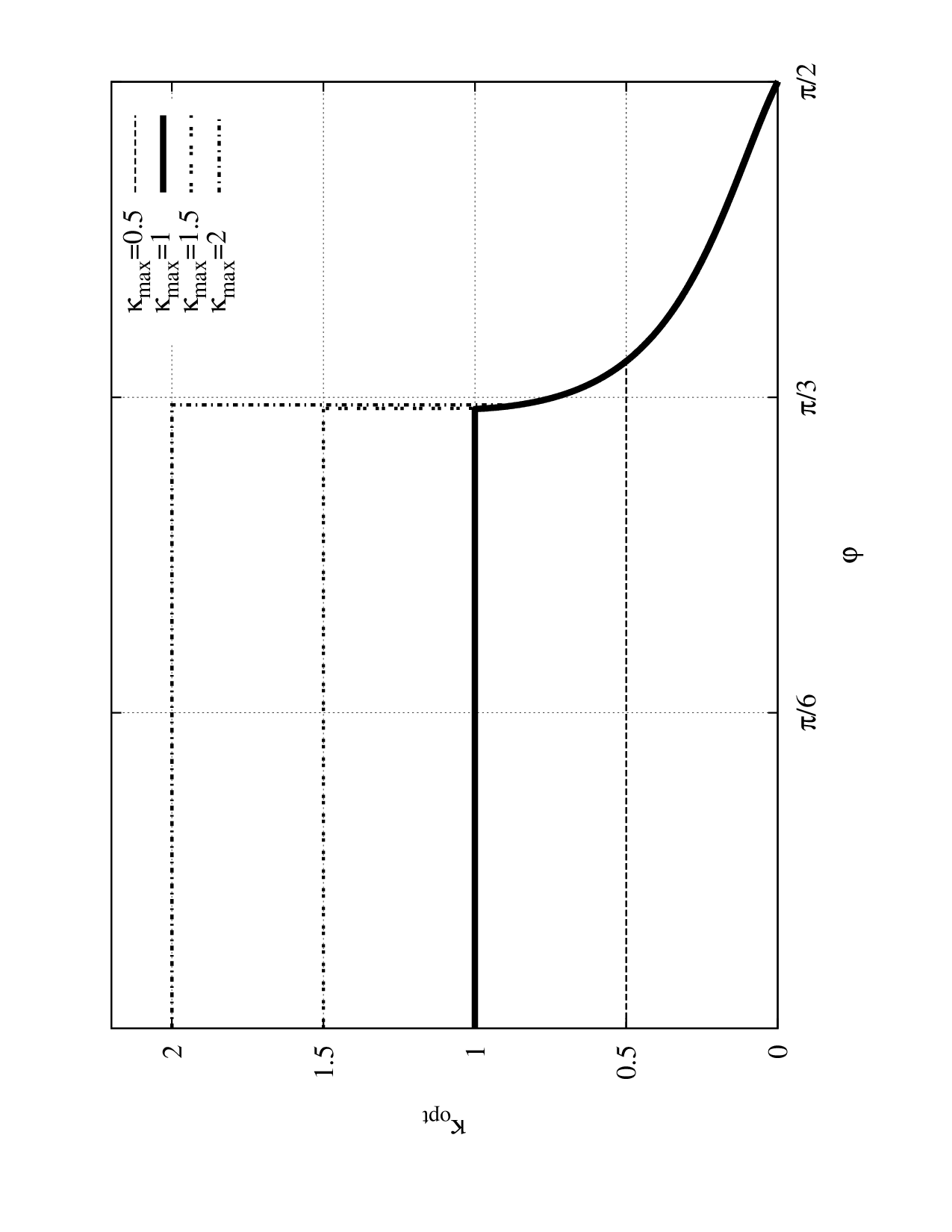}
}
\centerline{
\includegraphics[width=63mm,angle=-90]{./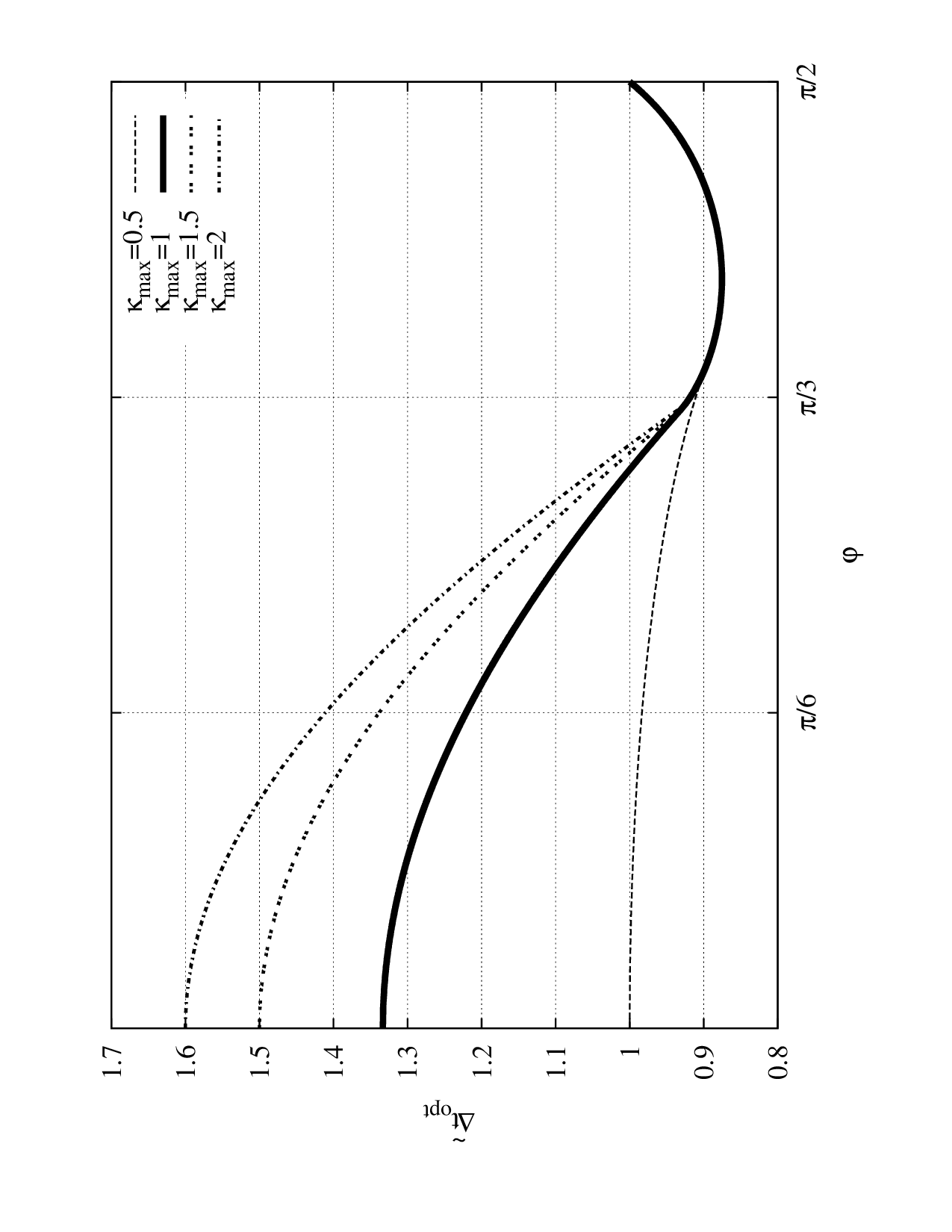}
}
\caption{Functions $\Kopt$ (top) and $\Topt$ (bottom) as a function of 
$\angle \in [ 0 , \pi/2]$ for different values of
$\timeintparam_{max}$ from $0.5$ to $2$. The chosen option is
$\timeintparam_{max} = 1$ (solid line). Note that for values
$\timeintparam_{max} \gtrsim 1$, $\Kopt (\angle)$ is not a continuous
function.}
\label{KTopt}
\end{figure}
\begin{equation} 
\label{Topt_eq}
\Topt ( \angle ) \approx
\left\{
\begin{array}{ll}
G(\angle,0\hphantom{_1},c_1,c_2,0,\angle_1,4/3,t_1)     & \hspace{3mm} \text{if} \hspace{5mm} 0 \le \angle < \angle_1 \\
G(\angle,c_3,c_4,c_5,\angle_1,\pi/2,t_1,1) & \hspace{3mm} \text{if} \hspace{5mm} \angle_1\le \angle \le \pi/2 
\end{array}
\right.
\end{equation}
\noindent where $\angle_1 = \tan^{-1} (164/99)$, $t_1=0.9302468$, and
the function $G$ is a piece-wise quartic interpolation of the form
\begin{equation}
G ( x,a,b,c,x_0,x_1,f_0,f_1) = ( a x^2 + b x + c ) Q \left( x , x_0 , x_1 \right) + L \left( x,x_0,x_1,f_0,f_1 \right) ,
\end{equation}
\noindent where $L(x,x_0,x_1,f_0,f_1) = f_0 +
(x-x_0)(f_1-f_0)/(x_1-x_0)$ is a piece-wise linear interpolation and
$Q(x,x_0,x_1) = (x-x_0) (x-x_1)$, respectively. In this way, we can
guarantee the continuity of the resulting expression of $\Topt
(\angle)$. Then, using least squares criterion, the set of constants
follows: $c_1=0.0647998$, $c_2=-0.386022$, $c_3=3.72945$,
$c_4=-9.38143$ and $c_5=7.06574$. Similarly, we propose to approximate
$\Kopt(\angle)$ as follows
\begin{equation}
\label{Kopt_eq}
\Kopt ( \angle ) \approx  
\left\{
\begin{array}{ll}
1  & \hspace{7mm} 0\hphantom{_1}  \le  \angle \le \angle_1 \\
G(\angle,c_{6\hphantom{0}},c_{7\hphantom{0}},c_{8\hphantom{0}},\angle_1, \angle_2 ,1,k_1) & \hspace{7mm} \angle_1 <\angle \le \angle_2 \\
G(\angle,c_{9\hphantom{0}},c_{10},c_{11},\angle_2, \angle_3,k_1,k_2) & \hspace{7mm} \angle_2 <\angle \le \angle_3 \\
G(\angle,c_{12},c_{13},c_{14},\angle_3, \pi/2,k_2,0) & \hspace{7mm} \angle_3 <\angle \le \pi/2 
\end{array}
\right. 
\end{equation}
\noindent where $\angle_2=\pi/3$, $\angle_3=(3/5)^2 \pi$, $k_1=0.73782212$ 
and $k_2=0.44660387$. Then, least square minimization leads to the
following values: $c_6=2403400$, $c_7=-5018490$, $c_8=2620140$,
$c_9=2945$, $c_{10}=-6665.76$, $c_{11}=3790.54$, $c_{12}=4.80513$,
$c_{13}=-16.9473$ and $c_{14}=15.0155$, respectively. The maximum
errors for $\Topt ( \angle )$ and $\Kopt ( \angle )$ are around $0.08
\%$ and $0.25 \%$, respectively.


\section{Construction of the matrix $|\Tcs \vcvectc^{-1} \Tcs\traspose |$}

\label{matrix_construction}

The proposed method (see Eqs.~\ref{AlgEigConv3} and~\ref{AlgEigDiff3})
relies on the construction of the following face-to-face matrix
\begin{equation}
|\Tcs \vcvectc^{-1} \Tcs\traspose | \in \real^{\vvlength \times \vvlength} ,
\end{equation}
\noindent where $\Tcs \in \real^{\vvlength \times \pvlength}$ is the
cell-to-face incidence matrix, which has two non-zero elements per
row: a $+1$ and a $-1$ corresponding to the cells adjacent to a face
{(see Eq.~\ref{Tsc_def})}, and $\vcvectc \in \real^{\pvlength
  \times \pvlength}$ is the diagonal matrix containing the
cell-centered volumes. Hence, taking the mesh with $4$~control volumes
and $8$~faces shown in Figure~\ref{mesh} (right), the $4^{th}$ row of
matrix $|\Tcs \vcvectc^{-1} \Tcs\traspose |$, \ie~corresponding to the
velocity $U_{4} = [ \velh ]_{4}$, reads
\begin{equation}
\label{abs_Tcs_iV_Tsc_def}
\left[ |\Tcs \vcvectc^{-1} \Tcs\traspose | \right]_{4} = \left( \frac{1}{V_{c2}} , 0, \frac{1}{V_{c1}} , \frac{1}{V_{c1}} + \frac{1}{V_{c2}} , 0 , \frac{1}{V_{c2}} , \frac{1}{V_{c1}} , 0 \right) ,
\end{equation}
\noindent where $V_{c1} = [ \vcvectc ]_{c1,c1}$ and $V_{c2} = [
  \vcvectc ]_{c2,c2}$ are the volumes of the two cells adjacent to the
face number~$4$. {For the sake of completeness, other relevant
  face-to-face matrices used in this paper, such as $-\Tcs
  \Tcs\traspose$, $| \Tcs \Tcs\traspose|$ and $\Tcs |\Tcs\traspose|$
  read
\begin{align}
\label{Tcs_Tsc_def}
\left[ - \hphantom{|}\Tcs \hphantom{|}\Tcs\traspose\hphantom{|} \right]_{4} &= \left( +1 , 0, +1 , -2 , 0 , -1 , -1 , 0 \right) , \\
\label{abs_Tcs_Tsc_def}
\left[ \hphantom{+} | \Tcs \hphantom{|}\Tcs\traspose | \right]_{4} &= \left( +1 , 0, +1 , +2 , 0 , +1 , +1 , 0 \right) , \\
\label{Tcs_absTsc_def}
\left[ \hphantom{+} \hphantom{|}\Tcs |\Tcs\traspose| \right]_{4} &= \left( -1 , 0, +1 , \hphantom{+}0 , 0 , -1 , +1 , 0 \right) .
\end{align}}
\noindent The rest of rows are constructed in the same manner.




\end{document}